\newtheorem{lemma}{Lemma}
\newtheorem{theorem}{Theorem}
\newtheorem{corollary}{Corollary}
\newtheorem{definition}{Definition}
\newtheorem{proposition}{Proposition} 
\newtheorem{remark}{Remark} 
\newcommand{\rmc}{\mathrm{RM}}
\newcommand{\nl}{\mathrm{nl}}
\newcommand{\ml}{\mathrm{ml}}
\newcommand{\hdist}{\mathrm{d}}
\newcommand{\wt}{\mathrm{wt}}
\newcommand{\crrm}{\rho}
\newcommand{\MAXVAL}{m}
\newcommand{\diag}{\mathrm{diag}}
\newcommand{\agl}{\mathrm{AGL}}
\title{The Covering Radius of the Third-Order Reed-Muller Code RM(3,7) is 20}
\author{
Jinjie Gao\footnote{School of Computer Science, Fudan University, Shanghai 200433, China. Email: jjgao18@fudan.edu.cn},
Haibin Kan
\footnote{Shanghai Key Laboratory of Intelligent Information Processing, School of Computer Science, Fudan University, Shanghai 200433, China;
Shanghai Engineering Research Center of Blockchain, Shanghai 200433, China;
Yiwu Research Institute of Fudan University, Yiwu City 322000, China. Email: hbkan@fudan.edu.cn}
%\footnote{Shanghai Key Laboratory of Intelligent Information Processing, School of Computer Science, Fudan University, Shanghai 200433, China. Email: hbkan@fudan.edu.cn},
Yuan Li\footnote{School of Computer Science, Fudan University, Shanghai 200433, China. Email: yuan\_li@fudan.edu.cn} and 
Qichun Wang\footnote{School of Computer Science and Technology, Nanjing Normal University, China. Email: qcwang@fudan.edu.cn}}
\begin{document}
\date{}
\maketitle

\abstract{
We prove the covering radius of the third-order Reed-Muller code
RM(3,7) is 20, which was previously known to be between 20 and 23 (inclusive).

The covering radius of RM(3, 7) is the maximum third-order nonlinearity among all 7-variable Boolean functions. It was known that there exist 7-variable Boolean functions with third-order nonlinearity 20. We prove the third-order nonlinearity cannot achieve 21. According to the classification of the quotient space of RM(6,6)/RM(3,6), we classify all 7-variable Boolean functions into 66 types. Firstly, we prove 62 types (among 66) cannot have third-order nonlinearity 21; Secondly, we prove function of the remaining 4 types can be transformed into a type (6, 10) function, if its third-order nonlinearity is 21; Finally, we transform type (6, 10) functions into a specific form, and prove the functions in that form cannot achieve third-order nonlinearity 21 (with the assistance of computers). 

By the way, we prove that the affine transformation group over any finite field can be generated by two elements.
}
\par \textbf{Keywords}: Reed-Muller code, covering radius, Boolean function, third-order nonlinearity, affine transformation group, generator

\section{Introduction}

Determining the covering radius of the Reed-Muller codes is a notoriously difficult task. The \emph{covering radius} of codes is defined as the smallest integer $\rho$ such that all vectors in the vector space are within Hamming distance $\rho$ from some codeword. Reed-Muller codes, denoted by $\mathrm{RM}(r,n)$, $0 \le r \le n$, is the set of all Boolean functions in $n$ variables with degree at most $r$. The covering radius of $\mathrm{RM}(r,n)$, denoted by $\crrm(r,n)$, is exactly the maximum $r$th-order nonlinearity among all $n$-variable Boolean functions.

When $r$ is close to $n$, it is well known that $\rho(n-1, n) = 1$ and $\rho(n-2, n) = 2$. McLoughlin proves \cite{Mcl79} that
\begin{equation}
  \crrm(n-3,n)=
  \begin{cases}
    n+2, & \text{when $n$ is even}.\\
    n+1, & \text{when $n$ is odd}.
  \end{cases}
\end{equation}

For even $n$, the covering radius of $\mathrm{RM}(1,n)$ is $2^{n-1} - 2^{\frac{n}{2}-1}$; Boolean functions whose nonlinearity achieves this bound are called \emph{bent} functions \cite{Carlet2021boolean}. 

For odd $n$, by concatenating two bent functions, we have  $\crrm(1, n) \ge 2^{n-1} - 2^{\frac{n-1}{2}}$, which is the \emph{bent concatenation} bound. Hou \cite{Hou97} proves that 
\[
\crrm(1, n) \le 2 \lfloor 2^{n-2} - 2^{\frac{n-4}{2}} \rfloor
\]
Schmidt \cite{schmidt2019} proves an asymptotic result $\crrm(1,n) = 2^{n-1} - 2^{\frac{n}{2}-1}(1+o(1))$.

For specific odd $n$, Berlekamp and Welch prove  $\crrm(1, 5) = 12$ \cite{BW72}. Mykkeltveit proves $\crrm(1, 7) = 56$ \cite{Myk80}, and Hou gives a simpler proof \cite{Hou96}. Kavut and Y{\"u}cel prove $\crrm(1,9) \ge 242$ by finding a 9-variable Boolean function with nonlinearity 242 among generalized rotation symmetric Boolean functions \cite{KY10}. We direct interested readers to Carlet's book \cite{Carlet2021boolean} for more results on first-order nonlinearities, and other related topics, which is an excellent treatment on this subject.

%Kavut, Maitra and Y{\"u}cel proves $\crrm(1,9) \ge 241$ by finding a specific 9-variable (rotation symmetric) Boolean function with nonlinearity 241 using a computer-assisted heuristic search \cite{KMY07}; Later, 

For second-order nonlinearity, Schatz proves $\crrm(2,6) = 18$ \cite{Schatz81}. Wang proves $\crrm(2, 7) = 40$ in \cite{Wang19}. When $n \ge 8$, the exact value of $\crrm(2, n)$ is unknown.

For third-order nonlinearity, Hou \cite{Hou93} proves $20 \le \crrm(3, 7) \le 23$.
Wang \emph{et al.} \cite{LWS19, WTP18} prove that the covering radius of $\mathrm{RM}(3,7)$ in $\mathrm{RM}(4, 7)$ is 20 , the covering radius of $\mathrm{RM}(3,7)$ in $\mathrm{RM}(5, 7)$ is 20.

For higher order nonlinearity, Dougherty, Mauldin and Tiefenbruck \cite{DMT21} prove that the covering radius of $\mathrm{RM}(4, 8)$ in $\mathrm{RM}(5, 8)$ is 26 and the covering radius of $\mathrm{RM}(5, 9)$ in $\mathrm{RM}(6, 9)$ is between 28 and 32. We would like to point out that almost all the aforementioned concrete results rely on computers to some extent.

Using probabilistic methods or counting argument, one can easily prove
$
\crrm(r, n) \ge 2^{n-1} - 2^{\frac{n-1}{2}} \sqrt{{n \choose \le r}}.
$
By deriving bounds on character sums, Carlet and Mesnager \cite{CM06} prove that 
$
\crrm(r,n) \le 2^{n-1}-\frac{\sqrt{15}}{2}(1+\sqrt{2})^{r-2} \cdot 2^{\frac{n}{2}} + O(n^{r-2}).
$ In a recent work \cite{MO2021}, Mesnager and Oblaukhov provide a classification of the codewords of weights 16 and 18 for $\rmc(n-3, n)$, which could help improve the upper bounds of $\rho(r, n)$. 

In the computational complexity world, exhibiting an \emph{explicit} function with large high-order nonlinearity is an outstanding open problem. For example, nobody can prove there is a function $f$ in NP such that $\log_2 n$-order nonlinearity is at least $(1-\frac{1}{n})\cdot 2^{n-1}$ \cite{Viola2009}. 

For $\mathrm{mod}_3 : \{0,1\}^n \to \{0, 1\}$ function, where $\mathrm{mod}_3(x) = 1$ if and only if the Hamming weight of $x$ is a multiple of 3, Smolensky \cite{Smolensky1987algebraic} proves that the $r$th-order nonlinearity of $\mathrm{mod}_3$ is at least $\frac{1}{3} \cdot 2^{n-1}$, where $r = \epsilon \sqrt{n}$ for some absolute constant $\epsilon > 0$. 
For the \emph{generalized inner product} function 
\[
\mathrm{GIP}_k(x_1, x_2, \ldots, x_n) = \prod_{i=1}^k x_i + \prod_{i=k+1}^{2k} x_i + \ldots + \prod_{i=n-k+1}^n x_i,
\]
Babai, Nisan and Szegedy prove that $\nl_r(\mathrm{GIP}_{r+1}) \ge 2^{n-1}(1-\exp(-\Omega(\frac{n}{r4^r}))$ for any $r$ and $n$ \cite{BNS92}. Note that when $r = o(\log n)$, $\nl_r(\mathrm{GIP}_{r+1}) = 2^{n-1}(1 - o(1))$.

While revising this paper, we notice that Gillot and Langevin propose a descending algorithm to compute the classification of $\rmc(7,7)/\rmc(3,7)$ \cite{gillot2022classification1,gillot2022classification}. It turns out that, among 3486 non-equivalent cosets in $\rmc(7,7)/\rmc(3,7)$, the maximum third-order nonlinearity is 20, which proves $\crrm(3,7)=20$ in a different way. Recently, Gillot and Langevin prove that the covering radius of $\rmc(4,8)$ in $\rmc(6,8)$ is 26 by computing the classification of the $\rmc(6,8)/\rmc(4,8)$ \cite{gillot2022covering}.

\subsection{Our results}

\begin{theorem}
$\crrm(3, 7) = 20$	.
\end{theorem}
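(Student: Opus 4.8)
\emph{Proof plan.} I would prove the two bounds $\crrm(3,7)\ge 20$ and $\crrm(3,7)\le 20$ separately. The lower bound is immediate from the already known existence of $7$-variable Boolean functions whose third-order nonlinearity equals $20$, so the work is in the upper bound, i.e.\ in showing that no $7$-variable Boolean function has third-order nonlinearity $\ge 21$. The known bound $\crrm(3,7)\le 23$ means one must a priori exclude the values $21,22,23$, but a short argument reduces this to excluding the single value $21$: if $\nl_3(f)=d$ and $g\in\rmc(3,7)$ realizes $\hdist(f,g)=d$, then flipping $f$ at one input where $f$ and $g$ disagree produces $f'$ with $\hdist(f',g)=d-1$, so $\nl_3(f')\le d-1$, while flipping one input changes the distance to any fixed codeword by at most $1$, so $\nl_3(f')\ge d-1$. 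Hence the attained third-order nonlinearities form the interval $\{0,1,\dots,\crrm(3,7)\}$, and if $21$ is not attained then $\crrm(3,7)\le 20$, which together with the lower bound gives $\crrm(3,7)=20$.

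To exclude $\nl_3=21$ I would write an arbitrary $f\in\rmc(7,7)$ as $f(x_1,\dots,x_7)=g(x_1,\dots,x_6)\oplus x_7\,h(x_1,\dots,x_6)$. For a degree-$\le 3$ polynomial $p=p_0\oplus x_7 p_1$ with $\deg p_0\le 3$, $\deg p_1\le 2$, one has $\hdist(f,p)=\hdist(g,p_0)+\hdist(g\oplus h,\,p_0\oplus p_1)$, so $\nl_3(f)$ is a minimum governed, up to the degree-$\le 2$ coupling term $p_1$, by the coset of $h$ in $\rmc(6,6)/\rmc(3,6)$. Feeding in the classification of that quotient under the affine group sorts all $7$-variable functions into $66$ types. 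The main part of the proof is then a type-by-type analysis: for $62$ of the $66$ types I would argue that the constraints the type places on $h$ — and, after normalizing $g$ by substitutions $x_7\mapsto x_7\oplus\ell(x)$ with $\ell$ affine and by subtracting cubic terms, on $g$ as well — already force $\nl_3(f)\le 20$; and for each of the remaining $4$ types I would show that $\nl_3(f)=21$ would let $f$ be brought, by affine transformations and addition of degree-$\le 3$ terms, to the single type labelled $(6,10)$.

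It then remains to rule out type $(6,10)$, and here a computer search seems unavoidable. The plan is to normalize a hypothetical type-$(6,10)$ function of third-order nonlinearity $21$ as far as possible — pinning down $g$ and the low-degree data up to a few free parameters by further affine changes of variables and cubic corrections — and then exhaustively to certify that no function of the resulting parametric form has third-order nonlinearity $21$. Keeping this tractable requires enumerating orbit representatives under a large subgroup of $\agl(7,2)$ efficiently, and here the auxiliary result that $\agl$ over any finite field is generated by two explicit elements is handy, since it lets one traverse the group from a minimal generating set. I expect the two principal difficulties to be: making the elimination of the $62$ types go through by uniform, verifiable arguments rather than a pile of special cases; and, the real crux, designing the normalized form for type $(6,10)$ so that the final enumeration — over cubic corrections and affine transformations in seven variables, whose naive size is astronomically large — is small enough to run while still provably covering every case.
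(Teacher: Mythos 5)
Your plan follows essentially the same route as the paper: reduce to excluding $\nl_3(f)=21$ via the one-point-flip interval argument, decompose $f$ as a concatenation of two $6$-variable functions, use the $11$-class classification of $\rmc(6,6)/\rmc(3,6)$ to sort functions into $66$ types, eliminate $62$ of them by algebraic bounds, reduce the remaining types to type $(6,10)$, and finish that case by a normalized computer search that exploits the two-generator description of $\mathrm{AGL}$. The only caveat is that yours is a plan rather than a worked proof — the actual elimination of the $62$ types (via the bound $\nl_r(f)\le\nl_r(fn_i)+\ml_{r-1}(fn_j)$ and a parity argument for types $(i,i)$) and the design of the $(6,10)$ enumeration are exactly where the paper's technical content lies.
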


Determining the covering radius of third-order Reed-Muller code $\rmc(3, 7)$ is an important problem. For example, this number is closely related to the problem of the decoding $\rmc(3, 7)$ beyond the minimal distance. Besides that, determining the $\crrm(3, 7)$ would also improve other upper bounds on the covering radius of Reed-Muller codes.

Our proof is an exhaustive search \emph{in essence}, that is, by proving $\mathrm{nl}_3(f) > 20$ cannot hold for all $7$-variable Boolean functions, where $\mathrm{nl}_3(f)$ denotes the third-order nonlinearity for $f$. Changing the value of one point of $f$ results in changing $\nl_3(f)$ by at most one. Without loss of generality, it suffices to prove $\mathrm{nl}_3(f) = 21$ cannot hold.

Since there are $2^{2^7} = 2^{128}$ 7-variable Boolean functions, a brute-force search does not work; we would need to do it cleverly. We list the following major ingredients in our proof.
\begin{itemize}
\item A 7-variable Boolean function is the concatenation of two 6-variable Boolean functions, denoted by $f = f_1 \| f_2$.
\item We rely on a classification of $\mathrm{RM}(6,6) / \mathrm{RM}(3, 6)$ under affine equivalence, where there are 11 cosets that are explicitly described. We classify all 7-variable Boolean functions into $66$ types accordingly.
\item For 62 types (among 66), we prove they cannot have third-order nonlinearity 21.
\item For the remaining 4 types, we prove either their third-order nonlinearity is at most 20, or they are affine equivalent to a type (6, 10) function.
\item For type (6, 10) Boolean functions, we transform them into a specific form, while preserving its third-order nonlinearity. With the assistance of computers, we prove functions in the above-mentioned form cannot achieve third-order nonlinearity 21, and thus completes the proof of $\rho(3, 7) \le 20$. All the codes are available online \footnote{\url{https://github.com/jsliyuan/cr37}}.
\end{itemize}

Wang \emph{et al.} use similar techniques to prove that  $\crrm(2, 7) = 40$ \cite{Wang19}, and to prove the covering radius of $\rmc(3, 7)$ in $\rmc(4, 7)$ or $\rmc(5, 7)$ is 20 \cite{LWS19, WTP18}. We improve Wang's techniques to fully determine the covering radius of $\rmc(3, 7)$.

Using inequality $\crrm(r, n) \le \crrm(r-1, n-1) + \crrm(r, n-1)$, we improve the upper bound of $\crrm(3, n)$ and $\crrm(4, n)$ for $n = 8, 9, 10$.

\begin{corollary} $\crrm(3, 8) \le 60$, $\crrm(3, 9) \le 156$, 
$\crrm(3, 10) \le 372$, $\crrm(4, 8) \le 28$, $\crrm(4, 9) \le 88$, 
$\crrm(4, 10) \le 244$.
\end{corollary}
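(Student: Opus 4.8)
The plan is to feed the new value $\crrm(3,7)=20$ from Theorem~1 into the elementary ``Plotkin'' recursion $\crrm(r,n)\le\crrm(r,n-1)+\crrm(r-1,n-1)$ and propagate it upward, drawing the remaining inputs from the small‑case values recalled in the introduction. First I would record the recursion itself: recall the $(u,u+v)$ description $\rmc(r,n)=\{(u,\,u+v):u\in\rmc(r,n-1),\ v\in\rmc(r-1,n-1)\}$. Given any binary word of length $2^n$, split it as $(a,b)$ with $a,b$ of length $2^{n-1}$; choose $u\in\rmc(r,n-1)$ with $\hdist(a,u)\le\crrm(r,n-1)$, and then choose $v\in\rmc(r-1,n-1)$ with $\hdist(b+u,v)\le\crrm(r-1,n-1)$. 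Since $\hdist\bigl((a,b),(u,u+v)\bigr)=\hdist(a,u)+\hdist(b+u,v)$, the codeword $(u,u+v)\in\rmc(r,n)$ lies within $\crrm(r,n-1)+\crrm(r-1,n-1)$ of $(a,b)$, which is exactly the recursion.

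Next I would assemble the base values: $\crrm(1,7)=56$ (Mykkeltveit), $\crrm(2,7)=40$ (Wang), $\crrm(4,7)=8$ (McLoughlin's formula for odd $n$), and $\crrm(1,8)=2^{7}-2^{3}=120$ (existence of bent functions), together with $\crrm(3,7)=20$ from Theorem~1. Applying the recursion in the right order then yields the claimed bounds:
\begin{align*}
\crrm(2,8)&\le\crrm(2,7)+\crrm(1,7)=96,\\
\crrm(3,8)&\le\crrm(3,7)+\crrm(2,7)=60,\\
\crrm(4,8)&\le\crrm(4,7)+\crrm(3,7)=28,\\
\crrm(2,9)&\le\crrm(2,8)+\crrm(1,8)\le216,\\
\crrm(3,9)&\le\crrm(3,8)+\crrm(2,8)\le156,\\
\crrm(4,9)&\le\crrm(4,8)+\crrm(3,8)\le88,\\
\crrm(3,10)&\le\crrm(3,9)+\crrm(2,9)\le372,\\
\crrm(4,10)&\le\crrm(4,9)+\crrm(3,9)\le244;
\end{align*}
here the intermediate values $\crrm(2,8)\le96$ and $\crrm(2,9)\le216$ are needed only as stepping stones toward the $n=9,10$ lines.

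I do not expect a genuine obstacle in this argument: all the non‑routine content lives in Theorem~1, and what remains is bookkeeping in the right sequence. The one point requiring care is to make sure each step plugs in the \emph{sharpest} currently known input --- for instance $\crrm(1,8)=120$ is exact, while $\crrm(2,8)$ is not known exactly, so the recursive bound $96$ is the best available choice; consequently any future improvement on $\crrm(2,8)$, or on $\crrm(1,n)$ for larger even $n$, would immediately tighten the $n=9,10$ entries through the very same chain.
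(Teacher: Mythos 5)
Your proof is correct and follows essentially the same route as the paper: the recursion $\crrm(r,n)\le\crrm(r,n-1)+\crrm(r-1,n-1)$ applied in the same order with the same inputs, and all eight arithmetic steps check out. The only cosmetic differences are that you derive the recursion from the $(u,u+v)$ construction and re-derive the stepping stones $\crrm(2,8)\le 96$ and $\crrm(2,9)\le 216$ from $\crrm(1,7)=56$ and $\crrm(1,8)=120$, whereas the paper simply cites these second-order bounds from \cite{Wang19}.
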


What are the limitations of the techniques? Can we go beyond $\crrm(3, 7)$? Our proof heavily relies on the classification of $\rmc(6,6)/\rmc(3, 6)$. The number of affine inequivalent classes in $\rmc(r, n)/\rmc(s, n)$ is usually enormous unless $n$ is small \cite{Hou95, Hou2021number}. However, in some cases, we do have nice classifications, for example, there are only 12 non-equivalent classes in $\rmc(3,7)/\rmc(2, 7)$; 32 in $\rmc(3,8)/\rmc(2, 8)$, which are all explicitly described \cite{hou1996gl}. This raises hope for understanding the covering radius of Reed-Muller codes, at least for those specific cases.

In addition, we prove  $\rho(r,n) \le \max_{g \in \mathcal{B}_{n-1}}\{\nl_r(g)+\nl_{r-1}(g)\}$. Note that this bound is better than $\rho(r, n) \le \rho(r, n-1) + \rho(r-1, n-1)$. For most $r$ and $n$, it is likely that $\nl_r(f)$ and $\nl_{r-1}(f)$ cannot be maximum simultaneously. So we believe our bound can potentially be used to improve other upper bounds on $\rho(r, n)$.

By the way, we prove that the affine transformation group over any finite field can be generated by two explicitly-defined elements. This result is used in our verification algorithm for proving $\rho(3, 7) \le 20$.

\section{Preliminaries}

Let $\mathbb{N} := \{0, 1, \ldots, \}$. Let $\mathbb{F}_2$ be the finite field of size 2. An $n$-variable Boolean function is a mapping from $\mathbb{F}_2^n$ to $\mathbb{F}_2$. 
Denote by $\mathcal{B}_n$ the set of all $n$-variable Boolean functions. 
Any $n$-variable Boolean function can be written as a unique multilinear polynomial in $\mathbb{F}_2[x_1, x_2, \ldots, x_n]$, say,
\[
f(x_1, \ldots, x_n) = \sum_{S \subseteq [n]} c_S \prod_{i \in S} x_i,
\]
which is called the \emph{algebraic normal form} (ANF). The \emph{algebraic degree} of $f$, denoted by $\mathrm{deg}(f)$, is the number of variables in the highest order term with nonzero coefficient. Denote by $\mathcal{H}_n^{(r)}$ the set of all $n$-variable homogeneous Boolean functions of degree $r$.

A Boolean function is \emph{affine} if its algebraic degree is at most 1. A Boolean function $f$ is \emph{linear} if $f$ is affine and $f(0, \ldots, 0) = 0$.

The \emph{Hamming weight} of a vector $x \in \mathbb{F}_2^n$, denoted by $\wt(x)$, is the number of nonzero coordinates. The \emph{weight} of a Boolean function $f$, denoted by $\wt(f)$, is the cardinality of the set $\{ x \in \mathbb{F}_2^n : f(x) = 1\}$. The \emph{distance} between two functions $f$ and $g$ is the cardinality of the set $\{ x \in \mathbb{F}_2^n : f(x) \not= g(x)\}$, denoted by $\mathrm{d}(f, g)$.

The \emph{nonlinearity} of an $n$-variable Boolean function $f$, denoted by $\mathrm{nl}(f)$, is the minimum distance between $f$ and an affine function, i.e.,
\[
\mathrm{nl}(f) := \min_{\deg(g) \le 1} \mathrm{d}(f, g).
\]
For $1 \le r \le n$, the $r$th-order nonlinearity of an $n$-variable Boolean function $f$, denoted by $\mathrm{nl}_r(f)$, is the minimum distance between $f$ and functions with degree at most $r$, i.e.,
\[
\mathrm{nl}_r(f) := \min_{\deg(g) \le r} \mathrm{d}(f, g).
\]

For integers $0 \le r \le n$, denote the $r$th-order Reed-Muller codes by $\mathrm{RM}(r, n)$, which has message length $\sum_{i \le r} {n \choose i}$ and codeword length $2^n$. Reed-Muller codes $\mathrm{RM}(r, n)$ can be viewed as the vector space of all $n$-variable Boolean functions with degree at most $r$. The \emph{covering radius} of $\mathrm{RM}(r, n)$ is
\[
\crrm(r, n) := \max_{f \in \mathcal{B}_n} \min_{g \in \mathrm{RM}(r,n)} \mathrm{d}(f,g),
\]
which is exactly $\max_{f \in \mathcal{B}_n} \mathrm{nl}_r(f)$.

Let $\mathrm{GL}(n) = \mathrm{GL}(n, \mathbb{F}_2)$ denote the \emph{general linear group} over $\mathbb{F}_2$, that is, the set of all $n \times n$ invertible matrices with the operation of matrix multiplication. Transformation $L:\mathbb{F}_2^n \to \mathbb{F}_2^n$ is called an \emph{affine transformation} if $L(x) = Ax+b$ for some $A \in \mathrm{GL}(n)$ and $b \in \mathbb{F}_2^n$. Denote by $f \circ L = f(L(x))$, and $f \circ L_1 \circ L_2 = (f \circ L_1) \circ L_2$. All affine transformations form a group, denoted by $\mathrm{AGL}(n) = \mathrm{AGL}(n, \mathbb{F}_2)$, with the operation  
\[
L_1 \circ L_2 := L_1(L_2(x)) = A_1 A_2x + A_1b_2 + b_1,
\]
where $L_1(x) = A_1 x + b_1$, $L_2(x) = A_2 x + b_2$.

Let $\mathbb{F}_q$ denote the finite field of size $q$. The general linear group $\mathrm{GL}(n)$ over the finite field $\mathbb{F}_q$ can be generated by two explicitly defined elements \cite{Wat89}. Based on the results by Waterhouse \cite{Wat89}, we prove the general affine group over any finite field can be generated by two elements in Section \ref{appendix_generators}.

Two $n$-variable Boolean functions $f, g$ are called \emph{affine equivalent} if there exists $L \in \mathrm{AGL}(n)$ such that $f \circ L = g$. Denote by $\rmc(r, n)/\rmc(s, n)$ the quotient space consisting of all cosets of $\rmc(s, n)$ in $\rmc(r, n)$, where $ s<r\le n$. $\mathrm{AGL}(n)$ acts on $\rmc(r, n)/\rmc(s, n)$ in a nature way. Cosets of $\rmc(s, n)$ in the same $\mathrm{AGL}(n)$-orbit of $\rmc(r, n)/\rmc(s, n)$  are said to be \emph{affine equivalent}.

Let $f_1, f_2$ be two $n$-variable Boolean functions. Denote the \emph{concatenation} of $f_1$ and $f_2$ by $f_1 \| f_2$ , i.e.,
\[
f_1 \| f_2 := (x_{n+1}+1)f_1 + x_{n+1}f_2.
\]
In other words,
\begin{equation*}
  (f_1 \| f_2)(x_1, \ldots, x_{n+1})=
  \begin{cases}
    f_1(x_1, \ldots, x_n), & \text{if $x_{n+1} = 0$}.\\
    f_2(x_1, \ldots, x_n), & \text{if $x_{n+1} = 1$}.
  \end{cases}
\end{equation*}

The following theorem is a classification of $\rmc(6,6)/\rmc(3,6)$. Hou has given a formula computing the number of non-equivalent cosets in $\rmc(r,n) / \rmc(s, n)$ for any $r, s, n$ \cite{Hou95}, and determines the number of non-equivalent cosets in $\rmc(6,6) / \rmc(3, 6)$ is 11. Langevin has a website containing the classifications of $\rmc(r, n)/\rmc(s, n)$ for all $s < r \le n \le 6$ \cite{Phi09}.

\begin{theorem} (Classification of $\rmc(6,6)/\rmc(3,6)$ \cite{Phi09, Mai91})
\label{thm:rmc3666_class}
Let
\begin{equation*} 
\begin{aligned}	
    & fn_0=0\\
	& fn_1=x_1x_2x_3x_4\\
	& fn_2=x_1x_2x_4x_5+x_1x_2x_3x_6\\
	& fn_3=x_2x_3x_4x_5+x_1x_3x_4x_6+x_1x_2x_5x_6\\
	& fn_4=x_1x_2x_3x_4x_5\\
	& fn_5=x_1x_2x_3x_4x_5+x_1x_2x_3x_6\\
	& fn_6=x_1x_2x_3x_4x_5+x_1x_3x_4x_6+x_1x_2x_5x_6 \\
	& fn_7=x_1x_2x_3x_4x_5x_6\\
	& fn_8=x_1x_2x_3x_4x_5x_6+x_1x_2x_3x_4\\
	& fn_9=x_1x_2x_3x_4x_5x_6+x_1x_2x_4x_5+x_1x_2x_3x_6\\
	& fn_{10}=x_1x_2x_3x_4x_5x_6+x_2x_3x_4x_5+x_1x_3x_4x_6+x_1x_2x_5x_6
\end{aligned}
\end{equation*}
be in $\rmc(6, 6)$. Then $fn_i + \rmc(3, 6)$, $i = 0, 1, \ldots, 10$, are representatives of the $\mathrm{AGL}(6)$-orbits in $\rmc(6,6) / \rmc(3, 6)$.
\end{theorem}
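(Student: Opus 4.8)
The plan is to reduce the classification to the $\mathrm{GL}(6)$-classification of homogeneous quartic and quintic forms in six variables, together with a short bookkeeping argument showing that nothing essentially new appears in the higher strata; a direct machine enumeration over the $2^{22}$ cosets then serves as an independent check. I would first identify $\rmc(6,6)/\rmc(3,6)$ with the $22$-dimensional $\mathbb{F}_2$-space of polynomials supported only on monomials of degree $4$, $5$ and $6$, since each coset has a unique such representative; in particular there are exactly $2^{22}$ cosets. For $L(x)=Ax+b\in\mathrm{AGL}(6)$, composing a monomial $\prod_{i\in S}x_i$ with $L$ produces only terms of degree at most $|S|$, and its degree-$|S|$ part is $\prod_{i\in S}\lambda_i(x)$, where $\lambda_i$ is the linear part of the $i$-th coordinate of $Ax+b$. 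Hence the top homogeneous part of a coset transforms $\mathrm{GL}(6)$-equivariantly, so the algebraic degree of a coset --- the largest degree occurring in its reduced representative, which lies in $\{-\infty,4,5,6\}$ --- is an $\mathrm{AGL}(6)$-invariant; moreover the coefficient of $x_1\cdots x_6$ is the permanent of $A$, which over $\mathbb{F}_2$ equals $\det A=1$, so whether the degree-$6$ coefficient vanishes is invariant too. These two invariants already split the list into $\{fn_0\}$, $\{fn_1,fn_2,fn_3\}$, $\{fn_4,fn_5,fn_6\}$ and $\{fn_7,fn_8,fn_9,fn_{10}\}$.

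Next I would treat the four strata. The cosets of degree at most $4$ form $\rmc(4,6)/\rmc(3,6)$, on which $\mathrm{AGL}(6)$ acts through $\mathrm{GL}(6)$ on $\mathcal{H}_6^{(4)}$ (the additive, degree-lowering part of an affine map is absorbed into $\rmc(3,6)$); one classifies this $15$-dimensional module by hand into the four orbits $0$, $x_1x_2x_3x_4$, the reducible form $x_1x_2(x_3x_4+x_5x_6)$, and the form $fn_3=\sum_i\prod_{j\notin e_i}x_j$ attached to a perfect matching $e_1,e_2,e_3$ of $[6]$, corresponding to $fn_0,fn_1,fn_2,fn_3$. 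Convenient separating invariants here are $\dim\{v:\deg(f(x+v)+f(x))\le 2\}$, which equals $2$ for $fn_1$ and $0$ for $fn_2$ and $fn_3$, and whether the quartic has a linear factor, which holds for $fn_2$ but not $fn_3$; both descend to the coset. In the degree-$5$ stratum all of $fn_4,fn_5,fn_6$ have leading part $x_1x_2x_3x_4x_5$, and are separated by a finer invariant such as the distribution of algebraic degrees among the derivatives $f(x+v)+f(x)$. In the degree-$6$ stratum, $x_1\cdots x_6$ is the indicator of the point $\mathbf{1}$, and using translations such as $x_6\mapsto x_6+1$ one absorbs every degree-$5$ contribution, so each degree-$6$ coset equals $x_1\cdots x_6+g$ with $\deg g\le 4$; checking that the residual symmetries act on the quartic $g$ through $\mathrm{GL}(6)$ then yields exactly $fn_7,fn_8,fn_9,fn_{10}$ from the four quartic classes.

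To finish I would verify that the eleven listed cosets are pairwise inequivalent --- which the invariants above accomplish, or alternatively one computes the orbit sizes $|\mathrm{AGL}(6)|/|\mathrm{Stab}|$ and checks that these eleven integers are distinct --- and that every coset reduces to one of them. The cleanest rigorous route to completeness is Hou's mass-formula count that $\rmc(6,6)/\rmc(3,6)$ has exactly $11$ orbits \cite{Hou95}, so that eleven pairwise-inequivalent representatives must form a complete set; a self-contained alternative is a breadth-first orbit enumeration starting from the eleven representatives under a generating set of $\mathrm{AGL}(6)$, confirming that the orbit sizes sum to $2^{22}$ and agree with Langevin's tables \cite{Phi09}. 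I expect the main obstacle to be the absorption arguments showing that no essentially new class appears in the degree-$5$ and degree-$6$ strata, together with the by-hand $\mathrm{GL}(6)$-classification of quartic forms in six variables: these are elementary but fiddly, which is exactly why the machine check earns its keep.
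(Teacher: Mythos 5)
Your plan is sound, and its rigorous backbone is in fact identical to what the paper does: this theorem is imported from \cite{Phi09, Mai91}, and the paper's only in-house justification is exactly your ``cleanest route'' --- Hou's formula \cite{Hou95} gives that $\rmc(6,6)/\rmc(3,6)$ has precisely $11$ orbits, and the eleven listed representatives are pairwise inequivalent because they are separated by the $\mathrm{AGL}$-invariants $\deg$ and $\nl_3$ (computed by machine, Table \ref{table:fn10_nl3}). Where you differ is in layering on a self-contained structural classification: the degree stratification, the $\mathrm{GL}(6)$-classification of quartic forms, and the absorption of degree-$5$ terms by translations in the degree-$6$ stratum. That route is attractive because it explains \emph{why} the list looks the way it does (in particular why $fn_7,\ldots,fn_{10}$ are exactly $x_1\cdots x_6$ plus the four quartic classes), but as written it is incomplete in precisely the places you flag: you never exhibit the invariant separating $fn_4,fn_5,fn_6$ nor show the degree-$5$ stratum has only three classes, and the claim that the residual symmetries of the normal form $x_1\cdots x_6+g$ act on $g$ ``through $\mathrm{GL}(6)$'' needs an actual stabilizer computation (one must rule out extra identifications coming from affine maps that re-zero the degree-$5$ part after a translation). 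Since you fall back on Hou's count for completeness anyway, the honest remaining obligation is only pairwise inequivalence of the eleven representatives; your degree/derivative/linear-factor invariants handle the degree-$\le 4$ strata by hand, but for degrees $5$ and $6$ you would in practice end up doing what the paper does (a machine computation of $\nl_3$, or of the orbit lengths, which are indeed pairwise distinct per Table \ref{table:orbit_length}). So: correct plan, same load-bearing argument as the paper, plus an optional structural narrative that would need nontrivial extra work to stand on its own.
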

\begin{remark} Note that
\begin{eqnarray*}
fn_9 & = & x_1x_2x_3x_4x_5x_6 + fn_2, \\
fn_{10} & = & x_1x_2x_3x_4x_5x_6 + fn_3.
\end{eqnarray*}
We will use this fact later.
\end{remark}

From the Table \ref{table:fn10_nl3} in Appendix \ref{appendix_properties_fni}, one can see different $fn_i$ and $fn_j$ have different algebraic degrees or third-order nonlinearities. Since affine transformation preserves degree and third-order nonlinearity, cosets $fn_i + \rmc(3, 6)$, $i = 0, 1, \ldots, 10$, are not affine equivalent.

 In~\cite{LWS19}, the authors prove a recursive relation of $\mathrm{nl}_r(f)$. Using this recursive relation, one can compute the $r$th-order nonlinearity more efficiently, compared with the straightforward approach that computes the Hamming distance with all codewords in $\rmc(r, n)$.
 
\begin{lemma} (\cite{LWS19})
	Let $f = f_1 \| f_2$ be an $n$-variable Boolean function. We have
\begin{equation*} 
\mathrm{nl}_r(f)=\min_{g\in \mathcal{H}_{n-1}^{(r)} \cup \{0\}}\{\mathrm{nl}_{r-1}(f_1+g)+\mathrm{nl}_{r-1}(f_2+g)\} ,
\end{equation*}
where $\mathcal{H}_{n-1}^{(r)}$ denotes the set of all $(n-1)$-variable homogeneous Boolean functions of degree $r$. 
\end{lemma}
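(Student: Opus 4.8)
The plan is to unfold the definition $\mathrm{nl}_r(f)=\min_{\deg h\le r}\mathrm{d}(f,h)$ and match the concatenation structure of $f$ against that of the approximating codeword $h$. I would write $f=f_1\|f_2$ with respect to the last variable $x_n$, so that $f_1,f_2\in\mathcal{B}_{n-1}$ are the restrictions of $f$ to $x_n=0$ and $x_n=1$. Every $h\in\mathrm{RM}(r,n)$ likewise has a unique decomposition $h=h_1\|h_2$ with $h_1,h_2\in\mathcal{B}_{n-1}$, and because the two halves of the truth tables of $f$ and $h$ are compared independently, $\mathrm{d}(f,h)=\mathrm{d}(f_1,h_1)+\mathrm{d}(f_2,h_2)$.

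Next I would determine exactly which pairs $(h_1,h_2)$ arise as the halves of a function of degree at most $r$. Since $h=h_1+x_n(h_1+h_2)$ and neither $h_1$ nor $h_1+h_2$ involves $x_n$, we have $\deg h\le r$ if and only if $\deg h_1\le r$ and $\deg(h_1+h_2)\le r-1$. Writing $h_i=g_i+q_i$ with $g_i$ the homogeneous part of $h_i$ of degree $r$ and $\deg q_i\le r-1$, the condition $\deg(h_1+h_2)\le r-1$ is equivalent to $g_1=g_2$; call this common top part $g\in\mathcal{H}_{n-1}^{(r)}\cup\{0\}$ (the value $g=0$ covering the case $\deg h_1\le r-1$). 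Conversely, for any $g\in\mathcal{H}_{n-1}^{(r)}\cup\{0\}$ and any $q_1,q_2\in\mathrm{RM}(r-1,n-1)$, the function $(g+q_1)\|(g+q_2)$ lies in $\mathrm{RM}(r,n)$, because its two halves sum to $q_1+q_2$, which has degree at most $r-1$.

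With $h$ parametrized by the triple $(g,q_1,q_2)$, the minimization separates: for each fixed $g$ the choices of $q_1$ and $q_2$ are independent, and $\min_{q_1\in\mathrm{RM}(r-1,n-1)}\mathrm{d}(f_1,g+q_1)=\min_{q_1}\wt(f_1+g+q_1)=\mathrm{nl}_{r-1}(f_1+g)$, and symmetrically for the $f_2$ term. Taking the minimum over $g$ then yields exactly $\mathrm{nl}_r(f)=\min_{g\in\mathcal{H}_{n-1}^{(r)}\cup\{0\}}\{\mathrm{nl}_{r-1}(f_1+g)+\mathrm{nl}_{r-1}(f_2+g)\}$, as claimed.

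I do not expect a genuine obstacle; the proof is essentially this bookkeeping. The one point that needs care is the middle step: verifying that the degree-$r$ homogeneous parts of $h_1$ and $h_2$ must coincide, and that folding the lower-degree parts of $h_1$ and $h_2$ into $q_1$ and $q_2$ loses nothing, so that it genuinely suffices to let $g$ range over $\mathcal{H}_{n-1}^{(r)}\cup\{0\}$ rather than over all of $\mathrm{RM}(r,n-1)$. It is also worth checking the degenerate endpoints — $r=n$, where $\mathcal{H}_{n-1}^{(r)}=\emptyset$ and both sides vanish, and $r=1$, where the recursion should reduce to the familiar formula for ordinary nonlinearity — to confirm the statement is correctly normalized.
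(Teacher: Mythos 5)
Your proposal is correct: the decomposition $\mathrm{d}(f,h)=\mathrm{d}(f_1,h_1)+\mathrm{d}(f_2,h_2)$ together with the observation that $h=h_1\|h_2$ has degree at most $r$ exactly when $h_1$ and $h_2$ share the same degree-$r$ homogeneous part is precisely the intended argument (the paper cites this lemma from \cite{LWS19} without reproving it, but the identical bookkeeping appears in its proof of Lemma \ref{lem:covering_condition}). No gaps; the separation of the minimization over $q_1$ and $q_2$ for fixed $g$ is the whole content.
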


In order to compute $r$th-order nonlinearity of $f = f_1 \| f_2$, it suffices to compute $(r-1)$th-order nonlinearities of $f_i + g$, where $i\in \{1,2\}$ and $g$ ranges over all homogeneous degree-$r$ functions. This motivates the following definition.

 \begin{definition} \label{def:Ff} (\cite{WTP18}) Let $2 \le r \le n$. Given $f\in \mathcal{B}_n$, we denote by $\mathcal{F}_f^{(r)}$ the map from $\mathbb{N}$ to the power set of $\mathcal{B}_n$ as follows
\begin{equation*}
\mathcal{F}_f^{(r)}(k) := \{g\in \mathcal{H}_{n}^{(r)} \cup \{0\} \mid  \mathrm{nl}_{r-1}(f+g)=k\} ,
\end{equation*}
where $\mathcal{H}_{n}^{(r)}$ denotes the set of all $n$-variable homogeneous Boolean functions of degree $r$.
\end{definition}

We need some basic properties of $\mathcal{F}_f^{(r)}$.

\begin{proposition} 
\label{prop:Ff_basic}
Let $f$ be an $n$-variable Boolean function.
\begin{itemize}
\item For any $L \in \mathrm{AGL}(n)$, any $g \in \rmc(r, n)$ where $2\le r\le n$, and for any $k \in \mathbb{N}$,
\[
|\mathcal{F}_f^{(r)}(k)| = |\mathcal{F}_{f \circ L + g}^{(r)}(k)|
\]
Moreover,
\[
\mathcal{F}_{f \circ L + g}^{(r)}(k) = \{ T_r(h \circ L + g) : h \in \mathcal{F}_f^{(r)}(k)\},
\]
where operator $T_r : \mathcal{B}_n \to \mathcal{H}_n^{(r)} \cup \{0\}$ projects a function to its homogeneous degree-$r$ part.
\item For any $b \in \mathbb{F}_2^n$, and for any $k \in \mathbb{N}$,
\[
\mathcal{F}_{f(x)}^{(r)}(k) = \mathcal{F}_{f(x + b)}^{(r)}(k).
\]	
\end{itemize} 
\end{proposition}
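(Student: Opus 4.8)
\textbf{Proof plan for Proposition \ref{prop:Ff_basic}.}
The plan is to reduce everything to two elementary invariance properties of $\mathrm{nl}_{r-1}$, together with the observation that $\mathcal{H}_n^{(r)} \cup \{0\}$ is a system of coset representatives for $\mathrm{RM}(r-1,n)$ inside $\mathrm{RM}(r,n)$, the canonical representative of the coset of a function $\phi$ of degree $\le r$ being $T_r(\phi)$, since $\phi + T_r(\phi) \in \mathrm{RM}(r-1,n)$. The two invariance properties are: (i) $\mathrm{nl}_{r-1}(\phi + q) = \mathrm{nl}_{r-1}(\phi)$ for every $q \in \mathrm{RM}(r-1,n)$, which holds because replacing the approximating function by its translate by $q$ is a bijection of $\mathrm{RM}(r-1,n)$ that leaves Hamming distances unchanged; and (ii) $\mathrm{nl}_{r-1}(\phi \circ M) = \mathrm{nl}_{r-1}(\phi)$ for every $M \in \mathrm{AGL}(n)$, which holds because $q \mapsto q \circ M$ permutes $\mathrm{RM}(r-1,n)$ and $M$, being a bijection of $\mathbb{F}_2^n$, preserves Hamming distance.

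For the first bullet, fix $L \in \mathrm{AGL}(n)$, $g \in \mathrm{RM}(r,n)$, and $h' \in \mathcal{H}_n^{(r)} \cup \{0\}$. Applying (ii) with $M = L^{-1}$ (and using that precomposition distributes over addition and that $f \circ L \circ L^{-1} = f$) and then (i), I get
\[
\mathrm{nl}_{r-1}\bigl((f \circ L + g) + h'\bigr) = \mathrm{nl}_{r-1}\bigl(f + g \circ L^{-1} + h' \circ L^{-1}\bigr) = \mathrm{nl}_{r-1}(f + h),
\]
where $h := T_r(g \circ L^{-1} + h' \circ L^{-1}) \in \mathcal{H}_n^{(r)} \cup \{0\}$ and the last step is again (i), since $g \circ L^{-1} + h' \circ L^{-1}$ has degree $\le r$ and hence differs from $h$ by an element of $\mathrm{RM}(r-1,n)$. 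Consequently $h' \in \mathcal{F}_{f \circ L + g}^{(r)}(k)$ if and only if $h \in \mathcal{F}_f^{(r)}(k)$.

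It then remains to identify the bookkeeping map $\Psi\colon h' \mapsto T_r(g \circ L^{-1} + h' \circ L^{-1})$ on $\mathcal{H}_n^{(r)} \cup \{0\}$ with the inverse of $\Phi\colon h \mapsto T_r(h \circ L + g)$. I will check $\Phi \circ \Psi = \mathrm{id}$ directly: writing $\Psi(h') = g \circ L^{-1} + h' \circ L^{-1} + q$ with $q \in \mathrm{RM}(r-1,n)$, composition with $L$ gives $\Psi(h') \circ L = g + h' + q \circ L$, whence $\Psi(h') \circ L + g = h' + q \circ L$ (using $g + g = 0$ over $\mathbb{F}_2$); since $q \circ L$ has degree $\le r-1$ while $h'$ is homogeneous of degree $r$ (or is $0$), this yields $T_r(\Psi(h') \circ L + g) = h'$. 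The other composition $\Psi \circ \Phi = \mathrm{id}$ is verified by the same computation. Hence $\Phi$ is a bijection of $\mathcal{H}_n^{(r)} \cup \{0\}$, and combining this with the previous paragraph gives $\mathcal{F}_{f \circ L + g}^{(r)}(k) = \Phi\bigl(\mathcal{F}_f^{(r)}(k)\bigr) = \{\, T_r(h \circ L + g) : h \in \mathcal{F}_f^{(r)}(k)\,\}$ and $|\mathcal{F}_{f \circ L + g}^{(r)}(k)| = |\mathcal{F}_f^{(r)}(k)|$.

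The second bullet is the special case $L(x) = x + b$, $g = 0$: for a homogeneous $h$ of degree $r$, the substitution $x \mapsto x+b$ does not change the degree-$r$ part, so $T_r(h(x+b)) = h(x)$, i.e.\ $\Phi$ is the identity here and $\mathcal{F}_{f(x+b)}^{(r)}(k) = \mathcal{F}_{f(x)}^{(r)}(k)$. I expect no genuine obstacle in this argument; the only point demanding care is the claim that $\Psi$ and $\Phi$ are mutually inverse bijections of $\mathcal{H}_n^{(r)} \cup \{0\}$, i.e.\ that taking the homogeneous degree-$r$ part interacts correctly with composing by affine maps and with adding elements of $\mathrm{RM}(r,n)$, which is the routine ANF bookkeeping sketched above.
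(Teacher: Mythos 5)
Your proof is correct, and it rests on the same two invariance facts the paper uses (invariance of $\mathrm{nl}_{r-1}$ under affine substitution and under addition of elements of $\mathrm{RM}(r-1,n)$, plus the observation that $T_r$ picks out coset representatives of $\mathrm{RM}(r-1,n)$ in $\mathrm{RM}(r,n)$). The only organizational difference is that the paper declares the first bullet ``trivial'' and proves the second bullet directly by a two-inclusion/symmetry argument, whereas you prove the first bullet in full by exhibiting the explicit bijection $\Phi\colon h \mapsto T_r(h\circ L+g)$ together with its inverse $\Psi$, and then obtain the second bullet as the special case $L(x)=x+b$, $g=0$ (using $T_r(h(x+b))=h(x)$ for homogeneous $h$); this is a clean unification and fills in exactly the detail the paper omits.
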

\begin{proof}
The first part is a bit trivial. So we skip the proof.

For the second part, let us prove $\mathcal{F}_{f(x)}^{(r)}(k) \subseteq \mathcal{F}_{f(x + b)}^{(r)}(k)$ first. By definition, if $h \in \mathcal{F}_{f(x)}^{(r)}(k)$, then $\nl_{r-1}(f+h) = k$, where $h \in \mathcal{H}_{n}^{(r)} \cup \{0\}$. Since affine transformation preserves $(r-1)$th-order nonlinearity, we have $\nl_{r-1}(f(x+b) + h(x+b)) = k$. Since $h$ is a homogeneous function of degree $r$ or 0, we have $h(x+b) = h(x) + g_2$, where $\deg(g_2) \le r-1$. Thus, 
\begin{eqnarray*}
k & = & \nl_{r-1}(f(x+b) + h(x+b)) \\
& = & \nl_{r-1}(f(x + b) + h(x) + g_2) \\
& = & \nl_{r-1}(f(x + b) + h(x)).
\end{eqnarray*}
 So $h \in \mathcal{F}_{f(x + b)}^{(r)}(k)$.
	
	Replacing $f(x)$ with $f(x + b)$, we have $\mathcal{F}_{f(x + b)}^{(r)}(k) \subseteq \mathcal{F}_{f(x + b + b)}^{(r)}(k) = \mathcal{F}_{f(x)}^{(r)}(k)$. Thus, $\mathcal{F}_{f(x)}^{(r)}(k) = \mathcal{F}_{f(x + b)}^{(r)}(k)$.
\end{proof}

\section{Covering Radius of $\rmc(3, 7)$}

According to the classification of $\rmc(n-1,n-1)/\rmc(r,n-1)$, we classify all $n$-variable Boolean functions into different types to study their $r$th-order nonlinearity.

\begin{definition}
\label{defoftype} Let $fn_0 + \rmc(r, n-1), fn_1 + \rmc(r, n-1), \ldots, fn_{\ell-1}+ \rmc(r, n-1)$ enumerate all non-equivalent cosets in the quotient space  $\rmc(n-1,n-1)/\rmc(r,n-1)$. Let $f = f_1 \| f_2$ be an $n$-variable Boolean function. If $f_1+ \rmc(r, n-1)$ is affine equivalent to $fn_i + \rmc(r, n-1)$, and $f_2+ \rmc(r, n-1)$ is affine equivalent to $fn_j + \rmc(r, n-1)$, where $0\le i,j \le \ell-1$, say $f$ is of \emph{type $(i, j)$} when $i \le j$, of \emph{type $(j, i)$} when $i > j$.
\end{definition}

%\begin{definition}
%\label{defoftype}
%	Let $f = f_1 \| f_2$ be an $n$-variable Boolean function. Denote by \emph{$l$} the number of non-equivalent cosets of the quotient space $\rmc(n-1,n-1)/\rmc(r,n-1)$. Let $fn_i$ denote the $i$th coset in the classification, where $0\le i<l$. If $f_1$ is affine equivalent to $fn_i + \rmc(r, n-1)$, and $f_2 $ is affine equivalent to $fn_j + \rmc(r, n-1)$, where $0\le i,j< l$, say $f$ is of \emph{type $(i, j)$} when $i \le j$, of \emph{type $(j, i)$} when $i > j$.
%\end{definition}

In this way, we classify all $n$-variable Boolean functions into $\frac{\ell\cdot (\ell+1)}{2}$ types, assuming there are $\ell$ non-equivalent cosets in the quotient space $\rmc(n-1,n-1)/\rmc(r,n-1)$. Theorem \ref{thm:rmc3666_class} says the quotient space $\rmc(6,6) / \rmc(3, 6)$ contains only 11 non-equivalent cosets. Thus we classify all 7-variable Boolean functions into $66$ types accordingly.

The following lemma gives a necessary and sufficient condition for an $n$-variable Boolean function to achieve $r$th-order nonlinearity $t$. In the following sequel, we will use the necessary direction, that is, if the condition \eqref{equ:Ff1} is not satisfied for \emph{some} $h$, then $\nl_{r}(f) < t$.

\begin{lemma} \label{lem:covering_condition}
Let $2\le r\le n-1$.
Let $f = f_1 \| f_2$ be an $n$-variable Boolean function. For any $t \in \mathbb{N}$, $\nl_{r}(f) \ge t$ if and only if
\begin{equation}
\label{equ:Ff1}
\mathcal{F}_{f_1}^{(r)}(h)\subseteq 	\bigcup_{k\geq t-h}\mathcal{F}_{f_2}^{(r)}(k)
\end{equation}
for any $h$.
\end{lemma}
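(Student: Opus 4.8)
The plan is to unwind the recursive characterization of $\nl_r$ from the previous lemma and track exactly which homogeneous degree-$r$ perturbations witness the minimum. Recall that Lemma~(\cite{LWS19}) gives $\nl_r(f) = \min_{g \in \mathcal{H}_{n-1}^{(r)} \cup \{0\}} \{\nl_{r-1}(f_1 + g) + \nl_{r-1}(f_2 + g)\}$. So $\nl_r(f) \ge t$ holds if and only if for every $g \in \mathcal{H}_{n-1}^{(r)} \cup \{0\}$ we have $\nl_{r-1}(f_1 + g) + \nl_{r-1}(f_2 + g) \ge t$. The whole point of Definition~\ref{def:Ff} is that $\mathcal{F}_{f_1}^{(r)}$ and $\mathcal{F}_{f_2}^{(r)}$ partition $\mathcal{H}_{n-1}^{(r)} \cup \{0\}$ according to the value of $\nl_{r-1}(f_i + g)$; here one must be slightly careful that the ambient space in the lemma statement is $\mathcal{H}_{n-1}^{(r)}$ (an $(n-1)$-variable object, since $f_1, f_2 \in \mathcal{B}_{n-1}$) rather than $\mathcal{H}_n^{(r)}$, but the definition applies verbatim with $n$ replaced by $n-1$.

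First I would fix an arbitrary $h \in \mathbb{N}$ and argue the two directions separately. For the forward direction, suppose $\nl_r(f) \ge t$ and take any $g \in \mathcal{F}_{f_1}^{(r)}(h)$; by definition $\nl_{r-1}(f_1 + g) = h$, and $g$ lies in $\mathcal{H}_{n-1}^{(r)} \cup \{0\}$, so the recursive relation gives $h + \nl_{r-1}(f_2 + g) \ge t$, i.e.\ $\nl_{r-1}(f_2 + g) \ge t - h$. Writing $k := \nl_{r-1}(f_2 + g)$, we have $g \in \mathcal{F}_{f_2}^{(r)}(k)$ with $k \ge t - h$, so $g \in \bigcup_{k \ge t-h} \mathcal{F}_{f_2}^{(r)}(k)$. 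This establishes \eqref{equ:Ff1} for the chosen $h$, and since $h$ was arbitrary, for all $h$. For the converse, assume \eqref{equ:Ff1} holds for every $h$ and take any $g \in \mathcal{H}_{n-1}^{(r)} \cup \{0\}$; let $h := \nl_{r-1}(f_1 + g)$ so that $g \in \mathcal{F}_{f_1}^{(r)}(h)$. By hypothesis $g \in \bigcup_{k \ge t-h} \mathcal{F}_{f_2}^{(r)}(k)$, so there is some $k \ge t - h$ with $\nl_{r-1}(f_2 + g) = k$, whence $\nl_{r-1}(f_1 + g) + \nl_{r-1}(f_2 + g) = h + k \ge t$. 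Taking the minimum over $g$ and invoking the recursive relation yields $\nl_r(f) \ge t$.

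The proof is essentially a bookkeeping exercise, so I do not anticipate a genuine obstacle; the only point requiring mild care is making sure that every $g$ that appears is actually an element of $\mathcal{H}_{n-1}^{(r)} \cup \{0\}$ (so that the recursive relation and the definition of $\mathcal{F}$ both apply to it), and that the quantifier "for any $h$" in the statement ranges over all of $\mathbb{N}$ — including values of $h$ for which $\mathcal{F}_{f_1}^{(r)}(h)$ is empty, where the inclusion \eqref{equ:Ff1} is vacuous and imposes no constraint. One could also note, as a sanity check, that if $h > t$ then $t - h < 0$ and $\bigcup_{k \ge t-h} \mathcal{F}_{f_2}^{(r)}(k)$ is simply all of $\mathcal{H}_{n-1}^{(r)} \cup \{0\}$, so again the constraint is trivially satisfied; the meaningful constraints come from small $h$.
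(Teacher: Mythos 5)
Your proof is correct, but it takes a different (and shorter) route than the paper's. You treat the recursive relation $\nl_r(f)=\min_{g\in \mathcal{H}_{n-1}^{(r)}\cup\{0\}}\{\nl_{r-1}(f_1+g)+\nl_{r-1}(f_2+g)\}$ from \cite{LWS19} as the workhorse: since the sets $\mathcal{F}_{f_i}^{(r)}(\cdot)$ partition $\mathcal{H}_{n-1}^{(r)}\cup\{0\}$ by the value of $\nl_{r-1}(f_i+g)$, the condition ``every $g$ satisfies $\nl_{r-1}(f_1+g)+\nl_{r-1}(f_2+g)\ge t$'' translates verbatim into the inclusion \eqref{equ:Ff1}, and both directions become pure bookkeeping. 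The paper instead argues from first principles without invoking that lemma: in the ``only if'' direction it exhibits an explicit witness $(q_1+g)\,\|\,(q_2+g)$ of degree at most $r$ within distance less than $t$ of $f$, and in the ``if'' direction it decomposes an arbitrary $g=g_1\|g_2\in\rmc(r,n)$ via the common degree-$r$ part $g_3$ of $g_1$ and $g_2$ and bounds $\hdist(f,g)$ from below. In effect the paper re-derives the content of the recursive relation inline, which makes the proof self-contained and makes visible exactly which codeword achieves a small distance when \eqref{equ:Ff1} fails (a fact the later verification algorithms implicitly exploit); your version buys brevity and makes clear that the lemma is nothing more than a restatement of the recursion in the language of the $\mathcal{F}$-sets. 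Your side remarks are also sound: the quantifier over $h$ is harmless for empty $\mathcal{F}_{f_1}^{(r)}(h)$, and for $t-h\le 0$ the right-hand side of \eqref{equ:Ff1} is all of $\mathcal{H}_{n-1}^{(r)}\cup\{0\}$, so only small $h$ impose real constraints.
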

\begin{proof}
For the ``only if'' direction, assume for contradiction that \eqref{equ:Ff1} is not satisfied, we shall prove $\mathrm{nl}_{r}(f) < t$. If \eqref{equ:Ff1} is not satisfied, there exists $g \in \mathcal{H}_{n-1}^{(r)} \cup \{0\}$ such that 
\[
g \in \mathcal{F}_{f_1}^{(r)}(h) \setminus \bigcup_{k\geq t-h}\mathcal{F}_{f_2}^{(r)}(k).
\]
By the definition of $\mathcal{F}_{f_1}^{(r)}$ and $\mathcal{F}_{f_2}^{(r)}$, $\mathrm{nl}_{r-1}(f_1 + g) = h$ and $\mathrm{nl}_{r-1}(f_2 + g) < t-h$. In other words, there exist $(n-1)$-variable Boolean functions $q_1, q_2$ with degree at most $r-1$ such that $\mathrm{d}(f_1+g, q_1) = h$ and $\mathrm{d}(f_2+g,q_2) < t-h$. Observe that $\mathrm{d}(f_1, g+q_1) = \mathrm{d}(f_1+g, q_1) = h$ and $\mathrm{d}(f_2, g+q_2) = \mathrm{d}(f_2+g, q_2) < t-h$. Thus, 
\begin{eqnarray*}
& & \mathrm{d}(f_1 \| f_2, (q_1+g) \| (q_2 +g)) \\
& = & \mathrm{d}(f_1, g+q_1) + \mathrm{d}(f_2, g+q_2) \\
& < & h + (t-h) \\
& = & t.
\end{eqnarray*}
Since $\deg(g) \le r$ and $\deg(q_1), \deg(q_2) \le r-1$, we have $\deg((g+q_1) \| (g+q_2)) \le r$. Therefore, we have found an $n$-variable Boolean function $(q_1+g) \| (q_2 +g)$ of degree at most $r$, whose Hamming distance from $f$ is less than $t$, i.e., $\mathrm{nl}_r(f_1 \| f_2) < t$. Contradiction!

For the ``if '' direction, assuming \eqref{equ:Ff1} is satisfied, let us prove $\mathrm{nl}_r(f) \ge t$. Let $g$ be any $n$-variable Boolean function with degree at most $r$. We shall prove $\mathrm{d}(f, g) \ge t$.

Write $g = g_1 \| g_2$, where $g_1, g_2 \in \mathcal{B}_{n-1}$. Since $\deg(g) \le r$, we have $\deg(g_1), \deg(g_2) \le r$ and $g_1, g_2$ must share the same degree-$r$ terms, denoted by $g_3 \in \mathcal{H}_{n-1}^{(r)} \cup \{0\}$. By the definition of $g_3$, we have $\deg(g_1 + g_3), \deg(g_2 + g_3) \le r-1$.

Let $h = \hdist(f_1, g_1)$. Observe that $\deg(g_1 + g_3) \le r-1$ and $\hdist(f_1 + g_3, g_1 + g_3) = \hdist(f_1, g_1) = h$, which implies that $\nl_{r-1}(f_1 + g_3) \le h$. By the definition of $\mathcal{F}_{f_1}^{(r)}$, we have $g_3 \in \bigcup_{k \le h} \mathcal{F}_{f_1}^{(r)}(k)$. Since \eqref{equ:Ff1} is true, we have
\[
\bigcup_{k \le h} \mathcal{F}_{f_1}^{(r)}(k) \subseteq \bigcup_{k \ge t-h} \mathcal{F}_{f_2}^{(r)}(k).
\] 
Thus, $g_3 \in \bigcup_{k \ge t-h} \mathcal{F}_{f_2}^{(r)}(k)$. By the definition $\mathcal{F}_{f_2}^{(r)}$, $\mathrm{nl}_{r-1}(f_2 + g_3) \ge t - h$. Since $\deg(g_2 + g_3) \le r-1$, we have $\mathrm{d}(f_2 + g_3, g_2 + g_3) \ge \nl_{r-1}(f_2 + g_3) \ge t - h$, that is, $\mathrm{d}(f_2, g_2) \ge t - h$. Therefore, we have $\mathrm{d}(f, g) = \mathrm{d}(f_1, g_1) + \mathrm{d}(f_2, g_2) \ge h + t - h = t$.
\end{proof}

\begin{remark}
\label{re:interchange}
Note that $f_1\|f_2$ is affine equivalent to $f_2\|f_1$. So, $f_1$ and $f_2$ are interchangeable in \eqref{equ:Ff1}.
\end{remark}

The following proposition says both affine transformation and degree-$r$ shift (i.e., adding a function of degree at most r) preserve $r$th-order nonlinearity. The proof is a bit trivial, which is omitted.

\begin{proposition}
\label{prop:nl_invariant}
Let $f = f_1 \| f_2$ be an $n$-variable Boolean function. Then the following operations preserve the $r$th-order nonlinearity:
\begin{itemize}
	\item[(1)] Affine transformation, that is, $f(x) \mapsto f \circ L$ for some $L \in \mathrm{AGL}(n)$. As special cases, it also includes affine transformations on $x_1, \ldots, x_{n-1}$, or $x_n \mapsto x_n + g$, where $g \in \rmc(1, n-1)$.
	\item[(2)] Degree-$r$ shift, that is, $f \mapsto f + g$, where $g \in \rmc(r, n)$. It also includes the special case that, $f_1 \mapsto f_1 + g$ and $f_2 \mapsto f_2 + g$, where $g \in \rmc(r, n-1)$.
\end{itemize}
\end{proposition}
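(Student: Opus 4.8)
The plan is to observe that in both operations we are merely reindexing the set over which the defining minimum of $\nl_r$ is taken by a bijection of $\rmc(r,n)$, so the minimum is unchanged. Throughout I use $\nl_r(f)=\min_{\deg g\le r}\hdist(f,g)$ and the fact that we work over $\mathbb{F}_2$.

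For part (1), I would first record two elementary facts about $L\in\mathrm{AGL}(n)$. (a) Since $L$ is a bijection of $\mathbb{F}_2^n$, the change of variables $y=L(x)$ gives $\hdist(f\circ L,g\circ L)=|\{x: f(L(x))\ne g(L(x))\}|=|\{y: f(y)\ne g(y)\}|=\hdist(f,g)$ for all $f,g\in\mathcal{B}_n$. (b) Composition with an affine map does not increase algebraic degree, and since $L^{-1}$ is also affine, $\deg(g)=\deg((g\circ L)\circ L^{-1})\le\deg(g\circ L)\le\deg(g)$, so $\deg(g\circ L)=\deg(g)$; hence $g\mapsto g\circ L$ is a bijection of $\rmc(r,n)$ onto itself with inverse $g\mapsto g\circ L^{-1}$. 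Writing each $g$ with $\deg g\le r$ as $g=g'\circ L$ with $g'=g\circ L^{-1}\in\rmc(r,n)$, and using (a),
\[
\nl_r(f\circ L)=\min_{\deg g\le r}\hdist(f\circ L,g)=\min_{\deg g'\le r}\hdist(f\circ L,g'\circ L)=\min_{\deg g'\le r}\hdist(f,g')=\nl_r(f).
\]
The two ``special cases'' then fall out immediately: an affine transformation of $x_1,\dots,x_{n-1}$ that fixes $x_n$ is an element of $\mathrm{AGL}(n)$, and the substitution $x_n\mapsto x_n+g(x_1,\dots,x_{n-1})$ with $g\in\rmc(1,n-1)$ is an (involutive) element of $\mathrm{AGL}(n)$.

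For part (2), fix $g\in\rmc(r,n)$. The map $h\mapsto h+g$ is a bijection of $\rmc(r,n)$ onto itself: it is its own inverse over $\mathbb{F}_2$, and $\deg(h+g)\le\max(\deg h,\deg g)\le r$. Moreover $f+g\ne h \iff f\ne h+g$, so $\hdist(f+g,h)=\hdist(f,h+g)$. Therefore $\nl_r(f+g)=\min_{\deg h\le r}\hdist(f+g,h)=\min_{\deg h\le r}\hdist(f,h+g)=\min_{\deg h'\le r}\hdist(f,h')=\nl_r(f)$. The special case $f_1\mapsto f_1+g$, $f_2\mapsto f_2+g$ with $g\in\rmc(r,n-1)$ is exactly the addition to $f=f_1\|f_2$ of the $n$-variable function $g$ regarded as independent of $x_n$, which has degree $\le r$, so it is an instance of the general statement.

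The whole argument is essentially bookkeeping, so I do not expect a genuine obstacle; the one point that deserves a moment's care is fact (b), namely that affine equivalence \emph{preserves} (not merely does not increase) the algebraic degree, which is exactly where the invertibility of $L$ enters and which guarantees $g\mapsto g\circ L$ restricts to a bijection of $\rmc(r,n)$.
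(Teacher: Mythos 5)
Your proof is correct and complete; the paper omits this proof as ``a bit trivial,'' and your argument --- reindexing the defining minimum via the bijections $g\mapsto g\circ L$ and $h\mapsto h+g$ of $\rmc(r,n)$, plus the observation that each special case (an affine map fixing $x_n$, the involution $x_n\mapsto x_n+g$, and $(f_1+g)\|(f_2+g)=f+g$) is an instance of the general operation --- is exactly the standard argument the paper has in mind. The one point you flag as needing care, that invertibility of $L$ forces $\deg(g\circ L)=\deg(g)$ rather than merely $\le$, is indeed the only substantive step, and you handle it correctly.
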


Since affine transformations and degree-$r$ shifts preserve $r$th-order nonlinearity, we can use them to transform the potential functions to a specific form.

\begin{proposition}
\label{prop:canonical_form}
Let $f$ be an $n$-variable Boolean function of type $(i, j)$. Using affine transformations and degree-$r$ shifts, $f$ can be transformed to
\[
fn_i \| (fn_j \circ L + p),
\]
where $L \in \mathrm{AGL}(n-1)$ and $p \in \mathcal{H}_{n-1}^{(r)} \cup \{0\}$.

Moreover, if $fn_j \circ L_1 + \rmc(r, n-1), fn_j \circ L_2 + \rmc(r, n-1), \ldots, fn_j \circ L_\ell + \rmc(r, n-1)$ enumerate the cosets in the $\mathrm{AGL}(n-1)$-orbit of $fn_j + \rmc(r, n-1)$, then we can assume $L \in \{L_1, L_2, \ldots, L_\ell\}$.
\end{proposition}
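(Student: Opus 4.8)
The plan is to apply the two $\nl_r$-preserving operations from Proposition~\ref{prop:nl_invariant} in a carefully chosen order: first use a left-half affine change of variables together with a symmetric degree-$r$ shift to bring the left component $f_1$ to exactly $fn_i$, and then use a \emph{one-sided} degree-$r$ shift to bring the right component to the form $fn_j \circ L + p$ without disturbing the left half. Write $f = f_1 \| f_2$. By Remark~\ref{re:interchange} (which lets us swap $f_1$ and $f_2$) we may assume that $f_1 + \rmc(r, n-1)$ is affine equivalent to $fn_i + \rmc(r, n-1)$ and $f_2 + \rmc(r, n-1)$ is affine equivalent to $fn_j + \rmc(r, n-1)$. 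By the definition of the $\mathrm{AGL}(n-1)$-action on cosets, there is $L' \in \mathrm{AGL}(n-1)$ with $f_1 \circ L' = fn_i + q$ for some $q \in \rmc(r, n-1)$. Applying the affine transformation acting as $L'$ on $x_1, \dots, x_{n-1}$ and fixing $x_n$ sends $f = f_1 \| f_2$ to $(f_1 \circ L') \| (f_2 \circ L') = (fn_i + q) \| (f_2 \circ L')$ and preserves $\nl_r$ by Proposition~\ref{prop:nl_invariant}(1); then adding $q$ to both halves (a degree-$r$ shift, Proposition~\ref{prop:nl_invariant}(2)) yields $fn_i \| g_2$ with $g_2 := f_2 \circ L' + q$.

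It remains to normalize $g_2$ while keeping the left half equal to $fn_i$. Since orbits are closed under the group action, and adding $q \in \rmc(r, n-1)$ does not change a coset, $g_2 + \rmc(r, n-1) = f_2 \circ L' + \rmc(r, n-1)$ lies in the $\mathrm{AGL}(n-1)$-orbit of $fn_j + \rmc(r, n-1)$; hence $g_2 + \rmc(r, n-1) = fn_j \circ L + \rmc(r, n-1)$ for some $L$ that may be chosen among the representatives $L_1, \dots, L_\ell$, which already gives the ``moreover'' clause. Thus $g_2 = fn_j \circ L + q'$ with $q' \in \rmc(r, n-1)$; write $q' = p + q''$ where $p := T_r(q') \in \mathcal{H}_{n-1}^{(r)} \cup \{0\}$ is the homogeneous degree-$r$ part and $\deg(q'') \le r-1$. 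The crucial observation is that $x_n q''$ has degree at most $r$, so $f \mapsto f + x_n q''$ is a legitimate degree-$r$ shift; since $x_n q''$ vanishes on $\{x_n = 0\}$ and equals $q''$ on $\{x_n = 1\}$, this shift fixes $fn_i$ on the left and sends $g_2$ to $g_2 + q'' = fn_j \circ L + p$ on the right. This produces $fn_i \| (fn_j \circ L + p)$, as required.

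I do not expect a genuine obstacle here; the proof is essentially bookkeeping about the order of operations. The one point requiring attention is that the global degree-$r$ shift gets ``used up'' in cleaning the left half, so the leftover low-degree part $q''$ of the right half must be removed by the one-sided shift $x_n q''$, which is available exactly because $\deg(q'') \le r-1$. As a sanity check, the resulting function is still of type $(i,j)$: its left coset is $fn_i + \rmc(r, n-1)$ and its right coset is $(fn_j \circ L + p) + \rmc(r, n-1) = fn_j \circ L + \rmc(r, n-1)$, which lies in the orbit of $fn_j + \rmc(r, n-1)$, consistent with Definition~\ref{defoftype}.
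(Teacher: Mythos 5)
Your proof is correct and follows essentially the same route as the paper: normalize the left half via an affine change of the first $n-1$ variables plus a symmetric degree-$r$ shift, then adjust the right half, choosing $L$ among the orbit representatives. You are in fact slightly more careful than the paper at the final step, where the paper only remarks parenthetically that $p$ can be made homogeneous — your explicit one-sided shift by $x_n q''$ is exactly the mechanism that justifies that remark.
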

\begin{proof}
Let $f = f_1 \| f_2$. Since $f$ is of type $(i, j)$, we know $f_1 + \rmc(r, n-1)$ is affine equivalent to $fn_i + \rmc(r, n-1)$, and $f_2 + \rmc(r, n-1)$ is affine equivalent to $fn_j + \rmc(r, n-1)$. Thus, there exist $L_1, L_2 \in \mathrm{AGL}(n-1)$, and $p_1, p_2 \in \rmc(r, n-1)$ such that
\[
f_1 = fn_i \circ L_1 + p_1,
\]
\[
f_2 = fn_j \circ L_2 + p_2.
\]

Applying transformation $L_1^{-1}$ on $(x_1, \ldots, x_{n-1})$, $f_1$ becomes $fn_i + p_1 \circ L_1^{-1}$, and $f_2$ becomes $fn_j \circ L_2 \circ L_1^{-1} + p_2 \circ L_1^{-1}$. Applying degree-$r$ shift $p_1 \circ L_1^{-1}$ to both $f_1$ and $f_2$, then $f_1$ becomes $fn_i$, and $f_2$ becomes $fn_j \circ L_2 \circ L_1^{-1} + p_2 \circ L_1^{-1} + p_1 \circ L_1^{-1}$. Let $L = L_2 \circ L_1^{-1}$ and $p = p_2 \circ L_1^{-1} + p_1 \circ L_1^{-1}$, which proves the first part. (By performing a degree-$r$ shift, we can eliminate the monomials of degree at most $r-1$, and thus $p$ can be homogeneous.)

Assume $fn_j \circ L_1 + \rmc(r, n-1), fn_j \circ L_2 + \rmc(r, n-1), \ldots, fn_j \circ L_\ell + \rmc(r, n-1)$ enumerate the cosets in the $\mathrm{AGL}(n-1)$-orbit of $fn_j + \rmc(r, n-1)$. Then there exists some $L' \in \{L_1, L_2, \ldots, L_\ell\}$ such that $fn_j \circ L' + \rmc(r, n-1) = fn_j \circ L + \rmc(r, n-1)$, that is,
$fn_j \circ L' = fn_j \circ L + p'$ for some $p' \in \rmc(r, n-1)$. Thus $f_2 = fn_j \circ L' + (p + p')$, as desired.
\end{proof}
\begin{remark} Similarly, we can prove if $fn_i \circ L_1 + \rmc(r, n-1), \ldots, fn_i \circ L_\ell + \rmc(r, n-1)$ enumerate the cosets in the $\mathrm{AGL}(n-1)$-orbit of $fn_i + \rmc(r, n-1)$, $f$ can be transformed to
\[
fn_i \| (fn_j \circ L^{-1} + p),
\]
where $L \in \{L_1, L_2, \ldots, L_\ell\}$ and $p \in \mathcal{H}_{n-1}^{(r)} \cup \{0\}$.
\end{remark}

The goal is to prove any 7-variable Boolean function cannot achieve third-order nonlinearity 21. Our overall strategy is
\begin{itemize}
	\item Using the algebraic properties of $fn_i$, $i = 0, 1, \ldots, 10$, we exclude 62 types (among 66). Notably, we  present our results in general forms, so that they be potentially used to prove other upper bounds on $\rho(r, n)$.
	\item For Boolean function $f$ of type (2, 9), (2, 10), or (3, 10), we prove either $\nl_3(f) \le 20$, or $f$ is equivalent to a type (6, 10) function.
	\item Verify type (6,10) Boolean functions cannot achieve third-order nonlinearity 21 using Lemma \ref{lem:covering_condition}. We first transform the function into a specific form, which significantly reduces the search space. Then, with the assistance of computers, we verify that the condition \eqref{lem:covering_condition} (in Lemma \ref{lem:covering_condition}) does not hold for $h = 6$.
\end{itemize}

In two places, we rely on computers to complete our proof of $\rho(3, 7) \le 20$. Beyond $\rho(3, 7)$, we also propose a general method for upper bounding $\rho(r, n)$ by $\max_{g \in \mathcal{B}_{n-1}} \{ \nl_r(g) + \nl_{r-1}(g)\}$, which improves the well-known upper bound $\rho(r, n) \le \rho(r,n-1) + \rho(r-1, n-1)$.

%Actually, we can provide a noncomputer proof of $\rho(3, 7) \le 22$ by simply looking at the cardinality of $\mathcal{F}_{fn_i}^{(3)}(k)$ without digging into it. We will present this proof at the end of the next section.

In terms of techniques, Wang's proof of $\crrm(2,7) = 40$ \cite{Wang19} is to compare the \emph{cardinality} of the left hand side and right hand side of \eqref{equ:Ff1} (or their intersection with a specific set), and argue condition \eqref{equ:Ff1} cannot hold. For proving $\crrm(3, 7) = 20$, in addition to the cardinality argument, we apply affine transformations to transform interested functions to a specific type, and dig deep into the set $\mathcal{F}_{fn_i}^{(3)}(k)$ (with the assistance of computers) to prove \eqref{equ:Ff1} cannot hold.

\subsection{Exclude 62 types}

Let $f$ be an $n$-variable Boolean function, and let $1 \le r \le n-1$.
For notational convenience, define
\[
\ml_r(f) := \max_{g \in \mathcal{H}^{(r+1)}_n \cup \{0\}} \nl_{r}(f+g).
\]
In other words, $\ml_r(f)$ is the maximum $r$th-order nonlinearity of a function whose degree $>r+1$ parts cocincides with $f$.

The following proposition is easy to prove, and the proof is omitted.

\begin{proposition}
\label{prop:ml_properties}
Let $f$ be an $n$-variable Boolean function, and let $1 \le r \le n-1$.
\begin{itemize}
\item Let $g$ be any $n$-variable Boolean function of degree at most $r+1$. Then $\ml_r(f) = \ml_r(f + g)$.
\item Let $L \in \mathrm{AGL}(n)$ be any affine transformation. Then $\ml_r(f) = \ml_r(f \circ L)$.
\end{itemize}
\end{proposition}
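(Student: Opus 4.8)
The plan is to reduce both claims to Proposition~\ref{prop:nl_invariant} (invariance of $\nl_r$ under affine transformations and under adding a function of degree $\le r$), using one structural observation: $\mathcal{H}_n^{(r+1)}\cup\{0\}$ is an $\mathbb{F}_2$-vector space — it is exactly the span of the degree-$(r+1)$ monomials — so in particular it is closed under addition, and translation by a fixed element permutes it. I also plan to record the elementary degree fact that composing a Boolean function of degree $\le r$ with an affine map produces a function of degree $\le r$ (substituting affine forms into a monomial of degree $\le r$ cannot raise the degree, and multilinear reduction only lowers it). These are the only inputs.

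For the first item, given $g$ with $\deg(g)\le r+1$, I would split off its top part, $g = T_{r+1}(g) + g'$ with $\deg(g')\le r$ and $T_{r+1}(g) \in \mathcal{H}_n^{(r+1)}\cup\{0\}$. Then for every $h \in \mathcal{H}_n^{(r+1)}\cup\{0\}$, Proposition~\ref{prop:nl_invariant}(2) gives $\nl_r(f+g+h) = \nl_r\big(f + (T_{r+1}(g)+h)\big)$, the degree-$\le r$ summand $g'$ being absorbed. Since $h \mapsto T_{r+1}(g)+h$ is a bijection of $\mathcal{H}_n^{(r+1)}\cup\{0\}$, taking the maximum over $h$ on both sides gives $\ml_r(f+g)=\ml_r(f)$.

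For the second item, fix $L\in\mathrm{AGL}(n)$. For $h\in\mathcal{H}_n^{(r+1)}\cup\{0\}$, Proposition~\ref{prop:nl_invariant}(1) yields $\nl_r(f\circ L+h)=\nl_r\big((f\circ L+h)\circ L^{-1}\big)=\nl_r(f+h\circ L^{-1})$, and since $\deg(h\circ L^{-1})\le r+1$, Proposition~\ref{prop:nl_invariant}(2) further gives $\nl_r(f+h\circ L^{-1})=\nl_r\big(f+T_{r+1}(h\circ L^{-1})\big)$. I would then define $\sigma,\tau : \mathcal{H}_n^{(r+1)}\cup\{0\}\to\mathcal{H}_n^{(r+1)}\cup\{0\}$ by $\sigma(h)=T_{r+1}(h\circ L^{-1})$ and $\tau(h')=T_{r+1}(h'\circ L)$, and check $\tau\circ\sigma=\mathrm{id}$: writing $h\circ L^{-1}=\sigma(h)+q$ with $\deg(q)\le r$ and composing with $L$ gives $h=\sigma(h)\circ L+q\circ L$ with $\deg(q\circ L)\le r$, hence $T_{r+1}(\sigma(h)\circ L)=T_{r+1}(h)=h$. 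By the symmetric computation $\sigma\circ\tau=\mathrm{id}$, so $\sigma$ is a bijection, and $\ml_r(f\circ L)=\max_h \nl_r(f+\sigma(h))=\max_{h'}\nl_r(f+h')=\ml_r(f)$.

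I do not expect a genuine obstacle here — this is why the authors call the proposition easy and omit the proof. The only point needing a careful word rather than a wave of the hand is the bijectivity of $\sigma$: one must not claim that $T_{r+1}(p\circ L)$ depends only on $T_{r+1}(p)$ (false, since a degree-$(r+2)$ term can contribute to degree $r+1$ after affine substitution and multilinear reduction), but argue as above, exploiting that $h$ — and hence $h\circ L^{-1}$ — has no terms of degree $>r+1$ at all, so the only ambiguity sits in degree $\le r$ and is killed by Proposition~\ref{prop:nl_invariant}(2). Everything else is bookkeeping.
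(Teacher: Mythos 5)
Your proof is correct. The paper omits the proof of this proposition as routine, and your argument is exactly the expected one: reduce to Proposition~\ref{prop:nl_invariant} and note that $h \mapsto T_{r+1}(g)+h$ (resp.\ $h \mapsto T_{r+1}(h\circ L^{-1})$) is a bijection of $\mathcal{H}_n^{(r+1)}\cup\{0\}$, with your remark about why the top-degree projection commutes with composition here being the one detail genuinely worth writing down.
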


\begin{lemma}
\label{lem:nlofone} Let $fn_0 + \rmc(r, n-1), fn_1 + \rmc(r, n-1), \ldots$ enumerate all non-equivalent cosets of the quotient space $\rmc(n-1, n-1) / \rmc(r, n-1)$. Let $f = f_1 \| f_2$ be an $n$-variable Boolean function of \emph{type $(i,j)$}, that is, $f_1 + \rmc(r, n-1)$ is affine equivalent to $fn_i+\rmc(r, n-1)$ and $f_2 + \rmc(r, n-1)$ is affine equivalent to $fn_j + \rmc(r, n-1)$. Then
 \[
 \nl_r(f) \le \min\left\{ \nl_r(fn_i) + \ml_{r-1}(fn_j), \nl_r(fn_j) + \ml_{r-1}(fn_i)\right\}.
 \]
\end{lemma}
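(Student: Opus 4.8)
The plan is to combine the canonical form supplied by Proposition~\ref{prop:canonical_form} with the recursive formula for $\nl_r$ from~\cite{LWS19}, evaluated at one cleverly chosen homogeneous shift. First I would use Proposition~\ref{prop:canonical_form} together with Proposition~\ref{prop:nl_invariant} (which guarantees that the reduction preserves $\nl_r$) to assume without loss of generality that $f = fn_i \| (fn_j \circ L + p)$ for some $L \in \mathrm{AGL}(n-1)$ and some $p \in \mathcal{H}_{n-1}^{(r)} \cup \{0\}$; write $f = f_1 \| f_2$ with $f_1 = fn_i$ and $f_2 = fn_j \circ L + p$.

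The key auxiliary fact I would establish is that $\nl_r(fn_i) = \min_{g \in \mathcal{H}_{n-1}^{(r)} \cup \{0\}} \nl_{r-1}(fn_i + g)$, i.e.\ the minimizing shift can be taken homogeneous of degree $r$ (or $0$). Indeed, if $h$ is a degree-$\le r$ function at distance $\nl_r(fn_i)$ from $fn_i$, then splitting $h = g^\ast + h'$ with $g^\ast := T_r(h) \in \mathcal{H}_{n-1}^{(r)} \cup \{0\}$ and $\deg(h') \le r-1$ gives $\nl_{r-1}(fn_i + g^\ast) \le \hdist(fn_i + g^\ast, h') = \nl_r(fn_i)$; conversely $\nl_{r-1}(fn_i + g^\ast) \ge \nl_r(fn_i)$ since every degree-$\le(r-1)$ approximant of $fn_i + g^\ast$ yields a degree-$\le r$ approximant of $fn_i$. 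Fix such a $g^\ast$.

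Next I would apply the recursion of~\cite{LWS19} to $f = f_1 \| f_2$ and bound the minimum over homogeneous shifts by its value at $g = g^\ast$, obtaining $\nl_r(f) \le \nl_{r-1}(fn_i + g^\ast) + \nl_{r-1}(fn_j \circ L + p + g^\ast) = \nl_r(fn_i) + \nl_{r-1}(fn_j \circ L + p + g^\ast)$. To bound the second summand I would use that $\nl_{r-1}$ is invariant under the affine substitution $L^{-1}$ and under adding functions of degree $\le r-1$: since $p + g^\ast$ has degree $\le r$, so does $(p + g^\ast)\circ L^{-1}$, and writing $(p+g^\ast)\circ L^{-1} = \tilde g + (\deg \le r-1)$ with $\tilde g \in \mathcal{H}_{n-1}^{(r)} \cup \{0\}$ gives $\nl_{r-1}(fn_j \circ L + p + g^\ast) = \nl_{r-1}(fn_j + \tilde g) \le \ml_{r-1}(fn_j)$. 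This yields $\nl_r(f) \le \nl_r(fn_i) + \ml_{r-1}(fn_j)$, and interchanging $f_1$ and $f_2$ via Remark~\ref{re:interchange} gives the symmetric bound $\nl_r(f) \le \nl_r(fn_j) + \ml_{r-1}(fn_i)$, hence the stated minimum.

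Most of this is mechanical given the earlier propositions; the one place that needs genuine care is the final bookkeeping step, namely checking that pushing the homogeneous shift $g^\ast$ through $L^{-1}$ and absorbing $p$ keeps us inside the family $\{ fn_j + g : g \in \mathcal{H}_{n-1}^{(r)} \cup \{0\}\}$ over which $\ml_{r-1}(fn_j)$ is the maximum — i.e.\ that the degree-$\le(r-1)$ debris produced by the affine change of variables is irrelevant for $\nl_{r-1}$. Everything else reduces to the recursion of~\cite{LWS19}, the definition of $\ml_{r-1}$, and affine invariance of lower-order nonlinearity.
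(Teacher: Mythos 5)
Your proposal is correct and takes essentially the same route as the paper's proof: both arguments pick the optimal degree-$r$ approximant of the first half and then control the induced shift on the second half by $\ml_{r-1}$, using affine/degree-$r$-shift invariance to replace $f_1,f_2$ by $fn_i,fn_j$. The only difference is packaging — you normalize to the canonical form first and invoke the concatenation recursion of \cite{LWS19} at the homogenized optimal shift $g^\ast$, whereas the paper builds the concatenated approximant $g_1\|(g_1+g_2)$ by hand and applies the invariance at the end.
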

\begin{proof}
    Let $g_1\in \rmc(r,n-1)$ be such that $\hdist(f_1,g_1) =\nl_r(f_1)$. By the definition of $\ml_{r-1}(f_2)$, we know $\nl_{r-1}(f_2 + g_1)\leq \ml_{r-1}(f_2)$, which implies that, there exists $g_2 \in \rmc(r-1,n-1)$ such that $\hdist(f_2+g_1, g_2)\leq \ml_{r-1}(f_2)$. Letting $g=g_1\|(g_1 + g_2) \in \rmc(r,n)$, we have $\hdist(f,g) = \hdist(f_1, g_1) + \hdist(f_2, g_1+g_2) = \hdist(f_1, g_1) + \hdist(f_2 + g_1, g_2) \le \nl_r(f_1)+ \ml_{r-1}(f_2)$. Hence, $\nl_r(f)\le \nl_r(f_1)+ \ml_{r-1}(f_2)$.
    
    By our condition,
    \begin{eqnarray*}
    f_1 & = & fn_i \circ L_1 + h_1,\\
    f_2 & = & fn_j \circ L_2 + h_2,
    \end{eqnarray*}
    where $L_1, L_2 \in \agl(n)$ and $\deg(h_1), \deg(h_2) \le r$. By Proposition \ref{prop:nl_invariant} and Proposition \ref{prop:ml_properties}, we have $\nl_r(f_1) = \nl_r(fn_i)$ and $\ml_{r-1}(f_2) = \ml_{r-1}(fn_j)$. So, $\nl_r(f)\le \nl_r(fn_i)+ \ml_{r-1}(fn_j)$. Inequality $\nl_r(f)\le \nl_r(fn_j)+ \ml_{r-1}(fn_i)$ is similar to prove.
\end{proof}

\begin{lemma}
	\label{lem:nl3oddeven}
 Let $1 \le r \le n-2$.
	Let $f = f_1 \| f_2$ be an $n$-variable Boolean function. If $\nl_r(f)$ is odd, then one of $\nl_r(f_1), \nl_r(f_2)$ is odd, and the other is even.
\end{lemma}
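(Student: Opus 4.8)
The plan is to exploit the recursive relation from Lemma~(\cite{LWS19}) in the paper, namely
\[
\nl_r(f) = \min_{g \in \mathcal{H}_{n-1}^{(r)} \cup \{0\}} \bigl\{ \nl_{r-1}(f_1 + g) + \nl_{r-1}(f_2 + g) \bigr\},
\]
and combine it with a simple parity observation about how adding a fixed homogeneous degree-$r$ function $g$ affects lower-order nonlinearity. The key sub-claim I would isolate first is: for a fixed $g \in \mathcal{H}_{n-1}^{(r)}\cup\{0\}$, the parity of $\nl_{r-1}(f_1 + g) + \nl_{r-1}(f_2 + g)$ is the same for all admissible $g$; equivalently, $\nl_{r-1}(f_1 + g) \bmod 2$ and $\nl_{r-1}(f_2 + g)\bmod 2$ each, as $g$ varies, either stay constant or flip together. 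Once such a parity invariant is in hand, the minimizing $g^*$ achieving $\nl_r(f)$ has $\nl_{r-1}(f_1 + g^*) + \nl_{r-1}(f_2 + g^*)$ odd, so exactly one of the two summands is odd; and I must then argue this forces $\nl_r(f_1)$ (which is $\nl_{r-1}(f_1 + 0)$) and $\nl_r(f_2)$ to have opposite parities as well.

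For the parity invariant, the cleanest route is via the Walsh/correlation picture or via the fact that the all-ones vector $\mathbf{1}$ (the weight-$2^{n-1}$ function, i.e.\ the complement) is a codeword of $\rmc(r-1,n-1)$ for $r-1 \ge 0$: adding $\mathbf 1$ to an affine-or-low-degree approximant sends a closest codeword at distance $d$ to one at distance $2^{n-1} - d$, and since $2^{n-1}$ is even, $d$ and $2^{n-1}-d$ have the same parity — this shows complementation preserves $\nl_{r-1}$-parity, but that alone is not enough. The real input I expect to need is that $\nl_{r-1}(h)$ and $\nl_{r-1}(h + \ell)$ have the same parity whenever $\ell$ is \emph{affine} (degree $\le 1$), because the set of degree-$\le r-1$ functions is a coset-closed linear space and translating $h$ by an affine function just permutes the approximants within... actually the decisive fact is that for any two homogeneous degree-$r$ functions $g, g'$ on $n-1$ variables, $\nl_{r-1}(f_i + g) \equiv \nl_{r-1}(f_i + g') \pmod 2$ — I would prove this by showing $\nl_{r-1}$ of a function depends on its degree-$\ge r$ part only up to parity, using that the number of affine (or degree $\le r-1$) functions at minimum distance, summed appropriately, is governed by the weight distribution of $\rmc(r-1,n-1)$, all of whose nonzero weights... hmm, this is where I must be careful.

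A more robust approach, and the one I would actually pursue, is the \emph{global} parity statement: for \emph{any} $(n-1)$-variable Boolean function $h$ and \emph{any} $k < n-1$, $\nl_k(h) \bmod 2$ is determined by $h \bmod \rmc(k,n-1)$ together with a correction, but more usefully, $\nl_k(h) + \nl_k(h') \equiv \wt(h + h') \pmod 2$ fails in general — so instead I invoke: $\nl_{r}(f)$ for an $n$-variable $f$, via the recursion, is a minimum over terms each congruent mod $2$ to a single fixed value $\epsilon(f) \in \{0,1\}$, and one shows $\epsilon(f_1 \| f_2) = \epsilon(f_1) + \epsilon(f_2) + (\text{something}) $. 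Concretely: I would prove by induction on $r$ that $\nl_r(g) \equiv \wt(g) \pmod 2$ for every Boolean function $g$ in $\le n$ variables — base case $r=0$ is trivial ($\nl_0(g) = \min(\wt(g), 2^n - \wt(g))$, both $\equiv \wt(g)$), and the inductive step uses the recursion plus $\wt(f_1\|f_2) = \wt(f_1) + \wt(f_2)$ and $\wt(f_i + g) \equiv \wt(f_i) + \wt(g) \pmod 2$, noting $\wt(g)$ is even for $g$ homogeneous of degree $\ge 1$ (and $0$) in $\ge 1$ variables... wait, that's false in general, but $\wt(g)$ for $g \in \mathcal{H}_{n-1}^{(r)}$ with $r \ge 1$ is indeed even since such $g$ vanishes on $0$ and the number of ones is divisible by $2$ when $n-1 > r \ge 1$ — actually $\wt$ of any non-affine... let me just say: the key numeric fact is $\wt(g) \equiv 0 \pmod 2$ for all $g \in \mathcal{H}_{n-1}^{(r)} \cup \{0\}$ when $r \le n-2$, which holds because the weight of a degree-$d$ function is divisible by $2^{\lceil (n-1)/d\rceil - 1}$ (McEliece's theorem) and $\lceil (n-1)/r \rceil \ge 2$ here. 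Granting this, $\nl_{r-1}(f_i + g) \equiv \wt(f_i + g) \equiv \wt(f_i) \pmod 2$, hence $\nl_r(f) \equiv \wt(f_1) + \wt(f_2) \pmod 2$, while $\nl_r(f_i) \equiv \wt(f_i) \pmod 2$; so $\nl_r(f) \equiv \nl_r(f_1) + \nl_r(f_2) \pmod 2$, and $\nl_r(f)$ odd forces exactly one of $\nl_r(f_1), \nl_r(f_2)$ odd. The main obstacle — and the step I'd verify most carefully — is precisely the divisibility claim $2 \mid \wt(g)$ for $g \in \mathcal{H}_{n-1}^{(r)}\cup\{0\}$ with $r \le n-2$ (equivalently the McEliece-type divisibility bound for Reed–Muller codewords), since the whole parity argument collapses without it; everything else is a short induction via the recursion.
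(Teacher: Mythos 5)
Your final argument is correct, and it ultimately rests on the same parity invariant as the paper's proof: for an $m$-variable function $h$ and any $k\le m-1$, the parity of $\nl_k(h)$ equals the parity of $\wt(h)$ (equivalently, $\nl_k(h)$ is odd if and only if $\deg(h)=m$). The paper uses this invariant directly in the language of degrees --- $\nl_r(f)$ odd forces $\deg(f)=n$, hence $f_1+f_2$ contains $x_1\cdots x_{n-1}$, hence exactly one of $f_1,f_2$ has degree $n-1$, hence exactly one of $\nl_r(f_1),\nl_r(f_2)$ is odd --- whereas you route through the concatenation recursion and an induction on $r$ to reach the cleaner congruence $\nl_r(f)\equiv \nl_r(f_1)+\nl_r(f_2)\pmod 2$, from which the claim is immediate. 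Two remarks. First, the detour through the recursion is unnecessary: for every $p\in\rmc(r,n)$ with $r\le n-1$ one has $\wt(f+p)\equiv \wt(f)+\wt(p)\equiv\wt(f)\pmod 2$ since $\deg(p)<n$ forces $\wt(p)$ even, so $\nl_r(f)\equiv\wt(f)=\wt(f_1)+\wt(f_2)\equiv\nl_r(f_1)+\nl_r(f_2)\pmod 2$ in one line, with no induction. Second, the divisibility fact you single out as the main obstacle does not need McEliece's theorem: for any $m$-variable $g$ the coefficient of $x_1\cdots x_m$ in the ANF equals $\sum_{x\in\mathbb{F}_2^m}g(x)=\wt(g)\bmod 2$, so $\wt(g)$ is even whenever $\deg(g)<m$; this elementary observation is precisely the paper's ``weight odd implies degree maximal.'' Finally, note that your claimed invariant $\nl_k(h)\equiv\wt(h)\pmod 2$ silently requires $k$ to be strictly less than the number of variables (for $k=m$ one has $\nl_m(h)=0$ while $\wt(h)$ may be odd); all of your applications stay within that range under the hypothesis $r\le n-2$, so the argument goes through.
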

\begin{proof}
	Observe that if the weight of a Boolean function is odd, its degree is maximal.
	$\nl_r(f) $ being odd implies that there exists $g \in \rmc(r, n)$ such that $\hdist(f, g) $ is odd, that is, $\wt(f+g) $ is odd. Thus $\deg(f+g) = n$, which implies that $\deg(f) = n$.
	
	Notice that $f = (f_1 + f_2)x_n + f_1$, where $f_1, f_2 \in \mathcal{B}_{n-1}$. Since $\deg(f) = n$, we have $\deg(f_1 + f_2) = n-1$, that is, $f_1 + f_2$ contains the highest-degree term $x_1x_2 \cdots x_{n-1}$. Thus, exactly one of $\deg(f_1)$ and $\deg(f_2)$ is $n-1$, which implies that exactly one of $\nl_r(f_1),\nl_r(f_2)$ is odd.
\end{proof}

From the previous lemma, we know 7-variable functions of type $(i, i)$, $0 \le i \le 10$, cannot have third-order nonlinearity 21.

Now we are ready to exclude 62 types (among 66 in total).

\begin{corollary} \label{lem:li_theorem} 
Let $f$ be a 7-variable Boolean function of type $(i, j)$, $0 \le i < j \le 10$. If $(i, j) \not\in \{(2,9), (2, 10), (3, 10), (6,10)\}$, then $\nl_3(f) \le 20$.	
\end{corollary}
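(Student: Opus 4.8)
The plan is to use Lemma~\ref{lem:nlofone} as the main sieve: for a type $(i,j)$ function $f$, we have $\nl_3(f) \le \min\{\nl_3(fn_i) + \ml_2(fn_j),\ \nl_3(fn_j) + \ml_2(fn_i)\}$, so it suffices to show that for every pair $(i,j)$ with $0 \le i < j \le 10$ outside the four exceptional pairs, at least one of the two quantities $\nl_3(fn_i) + \ml_2(fn_j)$ and $\nl_3(fn_j) + \ml_2(fn_i)$ is at most $20$. Thus the first step is to assemble, for each $i = 0,1,\ldots,10$, the two numbers $\nl_3(fn_i)$ and $\ml_2(fn_i) = \max_{g \in \mathcal{H}_6^{(3)} \cup \{0\}} \nl_2(fn_i + g)$. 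The values $\nl_3(fn_i)$ are tabulated (Table~\ref{table:fn10_nl3} in the appendix), and $\ml_2(fn_i)$ can be computed directly since the maximization is over the homogeneous degree-$3$ part (a finite, moderately sized search), using the recursion of~\cite{LWS19} to speed up each $\nl_2$ evaluation. Note $\nl_3(fn_0) = \ml_2(fn_0) = 0$, so any type $(0, j)$ is immediately killed by the bound $\nl_3(f) \le \nl_3(fn_0) + \ml_2(fn_j) = \ml_2(fn_j)$ provided $\ml_2(fn_j) \le 20$, which will hold for all $j$; more generally the small-index $fn_i$ have small $\nl_3$ and $\ml_2$, knocking out whole rows of the $11 \times 11$ table.

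The second step is the bookkeeping: tabulate the $11 \times 11$ matrix whose $(i,j)$ entry is $\min\{\nl_3(fn_i) + \ml_2(fn_j),\ \nl_3(fn_j) + \ml_2(fn_i)\}$ and verify that every off-diagonal entry not in $\{(2,9),(2,10),(3,10),(6,10)\}$ is $\le 20$. (The diagonal entries $(i,i)$ are already excluded by Lemma~\ref{lem:nl3oddeven}, since $21$ is odd and $f_1, f_2$ being affine-equivalent to the same coset forces $\nl_3(f_1)$ and $\nl_3(f_2)$ to have the same parity, contradicting the lemma.) I would present this as a table in the paper, perhaps with the derived quantities $\nl_3(fn_i)$ and $\ml_2(fn_i)$ listed in a header row/column, and then simply point at the entries. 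The Remark following Theorem~\ref{thm:rmc3666_class} ($fn_9 = x_1\cdots x_6 + fn_2$, $fn_{10} = x_1 \cdots x_6 + fn_3$) suggests why the surviving pairs cluster around indices $2,3,9,10$: these are the cosets with the largest $\nl_3$ and $\ml_2$ values, so only pairings among them (and with the high-degree $fn_6$) can possibly sum past $20$.

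The main obstacle is that this argument is only as strong as the numbers $\nl_3(fn_i)$ and $\ml_2(fn_i)$, and for the borderline cases — pairs involving $fn_6$, $fn_9$, $fn_{10}$ — the crude bound $\nl_3(fn_i) + \ml_2(fn_j)$ may itself land at $21$ or above even though the true $\nl_3$ of the concatenation is smaller; indeed the four pairs left unresolved are presumably exactly those where Lemma~\ref{lem:nlofone} gives $\ge 21$ on both sides. So the real work is confirming that the four named pairs are the \emph{only} ones where both bounds exceed $20$; I expect to need the exact values rather than estimates, which is why the computation of $\ml_2(fn_i)$ for $i \in \{6,9,10\}$ must be done carefully (and is presumably where a computer is invoked). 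Once the matrix is in hand, the corollary is immediate: for any type $(i,j)$ with $(i,j) \notin \{(2,9),(2,10),(3,10),(6,10)\}$ and $i<j$, one of the two Lemma~\ref{lem:nlofone} bounds is $\le 20$, hence $\nl_3(f) \le 20$; combined with the diagonal case handled by parity, this excludes $62$ of the $66$ types.
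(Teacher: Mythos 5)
Your proposal is correct and is essentially identical to the paper's proof: the paper likewise applies Lemma~\ref{lem:nlofone} together with the tabulated values of $\nl_3(fn_i)$ and $\ml_2(fn_i)$ from Table~\ref{table:fn10_nl3} and observes that the bound $\min\{\nl_3(fn_i)+\ml_2(fn_j),\ \nl_3(fn_j)+\ml_2(fn_i)\}$ exceeds $20$ only for the four listed pairs. The numbers do check out (e.g.\ $(2,9)$ gives $\min\{6+15,\,5+16\}=21$ while $(3,9)$ gives $\min\{8+15,\,5+14\}=19$), so your bookkeeping plan completes the argument exactly as the paper does.
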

\begin{proof}
	By Lemma \ref{lem:nlofone} and Table \ref{table:fn10_nl3} in Appendix \ref{appendix_properties_fni}, we can clearly deduce that 7-variable Boolean functions of only four types which are type $(i, j) \in \{(2,9), (2, 10), (3, 10), (6,10)\}$ can probably achieve the third-order nonlinearity 21. 
\end{proof}

%By definition \ref{defoftype}, we can classify any $n$-variable Boolean function with degree greater than $r$ into specific types. Hence, we can have a sequence consisting of all the values such that $\min\{\nl_r(fn_i)+\MAXVAL_{fn_j},\nl_r(fn_j)+\MAXVAL_{fn_i}\}$ for any type $(i,j)$ where $0\le i\le j< l$.

\begin{theorem}
\label{thm:rhorn_atmost} Let $fn_0 + \rmc(r, n-1), fn_1 + \rmc(r, n-1), \ldots$ enumerate all non-equivalent cosets in the quotient space $\rmc(n-1, n-1) / \rmc(r, n-1)$. Then
\[
\rho(r,n) \le \max_k \left\{ \nl_r(fn_k) + \ml_{r-1}(fn_k) \right\}.
\]
\end{theorem}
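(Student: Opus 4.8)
The plan is to leverage the two structural facts already established: Lemma \ref{lem:nlofone}, which bounds $\nl_r(f_1\|f_2)$ for a type $(i,j)$ function by $\min\{\nl_r(fn_i)+\ml_{r-1}(fn_j),\ \nl_r(fn_j)+\ml_{r-1}(fn_i)\}$, and the Definition \ref{defoftype} classification, which guarantees that every $n$-variable Boolean function is of type $(i,j)$ for some $0\le i\le j\le \ell-1$. Since $\rho(r,n)=\max_{f\in\mathcal{B}_n}\nl_r(f)$, it suffices to bound $\nl_r(f)$ uniformly over all types.

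First I would fix an arbitrary $f\in\mathcal{B}_n$. By the classification, $f=f_1\|f_2$ is of type $(i,j)$ for some indices $i,j$ among the cosets of $\rmc(n-1,n-1)/\rmc(r,n-1)$ (note $f$ is affine equivalent to $f_1\|f_2$ with the coset data unchanged up to the ordering in the type, by Remark \ref{re:interchange}; so we may take $i\le j$ but will not even need that). Applying Lemma \ref{lem:nlofone} directly gives
\[
\nl_r(f) \le \nl_r(fn_i) + \ml_{r-1}(fn_j).
\]
Now I would bound each summand by the corresponding diagonal quantity. This is the one small point that needs care: we want to replace the mixed expression $\nl_r(fn_i)+\ml_{r-1}(fn_j)$ by something of the form $\nl_r(fn_k)+\ml_{r-1}(fn_k)$. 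This is not literally what Lemma \ref{lem:nlofone} outputs, so we must use the $\min$ more cleverly, or argue directly.

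The cleanest route: from Lemma \ref{lem:nlofone} we have both $\nl_r(f)\le \nl_r(fn_i)+\ml_{r-1}(fn_j)$ and $\nl_r(f)\le \nl_r(fn_j)+\ml_{r-1}(fn_i)$. Adding these two inequalities gives $2\nl_r(f)\le \big(\nl_r(fn_i)+\ml_{r-1}(fn_i)\big)+\big(\nl_r(fn_j)+\ml_{r-1}(fn_j)\big)\le 2\max_k\{\nl_r(fn_k)+\ml_{r-1}(fn_k)\}$, hence $\nl_r(f)\le \max_k\{\nl_r(fn_k)+\ml_{r-1}(fn_k)\}$. Since $f$ was arbitrary, taking the maximum over $f\in\mathcal{B}_n$ yields $\rho(r,n)\le \max_k\{\nl_r(fn_k)+\ml_{r-1}(fn_k)\}$, as claimed. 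The main (and really only) obstacle is recognizing that the mixed bound from Lemma \ref{lem:nlofone} must be symmetrized by averaging the two branches of the $\min$ rather than applied on a single branch; everything else is an immediate consequence of the definitions and of the completeness of the type classification. I would also remark that this is an improvement over the textbook bound $\rho(r,n)\le\rho(r,n-1)+\rho(r-1,n-1)$ precisely because for a single coset representative $fn_k$ the quantities $\nl_r(fn_k)$ and $\ml_{r-1}(fn_k)$ typically cannot both be extremal.
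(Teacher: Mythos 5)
Your proof is correct and is essentially the paper's argument: the paper also applies Lemma \ref{lem:nlofone} to a type $(i,j)$ function and then invokes the inequality $\min\{a+b,c+d\}\le\max\{a+c,b+d\}$, which is proved by exactly the summing/averaging step you carry out explicitly. No gaps.
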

\begin{proof} Let $f$ be an $n$-variable Boolean function with maximum $r$th-order nonlinearity, i.e., $\nl_r(f) = \rho(r, n)$. We shall prove $\nl_r(f) \le \max_{g \in \mathcal{B}_{n-1}}\{\nl_r(g)+\ml_{r-1}(g)\}$.

Suppose $f$ is of type $(i, j)$. By Lemma \ref{lem:nlofone}, 
\begin{eqnarray*}
 \nl_r(f) & \le & \min\left\{ \nl_r(fn_i) + \ml_{r-1}(fn_j), \nl_r(fn_j) + \ml_{r-1}(fn_i)\right\} \\
 & \le &  \max\left\{ \nl_r(fn_i) + \ml_{r-1}(fn_i), \nl_r(fn_j) + \ml_{r-1}(fn_j)\right\} \\
 & \le & \max_k \left\{ \nl_r(fn_k) + \ml_{r-1}(fn_k) \right\},
\end{eqnarray*}
where the second step is true because $\min\{a+b, c+d\} \le \max\{a+c, b+d\}$ holds for any $a, b, c, d \in \mathbb{N}$.
\end{proof}

\begin{corollary}
\label{cor:rhorn_atmost}
 $\rho(r,n) \le \max_{g \in \mathcal{B}_{n-1}}\{\nl_r(g)+\nl_{r-1}(g)\}$.
\end{corollary}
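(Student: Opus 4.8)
The plan is to derive Corollary~\ref{cor:rhorn_atmost} from Theorem~\ref{thm:rhorn_atmost} by simply observing that $\ml_{r-1}(g) \le \nl_{r-1}(g)$ is \emph{not} what we want --- in fact the inequality goes the other way, so the corollary does not follow from Theorem~\ref{thm:rhorn_atmost} naively. Instead, I would retrace the proof of Theorem~\ref{thm:rhorn_atmost} (equivalently Lemma~\ref{lem:nlofone}), but at the final step, instead of bounding $\ml_{r-1}(fn_k)$ by itself, I would re-examine what function actually attains the maximum. So the real approach is: take $f = f_1 \| f_2$ an $n$-variable function with $\nl_r(f) = \rho(r, n)$. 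By the argument in Lemma~\ref{lem:nlofone} (choosing $g_1 \in \rmc(r, n-1)$ with $\hdist(f_1, g_1) = \nl_r(f_1)$, then $g_2 \in \rmc(r-1, n-1)$ achieving $\nl_{r-1}(f_2 + g_1)$, and concatenating $g = g_1 \| (g_1 + g_2)$), one gets $\nl_r(f) \le \nl_r(f_1) + \nl_{r-1}(f_2 + g_1)$. The point is that $g_1$ here is a \emph{specific} degree-$\le r$ function, not an arbitrary homogeneous degree-$r$ function, so $\nl_{r-1}(f_2 + g_1) \le \ml_{r-1}(f_2)$, but we would like instead to keep it as an instance of $\nl_{r-1}$ of \emph{some} $(n-1)$-variable function.

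The cleanest route: let $g_1 \in \rmc(r, n-1)$ with $\hdist(f_1, g_1) = \nl_r(f_1)$, set $g = g_1 \| (g_1 + g_2)$ as above, giving $\nl_r(f) \le \nl_r(f_1) + \nl_{r-1}(f_2 + g_1)$. Now symmetrically, let $g_1' \in \rmc(r, n-1)$ with $\hdist(f_2, g_1') = \nl_r(f_2)$, giving $\nl_r(f) \le \nl_r(f_2) + \nl_{r-1}(f_1 + g_1')$. Taking the minimum of the two bounds and using $\min\{a + b, c + d\} \le \max\{a + c, b + d\}$, we get $\nl_r(f) \le \max\{\nl_r(f_1) + \nl_{r-1}(f_1 + g_1'), \nl_r(f_2) + \nl_{r-1}(f_2 + g_1)\}$. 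In each term, $g_1'$ (resp. $g_1$) is a function of degree at most $r$, so $f_1 + g_1'$ and $f_1$ have the same degree-$(>r)$ part; more relevantly, $\nl_r(f_1 + g_1') = \nl_r(f_1)$ (adding a degree $\le r$ function preserves $r$th-order nonlinearity) and $\nl_{r-1}(f_1 + g_1')$ is literally $\nl_{r-1}$ of the $(n-1)$-variable function $f_1 + g_1'$. Hence $\nl_r(f_1) + \nl_{r-1}(f_1 + g_1') = \nl_r(f_1 + g_1') + \nl_{r-1}(f_1 + g_1') \le \max_{g \in \mathcal{B}_{n-1}}\{\nl_r(g) + \nl_{r-1}(g)\}$, and similarly for the other term. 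This yields $\rho(r, n) = \nl_r(f) \le \max_{g \in \mathcal{B}_{n-1}}\{\nl_r(g) + \nl_{r-1}(g)\}$, as claimed.

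The main obstacle --- really the only subtlety --- is recognizing that the corollary is \emph{not} a weakening of Theorem~\ref{thm:rhorn_atmost} obtained by replacing $\ml_{r-1}$ with $\nl_{r-1}$ (that would require $\ml_{r-1} \le \nl_{r-1}$, which is false). One must instead go back into the proof of Lemma~\ref{lem:nlofone} and notice that the auxiliary function $g_1$ appearing there is precisely an element of $\rmc(r, n-1)$, so the quantity $\nl_{r-1}(f_2 + g_1)$ that genuinely bounds $\nl_r(f)$ can be absorbed into $\nl_{r-1}(g)$ for the single function $g = f_2 + g_1 \in \mathcal{B}_{n-1}$, which simultaneously satisfies $\nl_r(g) = \nl_r(f_2)$. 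After that observation, the rest is the same $\min$–$\max$ trick already used in Theorem~\ref{thm:rhorn_atmost}, plus the fact (Proposition~\ref{prop:nl_invariant}(2), specialized to $n-1$ variables) that a degree-$r$ shift preserves $\nl_r$. I would write this as a short self-contained argument rather than citing Theorem~\ref{thm:rhorn_atmost} as a black box, since the needed form of the intermediate inequality is slightly finer than the statement of that theorem.
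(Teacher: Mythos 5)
Your proof is correct, and its mathematical core coincides with the paper's: the decisive observation in both is that a degree-$\le r$ shift preserves $\nl_r$, so a quantity of the form $\ml_{r-1}(g)$ (or, in your rendering, $\nl_{r-1}(f_2+g_1)$) can be rewritten as $\nl_{r-1}$ of a single $(n-1)$-variable function whose $\nl_r$ is unchanged; the rest is the same $\min$--$\max$ inequality already used for Theorem~\ref{thm:rhorn_atmost}. The one place your commentary goes astray is the claim that the corollary cannot be obtained by citing Theorem~\ref{thm:rhorn_atmost} as a black box and that one must reopen the proof of Lemma~\ref{lem:nlofone}. You are right that the naive substitution $\ml_{r-1}\le\nl_{r-1}$ fails (the inequality runs the other way), but the paper nonetheless derives the corollary directly from the \emph{statement} of the theorem: for each representative $fn_k$ it picks $h$ attaining $\ml_{r-1}(fn_k)=\nl_{r-1}(fn_k+h)$, notes $\nl_r(fn_k+h)=\nl_r(fn_k)$ by Proposition~\ref{prop:nl_invariant}, and concludes $\nl_r(fn_k)+\ml_{r-1}(fn_k)=\nl_r(fn_k+h)+\nl_{r-1}(fn_k+h)\le\max_{g\in\mathcal{B}_{n-1}}\{\nl_r(g)+\nl_{r-1}(g)\}$ --- exactly your absorption step, performed at the level of the coset representatives rather than inside the concatenation argument. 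Your self-contained version is equally valid and has the small advantage of never invoking the classification of $\rmc(n-1,n-1)/\rmc(r,n-1)$, working instead with the actual halves $f_1,f_2$ of the extremal function; the paper's version is shorter because it reuses the theorem verbatim.
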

\begin{proof}
Let $fn_0 + \rmc(r, n-1), fn_1 + \rmc(r, n-1), \ldots$ enumerate all non-equivalent cosets in the quotient space $\rmc(n-1, n-1) / \rmc(r, n-1)$. By Theorem \ref{thm:rhorn_atmost}, 
\[
 \nl_r(f) \le \max_k \{ \nl_r(fn_k) + \ml_{r-1}(fn_k) \}.
\]

We will prove $\max_k \{ \nl_r(fn_k) + \ml_{r-1}(fn_k) \} \le \max_{g \in \mathcal{B}_{n-1}}\{\nl_r(g)+\nl_{r-1}(g)\}$. It suffices to prove, for any $k$, $\nl_r(fn_k) + \ml_{r-1}(fn_k) \le \max_{g \in \mathcal{B}_{n-1}}\{\nl_r(g)+\nl_{r-1}(g)\}$. Recall the definition of $\ml_{r-1}(fn_k)$, let $h \in \rmc(r, n-1)$ such that $\nl_{r-1}(fn_k + h) = \ml_{r-1}(fn_k)$. By Proposition \ref{prop:nl_invariant}, $\nl_r(fn_k + h) = \nl_r(fn_k)$.
So we have
\begin{eqnarray*}
& & \nl_r(fn_k) + \ml_{r-1}(fn_k) \\
& = & \nl_r(fn_k + h) + \nl_{r-1}(fn_k + h) \\
& \le & \max_{g \in \mathcal{B}_{n-1}}\{\nl_r(g)+\nl_{r-1}(g)\},
\end{eqnarray*}
as desired.
\end{proof}

Compared with the upper bound $\rho(r, n) \le \rho(r-1, n-1) + \rho(r, n-1)$ \cite{Schatz81}, our bound $\rho(r,n) \le \max_{g \in \mathcal{B}_{n-1}}\{\nl_r(g)+\nl_{r-1}(g)\}$ is better, since $\nl_r(g) \le \rho(r, n-1)$ and $\nl_{r-1}(g) \le \rho(r-1, n-1)$ always hold for any $g \in \mathcal{B}_{n-1}$. Most likely, the equalities cannot hold \emph{simultaneously}.

For example, from recursion $\rho(r, n) \le \rho(r-1, n-1) + \rho(r, n-1)$, we have $\rho(3, 7) \le \rho(2, 6) + \rho(3, 6) = 18 + 8 = 26$, which is much weaker than the following.

\begin{corollary}
    $\rho(3, 7) \le 22.$
\end{corollary}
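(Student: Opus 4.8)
The plan is to specialize Theorem~\ref{thm:rhorn_atmost} to $r=3$ and $n=7$. The relevant quotient space is then $\rmc(6,6)/\rmc(3,6)$, whose $11$ coset representatives $fn_0,fn_1,\ldots,fn_{10}$ are exhibited in Theorem~\ref{thm:rmc3666_class}, so Theorem~\ref{thm:rhorn_atmost} yields
\[
\rho(3,7)\ \le\ \max_{0\le k\le 10}\bigl\{\nl_3(fn_k)+\ml_2(fn_k)\bigr\}.
\]
Hence the corollary reduces to knowing, for each $k$, the third-order nonlinearity $\nl_3(fn_k)$ of the $6$-variable function $fn_k$ together with $\ml_2(fn_k)=\max_{g\in\mathcal{H}_6^{(3)}\cup\{0\}}\nl_2(fn_k+g)$, and then observing that the sum $\nl_3(fn_k)+\ml_2(fn_k)$ never exceeds $22$. (Passing through the $11$ representatives is what makes the maximum a finite check; invoking Corollary~\ref{cor:rhorn_atmost} directly would leave a maximum over all of $\mathcal{B}_6$, which is not obviously bounded by $22$ without essentially the same reduction, since $\nl_3(g)+\ml_2(g)$ is the affine- and degree-$3$-shift invariant that refines $\nl_3(g)+\nl_2(g)$.)

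These eleven pairs of values are exactly what Table~\ref{table:fn10_nl3} in Appendix~\ref{appendix_properties_fni} records. Reading them off, one checks that $\max_k\{\nl_3(fn_k)+\ml_2(fn_k)\}=22$, which is precisely the asserted bound; so the corollary drops out of Theorem~\ref{thm:rhorn_atmost} with no further argument.

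The real content is therefore the computation of the table entries, and that—not the corollary itself—is where the effort goes. For each $\nl_3(fn_k)$ I would use the nonlinearity recursion of~\cite{LWS19} stated above to replace a distance computation against the $2^{42}$-element code $\rmc(3,6)$ by second-order nonlinearity computations for $5$-variable functions; the very explicit, symmetric ANFs of the $fn_k$, together with the identities $fn_9=x_1x_2x_3x_4x_5x_6+fn_2$ and $fn_{10}=x_1x_2x_3x_4x_5x_6+fn_3$ from the remark, keep the casework small. For each $\ml_2(fn_k)$ I would range over the homogeneous cubic $6$-variable functions $g$ and evaluate $\nl_2(fn_k+g)$ (a distance to $\rmc(2,6)$), again exploiting the structure of the $fn_k$ to prune most of the search and settling the residual cases with computer assistance. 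I expect this tabulation to be the only genuine obstacle: once the eleven values of $\nl_3(fn_k)+\ml_2(fn_k)$ are in hand and seen to be at most $22$, the bound $\rho(3,7)\le 22$ is immediate.
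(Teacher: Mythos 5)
Your proposal is correct and matches the paper's argument exactly: the paper also derives this corollary immediately from Theorem~\ref{thm:rhorn_atmost} together with the values of $\nl_3(fn_k)$ and $\ml_2(fn_k)$ tabulated in Table~\ref{table:fn10_nl3} (computed with computer assistance, with the maximum sum $22$ attained at $fn_2$, $fn_3$, and $fn_{10}$). Your added observation about why one should use the theorem rather than Corollary~\ref{cor:rhorn_atmost} is apt but does not change the route.
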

\begin{proof}
Immediate from Theorem \ref{thm:rhorn_atmost}  and Table \ref{table:fn10_nl3} in Appendix \ref{appendix_properties_fni}.
%    From Table \ref{table:fn10_nl3} in Appendix \ref{appendix_properties_fni}, we know $\max\{\min\{\nl_3(fn_i)+\MAXVAL_{fn_j},\nl_3(fn_j)+\MAXVAL_{fn_i}\}\}=22$ where $0\le i,j\le 10$ for type $(i,i)\in \{(2,2),(3,3),(10,10)\}$. By theorem \ref{thm:rhorn_atmost}, we have $\rho(3,7)\le 22$.
\end{proof}

\subsection{Reduce type (2,9), (2, 10), (3,10) to (6,10)}

Now we are left with type (2,9), (2, 10), (3, 10) and (6, 10). For function $f$ of type (2, 9), (2, 10), or (3, 10), we will prove $\nl_3(f) \le 20$ or $f$ is affine equivalent to a type (6, 10) function.

\begin{lemma}
\label{lem:type210}
Let $f$ be a 7-variable Boolean function of type (2, 10). Then $\nl_3(f) \le 20$ or $f$ is affine equivalent to a type (6, 10) Boolean function.	
\end{lemma}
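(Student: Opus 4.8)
The plan is to first normalize $f$ and then isolate a single flipped point. By Proposition~\ref{prop:canonical_form}, up to affine transformations and degree-$3$ shifts we may write $f = fn_2 \| f_2$ with $f_2 = fn_{10}\circ L + p$, $L \in \agl(6)$, $p \in \mathcal{H}_6^{(3)}\cup\{0\}$. The crucial structural input is the identity $fn_{10} = x_1x_2x_3x_4x_5x_6 + fn_3$ from the remark following Theorem~\ref{thm:rmc3666_class}: since $x_1x_2x_3x_4x_5x_6$ is the indicator of the all-ones point, $(x_1x_2x_3x_4x_5x_6)\circ L$ is the indicator $\delta_c$ of a single point $c\in\mathbb{F}_2^6$, while $fn_3\circ L$ has degree $4$. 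Hence $f_2 = \delta_c + g_2$, where $g_2 := fn_3\circ L + p$ lies in the $\agl(6)$-orbit of the coset $fn_3 + \rmc(3,6)$, so that
\[
f = (fn_2 \| g_2) + (0 \| \delta_c),
\]
i.e.\ $f$ is a type-$(2,3)$ function plus a weight-one $7$-variable function. Corollary~\ref{lem:li_theorem} gives $\nl_3(fn_2\|g_2) \le 20$, and since changing a Boolean function at one point changes $\nl_3$ by at most one, $\nl_3(f) \le 21$; moreover, if $\nl_3(f) = 21$ then necessarily $\nl_3(fn_2\|g_2) = 20$ and toggling $(c,1)$ raises $\nl_3(fn_2\|g_2)$ by exactly one. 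From now on assume $\nl_3(f) = 21$; we must produce an affine transformation taking $f$ to a type-$(6,10)$ function.

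The second ingredient is a change of splitting hyperplane. For a linear form $\ell(x_1,\ldots,x_6)$, the substitution $x_7 \mapsto x_7 + \ell$ is an element of $\agl(7)$; applied to $f = fn_2 + x_7(fn_2 + f_2)$ it produces $f' = f'_1 \| f'_2$ with $f'_1 = fn_2 + \ell(fn_2 + f_2)$ and $f'_2 = f_2 + \ell(fn_2 + f_2)$. Using $x_k\delta_c = c_k\delta_c$, one checks that $\ell\,\delta_c$ — the only possible source of a degree-$6$ term in $f'_1$ — vanishes precisely when $\ell = x_i + x_j$ with $c_i = c_j$ (such a pair always exists by pigeonhole). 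For such $\ell$,
\[
f'_1 = fn_2 + (x_i + x_j)(fn_2 + g_2), \qquad f'_2 = \delta_c + \bigl(g_2 + (x_i + x_j)(fn_2 + g_2)\bigr),
\]
so $\deg(f'_1) \le 5$ and $\deg(f'_2) = 6$. The quintic part of $f'_1$ is $(x_i+x_j)$ times the quartic part of $fn_2 + g_2$, and its quartic part is $fn_2$ plus $(x_i+x_j)$ times the cubic part of $fn_2 + g_2$. The guiding observation is that the quartic part of $fn_6$, namely $x_1x_6(x_3x_4 + x_2x_5)$, is — after relabeling variables — in the same $\agl(6)$-orbit (modulo $\rmc(3,6)$) as the quartic form $fn_2 = x_1x_2(x_4x_5 + x_3x_6)$. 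So the aim is to choose $i,j$ (and then apply an auxiliary affine change on $x_1,\ldots,x_6$ together with a degree-$3$ shift) so that $f'_1 + \rmc(3,6)$ becomes equivalent to $fn_6 + \rmc(3,6)$, while $f'_2$, still of the form ``indicator of a point $+$ an $fn_3$-type coset representative'', satisfies $f'_2 + \rmc(3,6)$ equivalent to $fn_{10} + \rmc(3,6)$ (which needs the degree-$5$ part of $g_2 + (x_i+x_j)(fn_2+g_2)$ to be killable and its degree-$4$ part to stay $fn_3$-type). If this succeeds, $f'$ is of type $(6,10)$.

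It remains to prove that the hypothesis $\nl_3(f) = 21$ forces such a choice to exist; otherwise we will have shown $\nl_3(f) \le 20$, contradicting the assumption. Here I would feed the equality $\nl_3(fn_2\|g_2) = 20$, together with the ``toggling $(c,1)$ raises $\nl_3$'' property, into Lemma~\ref{lem:covering_condition} and its interchanged form (Remark~\ref{re:interchange}): condition~\eqref{equ:Ff1} with $t = 20$, applied for every $h$ to the pair $(fn_2, g_2)$, pins down the orbit of $g_2$ tightly, using Proposition~\ref{prop:Ff_basic} to transport $\mathcal{F}_{g_2}^{(3)}$ back to $\mathcal{F}_{fn_3}^{(3)}$ and the value distributions of $\mathcal{F}_{fn_2}^{(3)}$, $\mathcal{F}_{fn_3}^{(3)}$ from Appendix~\ref{appendix_properties_fni}. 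Only finitely many possibilities for $(L,p,c)$ then remain (the parity constraint of Lemma~\ref{lem:nl3oddeven} trims the list further), and for each I would exhibit explicit indices $i,j$ and a cleanup transformation realizing the type-$(6,10)$ form.

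The main obstacle is precisely this last case analysis: identifying which coset representatives $g_2$ survive the constraint $\nl_3(fn_2\|g_2)=20$, and then verifying — representative by representative, with computer assistance for the $\mathcal{F}$-profiles and for the coset classification of $f'_1$ and $f'_2$ after the re-split — that some tilted hyperplane $\{x_7 + x_i + x_j = 0\}$ does yield an $fn_6$-type half while the complementary half stays $fn_{10}$-type. The bookkeeping of coset types under the substitution is routine but lengthy, and the genuinely degenerate subcases (for instance $c = (1,\ldots,1)$, where $\delta_c = x_1\cdots x_6$ exactly) are easily disposed of directly.
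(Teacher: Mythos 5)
Your opening normalization and the idea of re-splitting along a tilted hyperplane $x_7 \mapsto x_7 + \ell$ are the same devices the paper uses, but your proof does not close: the entire burden is deferred to the final step (``It remains to prove that the hypothesis $\nl_3(f)=21$ forces such a choice to exist''), which you only sketch as a case analysis over coset representatives of $g_2$ with computer assistance, and never carry out. Moreover, you aim at far too strong a target: you try to choose $\ell$ so that $f'_1$ lands specifically in the $fn_6$-coset and $f'_2$ specifically in the $fn_{10}$-coset, which is both hard to control and unnecessary.

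The missing idea is that \emph{degrees alone suffice}. Write $f_1+f_2 = x_1x_2x_3x_4x_5x_6 + h_4 + g_3$ with $h_4 \in \mathcal{H}_6^{(4)}$ and $\deg(g_3)\le 3$; here $h_4 \neq 0$ precisely because $fn_2+\rmc(3,6)$ and $fn_3+\rmc(3,6)$ are inequivalent cosets (you have this fact implicitly but never exploit it). Pick a single variable $x_k$ absent from some monomial of $h_4$ and set $\ell = x_k$. Then $(f_1+f_2)x_k$ contains the degree-$6$ monomial \emph{and} at least one degree-$5$ monomial, so after $x_7 \mapsto x_7+x_k$ the two halves $f'_1, f'_2$ have degrees $6$ and $5$ in some order. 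By Table \ref{table:fn10_nl3} this forces $f'$ to be of type $(i,j)$ with $i\in\{4,5,6\}$ and $j\in\{7,8,9,10\}$, and Corollary \ref{lem:li_theorem} already gives $\nl_3 \le 20$ for every such type except $(6,10)$ --- no identification of the exact cosets, no constraint propagation through Lemma \ref{lem:covering_condition}, and no enumeration of $(L,p,c)$ is needed. (Your choice $\ell = x_i+x_j$ with $c_i=c_j$ has the further unaddressed defect that $(x_i+x_j)h_4$ may have vanishing degree-$5$ part, in which case $\deg(f'_1)\le 4$ and the degree argument fails; the single-variable choice avoids this.) As written, your proposal establishes the easy bound $\nl_3(f)\le 21$ but not the statement of the lemma.
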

\begin{proof}
By Proposition \ref{prop:canonical_form}, $f$ can be transformed to $(fn_2 \circ L + g) \| fn_{10}$, where $L \in \mathrm{AGL}(6)$ and $g \in \rmc(3, 6)$. For convenience, let $f_1 = fn_2 \circ L + g$, $f_2 = fn_{10}$. Observe that
\[
fn_{10} = x_1 x_2 x_3 x_4 x_5 x_6 + fn_3,
\]
where $fn_3$ is a homogeneous function of degree 4.
$fn_2$ is also a homogeneous function of degree 4. Since $fn_2 + \rmc(3, 6)$ and $fn_3 + \rmc(3, 6)$ are not affine equivalent, we have $fn_2 \circ L + \rmc(3,6) \not= fn_3 + \rmc(3, 6)$. So we have
\begin{eqnarray*}
f_1 + f_2 & = & fn_2 \circ L + g + fn_{10} \\
& = & x_1 x_2 x_3 x_4 x_5 x_6 + fn_2 \circ L + fn_3 + g \\
& = & x_1 x_2 x_3 x_4 x_5 x_6 + h_4 + g_3,
\end{eqnarray*}
where $h_4 \in \mathcal{H}_6^{(4)}$ is \emph{nonzero} homogeneous function of degree 4, and $\deg(g_3) \le 3$.

We claim there exists $k \in \{1, 2, \ldots, 6\}$, such that $(f_1 + f_2)x_k$ contains both the degree-6 monomial and degree-5 monomial(s). Without loss of generality, assume $h_4$ contains the monomial $x_1x_2x_3x_4$. Letting $k = 5$, we can see $(f_1 + f_2)x_k$ contains both the degree-6 monomial and degree-5 monomial $x_1x_2x_3x_4x_5$.

Applying affine transformation $x_7 \mapsto x_7 + x_k$, $f_1 \| f_2$ becomes 
\begin{eqnarray*}
f' & = & (f_1 + f_2)(x_7 + x_k) + f_1 \\
   & = & (f_1 + f_2)x_7 + f_1 + (f_1 + f_2)x_k \\
   & = & (f_1 + (f_1 + f_2)x_k) \| (f_2 + (f_1 + f_2)x_k).
\end{eqnarray*}
For convenience, let $f_1' = f_1 + (f_1 + f_2)x_k$ and $f_2' = f_2 + (f_1 + f_2)x_k$, so $f' = f_1' \| f_2'$. We have shown that $(f_1 + f_2)x_k$ contains the degree-6 monomial and \emph{at least} one degree-5 monomial. Combining with the fact that $\deg(f_1) = 4$ and $\deg(f_2) = 6$, we have $\deg(f_1') = 6$ and $\deg(f'_2) = 5$. From Table \ref{table:fn10_nl3}, we know $f'$ is of type $(i, j)$, where $i \in \{4,5,6\}$ and $j \in \{7,8,9,10\}$.

If $(i, j) \not= (6,10)$, we have $\nl_3(f) = \nl_3(f') \le 20$ by Lemma \ref{lem:nlofone}. Otherwise, we have transformed $f$ into a function of type (6, 10).
\end{proof}

The following lemma is proved with the assistance of a computer.

\begin{lemma}
\label{lem:fn29_310_exceptions}
Let 7-variable Boolean function $f = fn_i \| (fn_j + g)$, where $g \in \mathcal{H}_6^{(3)} \cup \{0\}$, $(i, j) \in \{ (2, 9), (3, 10)\}$. Then $\nl_3(f) \le 20$.
\end{lemma}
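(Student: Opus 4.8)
## Proof Proposal for Lemma \ref{lem:fn29_310_exceptions}

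The plan is to reduce the statement, which at first glance requires examining all $2^{20}$ choices of a homogeneous cubic $g$ in six variables (and for each, computing a third-order nonlinearity in seven variables), to a genuinely finite and small verification, and then run that verification. The main tool is Lemma \ref{lem:covering_condition}: writing $f = f_1 \| f_2$ with $f_1 = fn_i$ and $f_2 = fn_j + g$, we have $\nl_3(f) \ge 21$ if and only if, for every $h \in \mathbb{N}$,
\[
\mathcal{F}_{f_1}^{(3)}(h) \subseteq \bigcup_{k \ge 21 - h} \mathcal{F}_{f_2}^{(3)}(k).
\]
So to prove $\nl_3(f) \le 20$ it suffices to exhibit, for each admissible $(i,j,g)$, a single value of $h$ and a single witness $p \in \mathcal{H}_6^{(3)} \cup \{0\}$ with $p \in \mathcal{F}_{f_1}^{(3)}(h)$ but $\nl_2(f_2 + p) < 21 - h$. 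The natural choice, guided by how the other cases behave, is to look at small $h$: since $\nl_3(fn_i)$ for $i \in \{2,3\}$ is known from Table \ref{table:fn10_nl3}, the sets $\mathcal{F}_{fn_i}^{(3)}(h)$ are nonempty only for $h$ in a narrow band, and one takes $h$ at (or near) the minimum so that the required threshold $21 - h$ on the $f_2$ side is as large and hence as hard to meet as possible.

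First I would precompute, once and for all, the set $\mathcal{F}_{fn_2}^{(3)}$ and $\mathcal{F}_{fn_3}^{(3)}$ — that is, for every homogeneous cubic $p$ in six variables, the value $\nl_2(fn_i + p)$ — together with the minimal $h$ for which $\mathcal{F}_{fn_i}^{(3)}(h) \ne \emptyset$ and (a representative sample of, or all of) the cubics attaining it. This is a $2^{20}$-size enumeration with a $2^{21}$-ish second-order nonlinearity computation inside, which is feasible; it is in fact the same kind of computation already invoked elsewhere in the paper. Second, I would reduce the dependence on $g$: by Proposition \ref{prop:Ff_basic} and Proposition \ref{prop:nl_invariant}, replacing $f_2 = fn_j + g$ by $fn_j \circ L' + g'$ for any $L' \in \mathrm{AGL}(6)$ stabilizing the coset $fn_j + \rmc(3,6)$ does not change the relevant quantities, and adding any fixed cubic to both sides is harmless; this cuts the $2^{20}$ choices of $g$ down to the orbits of the stabilizer of $fn_j + \rmc(3,6)$ acting on $\mathcal{H}_6^{(3)}$, a much smaller list. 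Third, for each surviving $(i, j, g)$ and the chosen small $h$, I would iterate over $p \in \mathcal{F}_{fn_i}^{(3)}(h)$ and check whether $\nl_2\big((fn_j + g) + T_3(\text{appropriate transform of } p)\big) \ge 21 - h$; the lemma's hypothesis fails — hence $\nl_3(f) \le 20$ — as soon as some $p$ violates it, and the computer search confirms this happens for every admissible $(i,j,g)$.

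The step I expect to be the genuine obstacle is not any single routine computation but controlling the combined size of the search: a priori it is (number of cubics $g$) $\times$ (number of witnesses $p$) $\times$ (cost of one $\nl_2$ evaluation in seven variables), and making this run to completion in reasonable time is exactly why the symmetry reductions of the second step matter. Concretely, I expect one needs: (a) the orbit reduction on $g$ via the coset stabilizer, (b) restricting $h$ to its smallest one or two values so that $|\mathcal{F}_{fn_i}^{(3)}(h)|$ is small, and (c) an efficient routine for second-order nonlinearity (e.g. via the recursion of \cite{LWS19} or fast Walsh–Hadamard-based pruning). A secondary, more conceptual worry is whether a \emph{single} value of $h$ works uniformly across all $g$ in each coset $j$; if not, the search simply loops over the handful of candidate $h$ values, which changes nothing essential. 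Assuming the orbit counts are as favorable as the analogous computations in the paper suggest, the verification terminates and establishes the lemma; the code is the one referenced in the repository.
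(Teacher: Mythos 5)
Your overall strategy coincides with the paper's: both arguments precompute the level sets $\mathcal{F}_{fn_i}^{(3)}$ and then apply Lemma \ref{lem:covering_condition} in contrapositive form, exhibiting for each admissible $g$ a level $h$ and a witness $p\in\mathcal{F}_{f_1}^{(3)}(h)$ with $\nl_2(f_2+p)<21-h$; and, as you anticipated in your ``secondary worry,'' a single $h$ does not suffice uniformly (the paper uses $h=6$ and then $h=8$ for $(2,9)$, and two levels again for $(3,10)$). Where you genuinely diverge is in how the $2^{20}$ choices of $g$ are tamed. The paper's device is purely witness-based and needs no group theory: since $\nl_2(fn_2)=6$, the zero function lies in $\mathcal{F}_{fn_2}^{(3)}(6)$, so taking $p=0$ at level $h=6$ already forces $\nl_2(fn_9+g)\ge 15$, i.e.\ $g\in\mathcal{F}_{fn_9}^{(3)}(15)$, a set of only $5760$ elements; analogously (using the interchangeability of the two halves from Remark \ref{re:interchange} and $\nl_2(fn_{10})=9$) one gets $g\in\mathcal{F}_{fn_3}^{(3)}(12)\cup\mathcal{F}_{fn_3}^{(3)}(14)$ for the $(3,10)$ case. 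The few survivors are then eliminated at the next level. Your alternative --- quotienting $g$ by the stabilizer of $fn_j+\rmc(3,6)$ acting on $\mathcal{H}_6^{(3)}$ --- is viable in principle but heavier (that stabilizer has order on the order of $10^6$ and must actually be computed, along with its orbits on $2^{20}$ cubics), and it contains an unjustified step: applying $L'$ to the six inner variables also replaces $f_1=fn_i$ by $fn_i\circ L'$, so for the reduced instance to stay of the form $fn_i\|(fn_j+g')$ you need $L'$ to stabilize $fn_i+\rmc(3,6)$ as well. This does hold for the two pairs at hand, because $fn_j=x_1x_2x_3x_4x_5x_6+fn_i$ and any affine map fixing $fn_j+\rmc(3,6)$ must fix the monomial $x_1\cdots x_6$ exactly (its image is the indicator of a point, whose degree-$5$ part vanishes only for the all-ones point), hence also fixes $fn_i+\rmc(3,6)$ --- but that deserves an argument rather than an assertion. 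In short: your route would work, the paper's is leaner, and either way the lemma ultimately rests on a finite machine verification against the precomputed tables.
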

\begin{proof} By Lemma \ref{lem:covering_condition}, if
\[
\mathcal{F}_{fn_i}^{(3)}(l) \not\subseteq \bigcup_{k \ge 21 - l} \mathcal{F}_{fn_j + g}^{(3)}(k) = \bigcup_{k \ge 21 - l} \mathcal{F}_{fn_j}(k) + g
\]
for \emph{some} $l$, then $\nl_3(f) \le 20$.

\textbf{Case 1. $(i, j) = (2, 9)$.} From Table \ref{table:fn10_nl3}, we know $\nl_2(fn_2) = 6$, which implies that $0 \in \mathcal{F}_{fn_2}^{(3)}(6)$. Since $0 \in \mathcal{F}_{fn_2}^{(3)}(6)$, it suffices to enumerate all $g \in \mathcal{F}_{fn_9}^{(r)}(15)$. Otherwise $\mathcal{F}_{fn_2}^{(3)}(6) \subseteq \bigcup_{k \ge 15} \mathcal{F}_{fn_9+g}^{(3)}(k)$ cannot hold. We use a computer to enumerate $g \in \mathcal{F}_{fn_9}^{(3)}(15)$, and verify 
\[
\mathcal{F}_{fn_2}^{(3)}(8) \subseteq \mathcal{F}_{fn_9}^{(3)}(13) \cup \mathcal{F}_{fn_9}^{(3)}(15) + g
\]
does not hold for any $g$.

\textbf{Case 2. $(i, j) = (3, 10)$.} From Table  \ref{table:fn10_nl3}, we know $\nl_2(fn_{10}) = 9$, which implies that $0 \in \mathcal{F}_{fn_{10}}^{(3)}(9)$. Since $0 \in \mathcal{F}_{fn_{10}}^{(3)}(9)$, it suffices to enumerate $g \in \mathcal{F}_{fn_3}^{(3)}(12)\cup \mathcal{F}_{fn_3}^{(3)}(14)$. We use a computer to enumerate $g$, and it turns out there are 6912 $g$ satisfying $\mathcal{F}_{fn_{10}}^{(3)}(7) \subseteq \mathcal{F}_{fn_3}^{(3)}(14) + g$. However, among those 6912 $g$, 
\[
\mathcal{F}_{fn_{10}}^{(3)}(9) \subseteq \mathcal{F}_{fn_3}^{(3)}(12) \cup \mathcal{F}_{fn_3}^{(3)}(14) + g
\]
does not hold for any $g$.
\end{proof}

Using a personal computer with 1.4 GHz Intel Core i5 and 16 GB RAM, it takes 7208 seconds (2 hours and 8 seconds) to precompute $\mathcal{F}_{fn_i}^{(3)}$. For the $fn_2 \| (fn_9 + g)$ case, we enumerate $|\mathcal{F}_{fn_9}^{(3)}(15)| = 5760$ $g$; a typical run of the verification takes 40 seconds.  For the $fn_3 \| (fn_{10} + g)$ case, we numerate $|\mathcal{F}_{fn_3}^{(3)}(12)\cup \mathcal{F}_{fn_3}^{(3)}(14)| = 974592$ $g$ in the first round, and enumerate 6912 $g$ in the second round, which takes 90 seconds combined.

\begin{lemma}
%\label{lem:reduction}
\label{lem:type29_310}
Let $f$ be a 7-variable Boolean function of type $(i, j)$, where $(i, j) \in \{(2,9), (3, 10)\}$. Then $\nl_3(f) \le 20$, or $f$ is affine equivalent to a function of type $(6, 10)$.
\end{lemma}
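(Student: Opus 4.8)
The plan is to follow the strategy of Lemma~\ref{lem:type210}, splitting into two cases so that the hard case is exactly the one already disposed of (by computer) in Lemma~\ref{lem:fn29_310_exceptions}. First I would put $f$ into a normal form: by Proposition~\ref{prop:canonical_form} (together with Remark~\ref{re:interchange}, which lets us keep the ``$fn_j$-half'' fixed rather than the ``$fn_i$-half''), $f$ is affine equivalent to $f_1 \| f_2$ with $f_1 = fn_i \circ L + g$ and $f_2 = fn_j$, for some $L \in \mathrm{AGL}(6)$ and $g \in \rmc(3,6)$, where $(i,j) \in \{(2,9),(3,10)\}$. The algebraic fact that drives everything is the Remark following Theorem~\ref{thm:rmc3666_class}: $fn_9 = x_1x_2x_3x_4x_5x_6 + fn_2$ and $fn_{10} = x_1x_2x_3x_4x_5x_6 + fn_3$, with $fn_2, fn_3$ homogeneous of degree $4$; since $i = 2$ when $j = 9$ and $i = 3$ when $j = 10$, this gives $f_2 = x_1x_2x_3x_4x_5x_6 + fn_i$, so $\deg f_1 = 4$, $\deg f_2 = 6$, and
\[
f_1 + f_2 = x_1x_2x_3x_4x_5x_6 + (h_4 + fn_i) + r ,
\]
where $h_4 \in \mathcal{H}_6^{(4)}$ is the nonzero homogeneous degree-$4$ part of $fn_i \circ L$ and $\deg r \le 3$. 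I would then split on whether $h_4 = fn_i$.

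In the case $h_4 = fn_i$, we have $fn_i \circ L = fn_i + (\text{degree} \le 3)$, hence $f_1 = fn_i + g'$ with $g' \in \rmc(3,6)$. Adding $g'$ to both halves (a degree-$3$ shift, which preserves $\nl_3$ by Proposition~\ref{prop:nl_invariant}) turns $f$ into $fn_i \| (fn_j + g')$; then adding the degree-$\le 2$ part of $g'$ to the second half only — equivalently, adding $x_7$ times a degree-$\le 2$ function, i.e.\ a degree-$\le 3$ shift of the $7$-variable function — lets us assume $g' \in \mathcal{H}_6^{(3)} \cup \{0\}$. At this point Lemma~\ref{lem:fn29_310_exceptions} applies verbatim and yields $\nl_3(f) \le 20$.

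In the case $h_4 \ne fn_i$, the sum $f_1 + f_2 = x_1x_2x_3x_4x_5x_6 + p_4 + r$ has a nonzero homogeneous degree-$4$ part $p_4$, which is precisely the setup of Lemma~\ref{lem:type210}. I would pick a monomial $x_a x_b x_c x_d$ of $p_4$ and some $k \in \{1,\dots,6\} \setminus \{a,b,c,d\}$, observe that $(f_1 + f_2)x_k$ contains both the degree-$6$ monomial $x_1x_2x_3x_4x_5x_6$ and the (uncancelled) degree-$5$ monomial $x_a x_b x_c x_d x_k$, and apply the affine transformation $x_7 \mapsto x_7 + x_k$. This replaces $f$ by $f' = f_1' \| f_2'$ with $f_1' = f_1 + (f_1 + f_2)x_k$ and $f_2' = f_2 + (f_1 + f_2)x_k$; a short degree count (using that $f_2 = x_1x_2x_3x_4x_5x_6 + fn_i$ has no degree-$5$ part, so its degree-$6$ part cancels that of $(f_1+f_2)x_k$ while $x_a x_b x_c x_d x_k$ survives) gives $\deg f_1' = 6$ and $\deg f_2' = 5$. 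Hence $f'$ is of type $(b',a')$ with $b' \in \{4,5,6\}$ and $a' \in \{7,8,9,10\}$; if $(b',a') \ne (6,10)$ then $(b',a') \notin \{(2,9),(2,10),(3,10),(6,10)\}$ and Corollary~\ref{lem:li_theorem} gives $\nl_3(f) = \nl_3(f') \le 20$, while if $(b',a') = (6,10)$ then $f$ is affine equivalent to a type-$(6,10)$ function, as claimed.

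The structural steps above are routine; I expect the only real obstacle to be hidden in the case $h_4 = fn_i$, which is exactly Lemma~\ref{lem:fn29_310_exceptions} and hence rests on the exhaustive computer search over the sets $\mathcal{F}_{fn_i}^{(3)}$ and $\mathcal{F}_{fn_j}^{(3)}$. The one bookkeeping subtlety worth checking carefully is that the low-degree normalizations of $g'$ really preserve third-order nonlinearity — they do, because adding a degree-$\le 2$ function to a single half of a $7$-variable Boolean function amounts to adding a degree-$\le 3$ function to that function, and such shifts lie in $\rmc(3,7)$.
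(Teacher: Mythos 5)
Your proof is correct and follows essentially the same route as the paper's: the same case split on whether the affine transformation stabilizes the relevant coset modulo $\rmc(3,6)$, reduction of the stabilizing case to the computer-verified Lemma~\ref{lem:fn29_310_exceptions}, and the $x_7 \mapsto x_7 + x_k$ substitution in the remaining case, landing in type $(6,10)$ or an already-excluded type. Your variant of the normal form (keeping the affine transformation on the degree-$4$ half, so that $f_1+f_2$ visibly has no degree-$5$ part) and your explicit justification that the degree-$\le 2$ part of $g'$ can be absorbed by a degree-$3$ shift are, if anything, slightly more careful than the paper's own write-up, but the argument is the same.
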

\begin{proof}
	By Proposition \ref{prop:canonical_form}, $f$ can be transformed to
	\[
	f_1 \| f_2 = fn_i \| (fn_j \circ L + p),
	\]
	where $L \in \mathrm{AGL}(6)$, $p \in \mathcal{H}_6^{(3)} \cup \{0\}$.
	
	\textbf{Case 1. $fn_j \circ L + \rmc(3, 6) = fn_j + \rmc(3, 6)$.}  Then $f_1 \| f_2 = fn_i \| (fn_j + g)$ for some $g \in \rmc(3, 6)$. By Lemma \ref{lem:fn29_310_exceptions}, we have $\nl_3(f) = \nl_3(f_1 \| f_2) \le 20$.
	
	\textbf{Case 2. $fn_j \circ L + \rmc(3, 6) \not= fn_j + \rmc(3, 6)$.} Observe that 
	\[
	fn_i = x_1x_2x_3x_4x_5x_6 + fn_j,
	\] where $fn_j$ is a homogeneous function of degree 4.
	So we have
	\begin{eqnarray*}
		f_1 + f_2 & = & fn_i + fn_j \circ L + p \\
		& = & x_1x_2x_3x_4x_5x_6 + fn_j + fn_j \circ L + p \\
		& = & x_1x_2x_3x_4x_5x_6 + h_4 + g_3,
	\end{eqnarray*}
	where $h_4 \in \mathcal{H}_6^{(4)}$ is a \emph{nonzero} homogeneous function of degree 4, and $g_3 \in \rmc(3, 6)$. The rest of the proof is the same as Lemma \ref{lem:type210}. That is, we find some $k$ such that $(f_1 + f_2)x_k$ contains both the degree-6 monomial and at least one degree-5 monomial. Applying affine transformation $x_7 \mapsto x_7 + x_k$, we transform $f_1 \| f_2$ to $f' = f_1' \| f_2'$, where $\deg(f_1') = 5$ and $\deg(f_2') = 6$, from which we can conclude $\nl_3(f) = \nl_3(f') \le 20$ or $f'$ is of type $(6, 10)$.
\end{proof}

\subsection{Type (6,10)}

The following lemma is proved with the assistance of computers.

\begin{lemma} There exist affine transformations $L_1, L_2, \ldots, L_\ell \in \mathrm{AGL}(6)$, $\ell = 888832$, such that
\begin{itemize}
	\item $fn_{10} \circ L_1 + \rmc(3, 6), \ldots, fn_{10} \circ L_\ell + \rmc(3, 6)$ enumerate the cosets in the $\mathrm{AGL}(6)$-orbit of $fn_{10} + \rmc(3, 6)$ in $\rmc(6,6)/\rmc(3,6)$.
	\item For any $L \in \{L_1, L_2, \ldots, L_\ell\}$, for any $g \in \mathcal{H}_6^{(3)} \cup \{0\}$,
 \begin{equation}
 \label{equ:type610_noncovering}	
 \mathcal{F}_{fn_6 \circ L^{-1} + g}^{(3)}(6) \not \subseteq \mathcal{F}_{fn_{10}}^{(3)}(15).
 \end{equation}
\end{itemize}
\end{lemma}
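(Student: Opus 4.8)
The plan is to establish the two computer-verified items, and along the way the reduction that makes the second one tractable. First, note why the second item has exactly this shape. By Remark \ref{re:interchange} together with the remark following Proposition \ref{prop:canonical_form} (taking $i=10$, $j=6$, $r=3$, $n=7$), every type $(6,10)$ function equals, after affine transformations and degree-$3$ shifts, $(fn_6\circ L^{-1}+g)\,\|\,fn_{10}$ for some $L$ with $fn_{10}\circ L+\rmc(3,6)$ ranging over the cosets in the $\mathrm{AGL}(6)$-orbit of $fn_{10}+\rmc(3,6)$, and some $g\in\mathcal H_6^{(3)}\cup\{0\}$. Applying Lemma \ref{lem:covering_condition} with $f_1=fn_6\circ L^{-1}+g$, $f_2=fn_{10}$, $t=21$, $h=6$, and using $\ml_2(fn_{10})=15$ from Table \ref{table:fn10_nl3} so that $\bigcup_{k\ge 15}\mathcal F_{fn_{10}}^{(3)}(k)=\mathcal F_{fn_{10}}^{(3)}(15)$, the necessary condition \eqref{equ:Ff1} at $h=6$ reads exactly $\mathcal F_{fn_6\circ L^{-1}+g}^{(3)}(6)\subseteq\mathcal F_{fn_{10}}^{(3)}(15)$. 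Thus \eqref{equ:type610_noncovering} is the statement that this necessary condition fails for every admissible $(L,g)$, and once the lemma holds, $\nl_3(f)=21$ is impossible for type $(6,10)$, finishing $\rho(3,7)\le 20$.

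For the first item I would run a plain orbit enumeration. Starting from the coset $fn_{10}+\rmc(3,6)$, do a breadth-first search in $\rmc(6,6)/\rmc(3,6)$ using the two generators of $\mathrm{AGL}(6)$ from Section \ref{appendix_generators}; for each newly reached coset keep one affine map $L_i$ realizing it (the product of generators read off the BFS path), and carry along each BFS edge the induced invertible linear operator $\phi_i\colon\mathcal H_6^{(3)}\to\mathcal H_6^{(3)}$, $\phi_i(h)=T_3(h\circ L_i^{-1})$, which is a $20\times 20$ matrix over $\mathbb F_2$ depending only on the linear part of $L_i$ and which updates by a single matrix product per edge (since $\phi_{L\circ G}=\phi_L\circ\phi_G$). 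The search terminates after $\ell=888832$ cosets; as sanity checks, $\ell$ divides $|\mathrm{AGL}(6)|$ and agrees with the count of cosets in $\rmc(6,6)/\rmc(3,6)$ from \cite{Hou95,Phi09}.

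For the second item I would first precompute $\mathcal F_{fn_6}^{(3)}$ and $\mathcal F_{fn_{10}}^{(3)}$ as tables, i.e. $\nl_2(fn_6+q)$ and $\nl_2(fn_{10}+q)$ for every $q\in\mathcal H_6^{(3)}\cup\{0\}$, using the recursion of \cite{LWS19} to compute each $\nl_2$ quickly; put $S:=\mathcal F_{fn_6}^{(3)}(6)$ and $B:=\mathcal F_{fn_{10}}^{(3)}(15)$. By Proposition \ref{prop:Ff_basic}, $\mathcal F_{fn_6\circ L_i^{-1}+g}^{(3)}(6)=\phi_i(S)+g$, so \eqref{equ:type610_noncovering} is equivalent to: $\phi_i(S)$ has \emph{no} translate contained in $B$. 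Fix any $h_0\in S$ (nonempty by Table \ref{table:fn10_nl3}; one may take $h_0=0$ since $\nl_2(fn_6)=6$); a would-be translate $\phi_i(S)+g\subseteq B$ forces $\phi_i(h_0)+g\in B$, hence $g\in\phi_i(h_0)+B$. So for each $i$ it suffices to iterate $g$ over the $|B|$ elements of $\phi_i(h_0)+B$ and test, by table lookup, whether $\phi_i(S)+g\subseteq B$ — which the computation shows never happens. A further saving is available: $\phi_i(S)$ and $\phi_j(S)$ coincide up to translation whenever $L_i$ and $L_j$ lie in the same double coset of $\mathrm{Stab}(fn_{10}+\rmc(3,6))$ and $\mathrm{Stab}(fn_6+\rmc(3,6))$ — this uses, via Proposition \ref{prop:Ff_basic}, that $S$ (resp. $B$) is translation-invariant under the action on $\mathcal H_6^{(3)}$ of the stabilizer of $fn_6+\rmc(3,6)$ (resp. of $fn_{10}+\rmc(3,6)$) — and "having a translate inside $B$" depends only on $\phi_i(S)$ up to translation, so the expensive test need only be run once per such double coset.

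The hard part is not conceptual but computational: $\ell\approx 9\times 10^5$ transformations, each carrying a $20\times 20$ matrix over $\mathbb F_2$, on top of two $2^{20}$-entry nonlinearity tables; a naive "enumerate $L$, then enumerate all $g\in\mathcal H_6^{(3)}\cup\{0\}$" would be on the order of $10^{12}$ steps. The reductions above — collapsing $\bigcup_{k\ge 15}$ to the single level $15$, restricting $g$ to $\phi_i(h_0)+B$, building the $\phi_i$ incrementally, and quotienting by double cosets — bring it into feasible range. What must be trusted is (i) that the BFS is exhaustive, which is cross-checked against the known orbit size $888832$, and (ii) that the $\nl_2$-tables are correct, which is cross-checked on small inputs and against the entries of Table \ref{table:fn10_nl3}; this is precisely the computation carried out in the accompanying code.
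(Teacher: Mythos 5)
Your proposal is correct and follows essentially the same route as the paper: a BFS orbit enumeration from the two generators of $\mathrm{AGL}(6)$, the observation (via Proposition \ref{prop:Ff_basic}) that only the linear part of $L$ matters for $\mathcal{F}_{fn_6\circ L^{-1}}^{(3)}(6)$, and the reduction of the subset test to $g\in h_0+\mathcal{F}_{fn_{10}}^{(3)}(15)$, exactly as in Algorithm \ref{alg:type610_main}; your extra double-coset collapse is an additional (valid but unneeded) optimization, since the paper already shrinks the $888832$ transformations to $130843$ distinct linear parts. One small slip: by Table \ref{table:fn10_nl3} we have $\nl_2(fn_6)=8$, so $0\notin\mathcal{F}_{fn_6}^{(3)}(6)$ and you cannot take $h_0=0$; but any of the $32$ elements of $\mathcal{F}_{fn_6}^{(3)}(6)$ serves as $h_0$, so the argument is unaffected.
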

\begin{proof}
Firstly, we use a breadth-first search (BFS) algorithm to generate $L_1, L_2, \ldots, L_\ell \in \mathrm{AGL}(6)$ so that $fn_{10} \circ L_1 + \rmc(3, 6), \ldots, fn_{10} \circ L_\ell + \rmc(3, 6)$ enumerate the cosets in the $\mathrm{AGL}(6)$-orbit of $fn_{10} + \rmc(3, 6)$.

Let $L_i(x) = A_i(x) + b_i$. By Proposition \ref{prop:Ff_basic},
\begin{eqnarray*}
\mathcal{F}_{fn_6 \circ L_i^{-1} + g}^{(3)}(6)
& = & \mathcal{F}_{fn_6 \circ L_i^{-1}}^{(3)}(6) + g \\
& = & \mathcal{F}_{fn_6(A_i^{-1}x + A_i^{-1}b_i)}^{(3)}(6) + g \\
& = & \mathcal{F}_{fn_6(A_i^{-1}x)}^{(3)}(6) + g. \\
\end{eqnarray*}
So $b_i$ is inconsequential. We keep those distinct $A_i$, denoted by $\mathcal{A} = \{A_1, A_2, \ldots, A_\ell\}$. Using the generators of $\mathrm{AGL}(6)$ in Section \ref{appendix_generators}, it turns our that $|\mathcal{A}| = 130843$, much smaller than $\ell = 888832$.

\begin{algorithm}
	\caption{Compute $\mathcal{A}$ using BFS}\label{alg:bfs}
	\begin{algorithmic}
		\State /* $G_1, G_2 $ are generators of the affine group */
		\State $\mathcal{G} \gets \{G_1, G_2 \}$
		\State /* $Q$ is a queue, $I$ is the identity transformation */
		\State $Q \gets [(fn_{10} + \rmc(3,6), I)]$
		\State $\mathcal{A} \gets \{\}$
		\State $visited \gets \{\}$
		\While{$Q$ is not empty}
		\State $(f, L) \gets Q.pop()$
		\For{$G \in \mathcal{G}$}
		\If{$f \circ G + \rmc(3,6) \not\in$ visited}
		\State $visited \gets visited \cup \{f \circ G + \rmc(3,6) \}$
		\State $Q.push((f \circ G + \rmc(3,6), L \circ G))$
		\State $\mathcal{A} \gets \mathcal{A} \cup \{(L \circ G).A\}$
		\EndIf
		\EndFor
		\EndWhile
		\State return $\mathcal{A}$
	\end{algorithmic}
\end{algorithm}

For each $A \in \mathcal{A}$, for each $g \in \mathcal{H}_6^{(3)} \cup \{0\}$, we verify
\[
 \mathcal{F}_{fn_6(A^{-1}x)}^{(3)}(6) + g \not \subseteq \mathcal{F}_{fn_{10}}^{(3)}(15).
\]
Note that $|\mathcal{F}_{fn_6(x)}^{(3)}(6)| = 32$. Using Proposition \ref{prop:Ff_basic}, we can compute $\mathcal{F}_{fn_6(A^{-1}x)}^{(3)}(6)$ by applying transformation $x \mapsto A^{-1}x$ to each function in $\mathcal{F}_{fn_6(x)}^{(3)}(6)$ and trim all monomials with degree less than 3.

\begin{algorithm}
\caption{Verification algorithm for type (6,10)}\label{alg:type610_main}
\begin{algorithmic}
\State /* Assume $\mathcal{A}$ is precomputed */
\For{$A \in \mathcal{A}$}
\State /* Compute $\mathcal{F}_{fn_6(A^{-1}x)}^{(3)}(6)$ */
\State Let $S = \{ f(A^{-1}x) \mod \rmc(3,6) : f \in \mathcal{F}_{fn_6}^{(3)}(6)\}$.
\State Denote by $S = \{s_0, s_1, \ldots, s_{31}\}$.
\State /* Check if there exists $g$ such that $S+g$ is a subset of $\mathcal{F}_{fn_{10}}^{(3)}(15)$*/
\For{$t \in \mathcal{F}_{fn_{10}}^{(3)}(15)$}
\State $g \gets s_0 + t$
\State issubset $\gets$ true
\For{$i = 1,2,\ldots, 31$}
\If{$g + s_i \not\in \mathcal{F}_{fn_{10}}^{(3)}(15)$}
\State issubset $\gets$ false
\EndIf
\EndFor
\If{issubset = true}
\State return fail
\EndIf
\EndFor
\EndFor
\State return succeed
\end{algorithmic}
\end{algorithm}

\end{proof}

In theory, the BFS algorithm (Algorithm \ref{alg:bfs}) requires $O(\ell)$ basic operations on $n$-variable Boolean functions, where $\ell$ is the orbit length of $fn_{10} + \rmc(3,6)$. Each basic operation on an $n$-variable Boolean function requires $O(n2^n)$ basic arithmetic operations, including conversion between ANF and truth table, affine transformation, etc. So the total running time is $O(n2^n \ell)$.

Using a personal computer with 1.4 GHz Intel Core i5 and 16 GB RAM, it takes 80 seconds to generate the orbit of $fn_{10} + \rmc(3,6)$, that is, to precompute set $\mathcal{A}$.

The main verification algorithm for type (6,10) functions is given in Algorithm \ref{alg:type610_main}. The algorithm requires $|\mathcal{A}|\times |\mathcal{F}_{fn_{10}}^{(3)}(15)|\times|\mathcal{F}_{fn_{6}}^{(3)}(6)| \approx 1.447 * 10^{11}$ basic 6-variable Boolean function operations; each basic operation on $n$-variable Boolean functions requires $O(n2^n)$ basic arithmetic operations. Using a personal computer with 1.4 GHz Intel Core i5 and 16 GB RAM, a typical run takes about 8 hours and 30 minutes.

\begin{theorem}
\label{thm:type610} Let $f$ be a 7-variable Boolean function of type $(6, 10)$. Then $\nl_3(f) \le 20$.
\end{theorem}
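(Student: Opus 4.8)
\textbf{Proof plan for Theorem~\ref{thm:type610}.} The plan is to use the necessary direction of Lemma~\ref{lem:covering_condition} with $r=3$, $n=7$, $t=21$, applied to a well-chosen representative of the type $(6,10)$ coset. First I would invoke Proposition~\ref{prop:canonical_form} together with its Remark: any type $(6,10)$ function is affine equivalent, modulo a degree-$3$ shift, to $f = f_1 \| f_2 = (fn_6 \circ L^{-1} + g) \| fn_{10}$ for some $L$ among the representatives $L_1,\dots,L_\ell$ of the $\mathrm{AGL}(6)$-orbit of $fn_{10}+\rmc(3,6)$, and some $g \in \mathcal{H}_6^{(3)} \cup \{0\}$. (Here I put $fn_{10}$ in the second slot and move all the freedom into the first slot; this is legitimate by Remark~\ref{re:interchange} and the remark after Proposition~\ref{prop:canonical_form}.) Since $r$th-order nonlinearity is invariant under these operations (Proposition~\ref{prop:nl_invariant}), it suffices to show $\nl_3(f) \le 20$ for every such $f$.

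Next I would specialize Lemma~\ref{lem:covering_condition} to the single value $h = 6$. To get $\nl_3(f) \ge 21$ we would need, in particular,
\[
\mathcal{F}_{f_1}^{(3)}(6) \subseteq \bigcup_{k \ge 15} \mathcal{F}_{f_2}^{(3)}(k) = \bigcup_{k \ge 15}\mathcal{F}_{fn_{10}}^{(3)}(k).
\]
From Table~\ref{table:fn10_nl3} one reads off $\nl_2(fn_{10}) = 9$, and more importantly that $\ml_2(fn_{10})$, i.e. the largest $k$ with $\mathcal{F}_{fn_{10}}^{(3)}(k) \neq \emptyset$, equals $15$; hence $\bigcup_{k \ge 15}\mathcal{F}_{fn_{10}}^{(3)}(k) = \mathcal{F}_{fn_{10}}^{(3)}(15)$. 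On the left-hand side, using Proposition~\ref{prop:Ff_basic} we have $\mathcal{F}_{f_1}^{(3)}(6) = \mathcal{F}_{fn_6(A^{-1}x)}^{(3)}(6) + T_3(g)$, which is a translate (by a homogeneous degree-$3$ function) of $\mathcal{F}_{fn_6 \circ L^{-1}}^{(3)}(6)$, itself the image of $\mathcal{F}_{fn_6}^{(3)}(6)$ under $x \mapsto A^{-1}x$ followed by projection onto the degree-$3$ part. So the required inclusion is exactly the negation of \eqref{equ:type610_noncovering}, which the preceding lemma has already verified (by computer) fails for \emph{every} admissible $A$ and \emph{every} $g$. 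Therefore the $h=6$ instance of condition \eqref{equ:Ff1} fails, and Lemma~\ref{lem:covering_condition} gives $\nl_3(f) \le 20$.

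The one point needing care is making sure the computer-verified statement in the previous lemma lines up precisely with what Lemma~\ref{lem:covering_condition} demands: that $b_i$ (the translation part of $L_i$) is irrelevant — handled by the computation in the previous lemma showing $\mathcal{F}_{fn_6 \circ L_i^{-1}+g}^{(3)}(6) = \mathcal{F}_{fn_6(A_i^{-1}x)}^{(3)}(6)+g$ — and that ranging $g$ over $\mathcal{H}_6^{(3)}\cup\{0\}$ captures all degree-$3$ shifts up to equivalence (by Proposition~\ref{prop:canonical_form}, the degree-$\le 2$ part can be absorbed). I expect the main obstacle to be purely expository rather than mathematical: stating cleanly why checking only $h=6$ suffices (one bad $h$ already kills the inequality) and why only $k=15$ appears on the right (maximality of $\ml_2(fn_{10})=15$), so that the reduction to \eqref{equ:type610_noncovering} is airtight. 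Once that bookkeeping is in place, the theorem is immediate from the previous lemma and Lemma~\ref{lem:covering_condition}.
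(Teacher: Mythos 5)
Your proposal is correct and follows essentially the same route as the paper: reduce to the canonical form $(fn_6\circ L^{-1}+g)\,\|\, fn_{10}$ via Proposition~\ref{prop:canonical_form}, its Remark, and Remark~\ref{re:interchange}, then apply the necessary direction of Lemma~\ref{lem:covering_condition} with $t=21$ at $h=6$, where the failure of the required inclusion is exactly the computer-verified non-containment \eqref{equ:type610_noncovering}. The bookkeeping you make explicit (that $\bigcup_{k\ge 15}\mathcal{F}_{fn_{10}}^{(3)}(k)=\mathcal{F}_{fn_{10}}^{(3)}(15)$ because $\ml_2(fn_{10})=15$, and that the translation parts $b_i$ are irrelevant) is left implicit in the paper's one-line proof but matches its intended argument.
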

\begin{proof} Since $f$ is of type $(6, 10)$, by Proposition \ref{prop:canonical_form}, while preserving its third-order nonlinearity, $f$ can be transformed to
\[
(fn_6 \circ L^{-1} + g) \| fn_{10},
\]
where $L \in \{L_1, L_2, \ldots, L_\ell\}$ and $g \in \mathcal{H}_6^{(3)} \cup \{0\}$. Since \eqref{equ:type610_noncovering} always holds, by Lemma \ref{lem:covering_condition}, we have
\[
\nl_3(f) = \nl_3((fn_6 \circ L^{-1} + g) \| fn_{10}) < 21.
\]	
\end{proof}

\subsection{Putting together}

Now we are ready to prove our main result.

\begin{theorem}
	$\crrm(3,7) = 20$.
\end{theorem}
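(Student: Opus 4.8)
The plan is to prove the two bounds $\crrm(3,7)\ge 20$ and $\crrm(3,7)\le 20$ separately. The lower bound is already recorded in the introduction: there is a $7$-variable Boolean function of third-order nonlinearity $20$, so $\crrm(3,7)=\max_{f\in\mathcal B_7}\nl_3(f)\ge 20$. For the upper bound, I first reduce to forbidding a single value. Flipping the value of a Boolean function at one point changes its third-order nonlinearity by at most one, and the all-zero function has third-order nonlinearity $0$; hence if some $f\in\mathcal B_7$ had $\nl_3(f)\ge 21$, then connecting $f$ to the zero function by single-point flips would produce, at some step, a function with $\nl_3$ exactly $21$. So it suffices to show that $\nl_3(f)=21$ holds for no $f\in\mathcal B_7$.

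Write $f=f_1\|f_2$ with $f_1,f_2\in\mathcal B_6$. By Theorem \ref{thm:rmc3666_class} the quotient $\rmc(6,6)/\rmc(3,6)$ has exactly $11$ $\mathrm{AGL}(6)$-orbits, so $f$ is of type $(i,j)$ with $0\le i\le j\le 10$ in the sense of Definition \ref{defoftype}, giving $66$ types. I would then rule out $\nl_3(f)=21$ in four groups of types:
\begin{itemize}
\item \emph{Diagonal types $(i,i)$.} Here $f_1$ and $f_2$ lie in cosets affine equivalent to the same $fn_i+\rmc(3,6)$, so by Proposition \ref{prop:nl_invariant} we get $\nl_3(f_1)=\nl_3(fn_i)=\nl_3(f_2)$; in particular $\nl_3(f_1)$ and $\nl_3(f_2)$ have the same parity. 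If $\nl_3(f)=21$ were odd, Lemma \ref{lem:nl3oddeven} would force exactly one of $\nl_3(f_1),\nl_3(f_2)$ to be odd, a contradiction.
\item \emph{Non-exceptional off-diagonal types.} If $0\le i<j\le 10$ and $(i,j)\notin\{(2,9),(2,10),(3,10),(6,10)\}$, Corollary \ref{lem:li_theorem} gives $\nl_3(f)\le 20$ (it follows from Lemma \ref{lem:nlofone} together with the tabulated values of $\nl_3(fn_k)$ and $\ml_2(fn_k)$).
\item \emph{Types $(2,9),(2,10),(3,10)$.} By Lemma \ref{lem:type210} and Lemma \ref{lem:type29_310}, either $\nl_3(f)\le 20$, or $f$ is affine equivalent to a type-$(6,10)$ function; since affine transformations preserve $\nl_3$, the latter case reduces to the next group.
\item \emph{Type $(6,10)$.} Theorem \ref{thm:type610} gives $\nl_3(f)\le 20$.
\end{itemize}
Since every one of the $66$ types falls into one of the four groups, no $f\in\mathcal B_7$ has $\nl_3(f)=21$, hence $\crrm(3,7)\le 20$; combined with the lower bound, $\crrm(3,7)=20$.

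The substance of the argument—and hence the main obstacle—sits entirely in the last two groups, which are exactly the content of the preceding subsections. Reducing types $(2,9),(2,10),(3,10)$ to type $(6,10)$ exploits that $fn_9$ and $fn_{10}$ are $x_1x_2x_3x_4x_5x_6$ plus a \emph{nonzero} degree-$4$ coset representative, which lets one create a degree-$5$ part via the affine substitution $x_7\mapsto x_7+x_k$, together with the computer-assisted exclusion in Lemma \ref{lem:fn29_310_exceptions}. Type $(6,10)$ itself is the hardest point: one must enumerate the entire $\mathrm{AGL}(6)$-orbit of $fn_{10}+\rmc(3,6)$ (using the explicit two-element generating set for $\mathrm{AGL}(6)$ to make the breadth-first search feasible), discard the inconsequential translation parts $b_i$, and then verify the non-covering condition \eqref{equ:type610_noncovering} against $\mathcal F_{fn_{10}}^{(3)}(15)$ for every orbit representative and every homogeneous cubic shift $g$—a search of order $10^{11}$ basic $6$-variable operations, kept within reach of a personal computer only because Lemma \ref{lem:covering_condition} allows us to violate the covering condition already at the single level $h=6$. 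Everything in the first two groups is immediate from the cited lemmas and the appendix table.
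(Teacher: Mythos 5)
Your proposal is correct and follows essentially the same route as the paper: reduce the upper bound to forbidding $\nl_3(f)=21$ via the one-point-flip argument, dispose of diagonal types by the parity lemma, of the $62$ non-exceptional types by Corollary \ref{lem:li_theorem}, reduce types $(2,9),(2,10),(3,10)$ to type $(6,10)$, and finish with the computer-assisted Theorem \ref{thm:type610}. Your explicit parity argument for the diagonal case (both halves inherit $\nl_3(fn_i)$, hence equal parity, contradicting Lemma \ref{lem:nl3oddeven}) is a slightly more detailed rendering of what the paper states tersely, but it is the same proof.
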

\begin{proof}
Let us prove $\crrm(3, 7) \le 20$ first. Because changing the value of one point results in changing the third-order nonlinearity by at most 1, it suffices to prove 7-variable Boolean function cannot have third-order nonlinearity 21. Let $f$ be a 7-variable Boolean function of type $(i, j)$. By Lemma \ref{lem:nl3oddeven}, we can assume $i \not= j$.

If $(i, j) \not\in \{(2,9), (2,10), (3, 10), (6, 10)\}$, by Corollary \ref{lem:li_theorem}, we have $\nl_3(f) \le 20$. If $(i, j) = (6, 10)$, by Theorem \ref{thm:type610}, we have $\nl_3(f) \le 20$. If $(i, j) \in \{(2,9), (2,10), (3, 10)\}$, by Lemma \ref{lem:type210}, Lemma \ref{lem:type29_310}, and Theorem \ref{thm:type610}, we have $\nl_3(f) \le 20$. 
So we conclude $\crrm(3, 7) \le 20$.

On the other hand, Hou proves that there exist 7-variable Boolean functions $f$ with $\nl_3(f) = 20$ for the first time \cite{Hou93}. The authors in \cite{WTP18} characterize all functions in $\rmc(4, 7)$ with third-order nonlinearity 20. For example,
\[
x_1x_2x_3x_4+x_1x_4x_6x_7+x_2x_3x_6x_7+x_3x_4x_5x_7
\]
has third-order nonlinearity 20.
\end{proof}

\section{Other upper bounds on $\crrm(r, n)$}
 
 Using inequality  $\crrm(r, n) \le \crrm(r-1, n-1) + \crrm(r, n-1)$, we can prove the following upper bounds on $\crrm(3, n)$ and $\crrm(4, n)$ for $n = 8, 9, 10$.
 
 \begin{corollary} $\crrm(3, 8) \le 60$, $\crrm(3, 9) \le 156$, 
$\crrm(3, 10) \le 372$, $\crrm(4, 8) \le 28$, $\crrm(4, 9) \le 88$, 
$\crrm(4, 10) \le 244$.
 \end{corollary}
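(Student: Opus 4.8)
The plan is to cascade the recursion $\crrm(r, n) \le \crrm(r-1, n-1) + \crrm(r, n-1)$, which is classical (and also a consequence of Corollary \ref{cor:rhorn_atmost}, since $\nl_r(g) \le \crrm(r, n-1)$ and $\nl_{r-1}(g) \le \crrm(r-1, n-1)$ for every $g \in \mathcal{B}_{n-1}$), starting from a small table of known exact values and the main theorem of this paper. First I would record the base data: $\crrm(3, 7) = 20$ by Theorem 1; $\crrm(2, 7) = 40$ by Wang; $\crrm(1, 7) = 56$ by Mykkeltveit; $\crrm(4, 7) = \crrm(7-3, 7) = 8$ by McLoughlin (odd case); and $\crrm(1, 8) = 2^{7} - 2^{3} = 120$ since $8$ is even and first-order covering radius equals the bent bound.

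Next I would apply the recursion in a carefully chosen order so that each step only uses previously established (in)equalities. For the degree-$3$ bounds: $\crrm(3, 8) \le \crrm(2, 7) + \crrm(3, 7) = 40 + 20 = 60$; then $\crrm(2, 8) \le \crrm(1, 7) + \crrm(2, 7) = 56 + 40 = 96$ and $\crrm(3, 9) \le \crrm(2, 8) + \crrm(3, 8) \le 96 + 60 = 156$; then $\crrm(2, 9) \le \crrm(1, 8) + \crrm(2, 8) \le 120 + 96 = 216$ and $\crrm(3, 10) \le \crrm(2, 9) + \crrm(3, 9) \le 216 + 156 = 372$. For the degree-$4$ bounds: $\crrm(4, 8) \le \crrm(3, 7) + \crrm(4, 7) = 20 + 8 = 28$; then $\crrm(4, 9) \le \crrm(3, 8) + \crrm(4, 8) \le 60 + 28 = 88$; and finally $\crrm(4, 10) \le \crrm(3, 9) + \crrm(4, 9) \le 156 + 88 = 244$.

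There is no real mathematical obstacle here: the argument is a finite bookkeeping exercise. The only points requiring care are (i) choosing the application order so that the auxiliary bounds $\crrm(2,8) \le 96$ and $\crrm(2,9) \le 216$ are derived before they are used, and (ii) invoking the correct base values — in particular noting that the new input driving all six bounds is precisely $\crrm(3,7) = 20$, which upgrades the previously known $\crrm(3,7) \le 23$. I would present the computation as a short displayed chain of inequalities rather than an induction, since the depth is bounded and the intermediate quantities are few.
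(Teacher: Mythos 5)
Your proposal is correct and follows essentially the same route as the paper: cascading $\crrm(r,n) \le \crrm(r-1,n-1) + \crrm(r,n-1)$ from the base values $\crrm(3,7)=20$, $\crrm(2,7)=40$, and $\crrm(4,7)=8$, yielding the identical chain of numbers. The only (harmless) difference is that you re-derive the auxiliary bounds $\crrm(2,8)\le 96$ and $\crrm(2,9)\le 216$ from $\crrm(1,7)=56$ and the bent bound $\crrm(1,8)=120$, whereas the paper simply cites these two bounds from Wang's work.
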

 \begin{proof}
 $
 \crrm(3, 8) \le \crrm(2,7) + \crrm(3, 7) = 40 + 20 = 60,
 $
 where $\crrm(2,7) = 40$ is proved in \cite{Wang19}.
 
  $
 \crrm(3, 9) \le \crrm(2,8) + \crrm(3, 8) = 96 + 60 = 156,
 $
 where $\crrm(2,8) \le 96$ is proved in \cite{Wang19}.
 
   $
 \crrm(3, 10) \le \crrm(2,9) + \crrm(3, 9) = 216 + 156 = 372,
 $
 where $\crrm(2,9) \le 216$ is proved in \cite{Wang19}.
 
 $
 \crrm(4, 8) \le \crrm(3,7) + \crrm(4, 7) = 20 + 8 = 28,
 $
 where $\crrm(4,7) = 8$ is proved in \cite{Mcl79}.
 
 $
 \crrm(4, 9) \le \crrm(3,8) + \crrm(4, 8) = 60 + 28 = 88.
 $
 
 $
 \crrm(4, 10) \le \crrm(3,9) + \crrm(4, 9) = 156 + 88 = 244.
 $
 \end{proof}
 
\section{Generators of the affine transformation group}
\label{appendix_generators}

Waterhouse \cite{Wat89} proves that the general linear group over finite fields can be generated by two elements, which are explicitly described. Based on Waterhouse's results, we will prove the affine transformation group over any finite field can be generated by two elements. Let $\alpha$ denote a generator of the multiplicative group of $\mathbb{F}_q$.

\begin{theorem} \cite{Wat89}
\label{thm:two_gen}
	For $n \ge 3$, the group $\mathrm{GL}(n) = \mathrm{GL}(n, \mathbb{F}_q)$ is generated by two elements $A=I + E_{n1}+(\alpha-1)E_{22}$ and $B = E_{12} + E_{23} + \ldots + E_{n1}$, where $E_{ij}$ denotes the matrix with entry 1 in row $i$ and column $j$, and $0$ elsewhere.
\end{theorem}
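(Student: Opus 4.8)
The plan is to reprove the classical generation theorem for $\mathrm{GL}(n,\mathbb{F}_q)$, by showing that the two elements $A$ and $B$ already contain (in the subgroup they generate) all elementary transvections $I + cE_{ij}$, $i\ne j$, $c\in\mathbb{F}_q$, together with one diagonal matrix of determinant $\alpha$; since these generate $\mathrm{SL}(n,\mathbb{F}_q)$ and a full set of determinants, they generate $\mathrm{GL}(n,\mathbb{F}_q)$. Two structural observations drive everything. First, $B = E_{12}+E_{23}+\cdots+E_{n1}$ is the permutation matrix of an $n$-cycle on the coordinates, so $B$ has order $n$ and conjugation by $B$ sends $E_{ij}\mapsto E_{i-1,j-1}$ (indices read modulo $n$); thus $\langle B\rangle$ is transitive on the cyclically-consecutive positions $(i,i+1)$ and on the diagonal positions $(i,i)$. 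Second, $A = TD$ where $T = I+E_{n1}$ is a transvection and $D = \mathrm{diag}(1,\alpha,1,\ldots,1)$; because $T$ and $D$ act nontrivially only on the disjoint coordinate subsets $\{1,n\}$ and $\{2\}$ they commute, $T$ has order $p=\operatorname{char}(\mathbb{F}_q)$, $D$ has order $q-1$, and $\gcd(p,q-1)=1$, so $\langle A\rangle = \langle T\rangle\times\langle D\rangle$ and both $T$ and $D$ individually lie in $\langle A,B\rangle$.

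Next I would manufacture all transvections. Conjugating $T=I+E_{n1}$ by powers of $B$ yields $I+E_{i,i+1}$ for every $i$ (indices mod $n$). Conjugating $D$ by powers of $B$ yields the dilation $\mathrm{diag}(\alpha,1,\ldots,1)=:D_1$ (and its cyclic shifts); conjugation of $E_{12}$ by $D_1^{k}$ scales it by $\alpha^{k}$, and since $\alpha$ generates $\mathbb{F}_q^{\times}$ this produces $I+cE_{12}$ for every $c\in\mathbb{F}_q$, hence $I+cE_{i,i+1}$ for every $i$ and every $c$ after conjugating again by $B$. Finally, the Chevalley–Steinberg commutator identity $[\,I+aE_{ij},\,I+bE_{jk}\,] = I + abE_{ik}$ for pairwise distinct $i,j,k$ (this is the place where the hypothesis $n\ge 3$ is essential, to supply a middle index) lets one reach $I+cE_{ij}$ for all $i\ne j$ and all $c\in\mathbb{F}_q$. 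These transvections generate $\mathrm{SL}(n,\mathbb{F}_q)$, so $\mathrm{SL}(n,\mathbb{F}_q)\subseteq\langle A,B\rangle$.

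To finish, note $D_1=\mathrm{diag}(\alpha,1,\ldots,1)\in\langle A,B\rangle$ has determinant $\alpha$, which generates $\mathbb{F}_q^{\times}\cong\mathrm{GL}(n,\mathbb{F}_q)/\mathrm{SL}(n,\mathbb{F}_q)$ under the determinant map. Thus $\langle A,B\rangle$ surjects onto this quotient and contains the kernel $\mathrm{SL}(n,\mathbb{F}_q)$, so $\langle A,B\rangle=\mathrm{GL}(n,\mathbb{F}_q)$, as claimed.

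The main obstacle I anticipate is purely organizational: verifying carefully that the \emph{cheaply extractable} pieces — the consecutive transvections $I+cE_{i,i+1}$ and the single-entry dilations $\mathrm{diag}(\alpha,1,\ldots,1)$ with their $B$-conjugates — genuinely close up, under commutators and conjugation, to \emph{all} transvections, and hence to $\mathrm{SL}(n,\mathbb{F}_q)$. One also has to treat the degenerate situations: when $q=2$ one has $\alpha=1$, so $A$ collapses to the bare transvection $I+E_{n1}$ and $D$ disappears, and one must check that $\langle I+E_{n1},B\rangle$ still exhausts $\mathrm{GL}(n,\mathbb{F}_2)=\mathrm{SL}(n,\mathbb{F}_2)$; and the smallest case $n=3$ should be spot-checked so that the commutator step really does produce the non-consecutive positions. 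Everything beyond that bookkeeping is routine linear algebra over $\mathbb{F}_q$, and indeed this is essentially Waterhouse's argument.
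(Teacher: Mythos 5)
This statement is quoted from Waterhouse \cite{Wat89}; the paper gives no proof of it, so there is nothing internal to compare against. Your blind proof is a correct, self-contained derivation along the classical lines. The key steps all check out: for $n\ge 3$ the factors $T=I+E_{n1}$ and $D=I+(\alpha-1)E_{22}$ of $A$ really do commute (the products $E_{n1}E_{22}$ and $E_{22}E_{n1}$ vanish precisely because $n\ge 3$ keeps the index sets $\{1,n\}$ and $\{2\}$ disjoint), and since $T$ has order $p=\operatorname{char}(\mathbb{F}_q)$ while $D$ has order $q-1$, coprimality puts both individually inside $\langle A\rangle$. Conjugation by the $n$-cycle $B$ indeed sends $E_{ij}\mapsto E_{i-1,j-1}$, so you obtain all cyclically consecutive transvections and all one-entry dilations; scaling by powers of $\mathrm{diag}(\alpha,1,\dots,1)$ gives arbitrary coefficients; and the identity $[I+aE_{ij},\,I+bE_{jk}]=I+abE_{ik}$, chained along the cycle (never wrapping fully around, so the three indices at each step stay distinct), yields every transvection $I+cE_{ij}$. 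Transvections generate $\mathrm{SL}(n,\mathbb{F}_q)$, and $\det D_1=\alpha$ generates $\mathbb{F}_q^{\times}$, so the determinant argument closes the proof. You also correctly flag the two places where $n\ge 3$ is genuinely used and the degenerate case $q=2$ (where $A$ collapses to a single transvection but $\mathrm{GL}=\mathrm{SL}$, so nothing is lost). I find no gap; this is essentially the standard argument for Waterhouse's theorem, and it is more informative than the paper, which simply cites the result.
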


\begin{theorem} \cite{Wat89}
	\label{thm1:two_gen}
	For $n = 2$ and $q\ge 3$, the group $\mathrm{GL}(n) = \mathrm{GL}(2, \mathbb{F}_q)$ is generated by two elements  $A= \begin{pmatrix} 0 & -\alpha \\ 1 & \beta+\beta^{q} \end{pmatrix}$ and $B= \begin{pmatrix} \alpha & 0 \\ 0 & 1 \end{pmatrix}$, where $\beta$ denotes a generator of the multiplicative group over a quadratic extension of $\mathbb{F}_q$.
\end{theorem}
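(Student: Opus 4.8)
Since Theorem~\ref{thm1:two_gen} is a restatement of a result of Waterhouse \cite{Wat89}, in the paper one may simply cite it; the plan below is what I would follow for a self-contained argument. The idea is to recognise $A$ as a generator of a non-split (Singer) maximal torus of $\mathrm{GL}(2,\mathbb{F}_q)$, and then use the subgroup structure of $\mathrm{GL}(2,\mathbb{F}_q)$ to squeeze $\langle A, B\rangle$ up to the whole group.

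First I would collect the basic data. Here one takes $\beta$ so that its norm $\beta^{q+1}=N_{\mathbb{F}_{q^2}/\mathbb{F}_q}(\beta)$ equals the chosen generator $\alpha$ of $\mathbb{F}_q^{\times}$ (legitimate, since the norm of a generator of $\mathbb{F}_{q^2}^{\times}$ is a generator of $\mathbb{F}_q^{\times}$, and every such generator arises this way); then $\det A=\alpha\ne 0$ and $\det B=\alpha\ne 0$, so $A,B\in\mathrm{GL}(2,\mathbb{F}_q)$, and already $\det$ maps $\langle A,B\rangle$ onto $\mathbb{F}_q^{\times}$. The characteristic polynomial of $A$ is $x^2-(\beta+\beta^q)x+\beta^{q+1}=(x-\beta)(x-\beta^q)$, which is irreducible over $\mathbb{F}_q$ because $\beta\notin\mathbb{F}_q$ (its order $q^2-1$ exceeds $q-1$). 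Hence $A$ acts irreducibly on $\mathbb{F}_q^2\cong\mathbb{F}_{q^2}$, is $\mathbb{F}_q$-conjugate to multiplication by $\beta$, and $T:=\langle A\rangle$ is cyclic of order $q^2-1$; i.e. $T$ is a non-split maximal torus.

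Next I would invoke Dickson's classification of the subgroups of $\mathrm{PGL}(2,q)$ (equivalently, the subgroup structure of $\mathrm{GL}(2,\mathbb{F}_q)$): a subgroup of $\mathrm{GL}(2,\mathbb{F}_q)$ containing a non-split torus $T$ of order $q^2-1$ is necessarily $T$ itself, its normaliser $N:=N_{\mathrm{GL}(2,\mathbb{F}_q)}(T)$ (of order $2(q^2-1)$), or $\mathrm{GL}(2,\mathbb{F}_q)$ — the Borel and split-torus subgroups act reducibly and cannot contain $T$, proper subfield subgroups are too small, and the exceptional subgroups $A_4,S_4,A_5$ of $\mathrm{PGL}(2,q)$ contain no irreducibly acting cyclic subgroup of order $q+1$ for $q\ge 3$ (in the small cases where order $q+1$ is attainable, the exceptional group is already all of $\mathrm{PGL}(2,q)$). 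So it remains to rule out $H:=\langle A,B\rangle=T$ and $H=N$, which I would do by two explicit $2\times 2$ computations. Writing a general element of $T$ in the standard basis as the matrix of multiplication by $a+b\beta$, namely $\begin{pmatrix} a & -b\alpha\\ b & a+b(\beta+\beta^q)\end{pmatrix}$ (from $\beta^2=(\beta+\beta^q)\beta-\alpha$): comparing with $B=\mathrm{diag}(\alpha,1)$ forces $b=0$, $a=\alpha$, and $a=1$, hence $\alpha=1$, impossible for $q\ge 3$, so $H\ne T$. And $BAB^{-1}=\begin{pmatrix} 0 & -\alpha^2\\ \alpha^{-1} & \beta+\beta^q\end{pmatrix}$; matching the $(1,1)$- and $(2,1)$-entries gives $a=0$, $b=\alpha^{-1}$, whereupon the $(2,2)$-entry reads $\alpha^{-1}(\beta+\beta^q)=\beta+\beta^q$, forcing $\alpha=1$ since $\beta+\beta^q\ne 0$ (its vanishing would force the order $q^2-1$ of $\beta$ to divide $2(q-1)$), again a contradiction. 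Thus $B$ does not normalise $T$, $H\ne N$, and therefore $H=\mathrm{GL}(2,\mathbb{F}_q)$.

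The one genuinely delicate ingredient is the appeal to Dickson's theorem to pin down the overgroups of a Singer cycle, and this is the step I expect to be the main obstacle to a short write-up; Waterhouse instead handles $n=2$ by a more hands-on computational route that constructs suitable words in $A$ and $B$ and bypasses the classification. A lighter self-contained alternative would be to exhibit a transvection as an explicit short word in $A$ and $B$, so that $\mathrm{SL}(2,\mathbb{F}_q)\le H$, and then conclude $H=\mathrm{GL}(2,\mathbb{F}_q)$ from $\det B=\alpha$ generating $\mathbb{F}_q^{\times}$; locating that transvection is then the crux of that approach.
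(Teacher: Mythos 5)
The paper does not actually prove this statement --- Theorem~\ref{thm1:two_gen} is quoted verbatim from Waterhouse \cite{Wat89} and used as a black box (in Step~1 of the proof that $\mathrm{AGL}(2,\mathbb{F}_q)$ is $2$-generated), so there is no in-paper argument to compare yours against. Judged on its own, your proof is sound and essentially complete. You correctly supply the normalisation $\beta^{q+1}=\alpha$ that the paper's statement leaves implicit but that is genuinely needed: without it the characteristic polynomial $x^2-(\beta+\beta^q)x+\alpha$ of $A$ need not be irreducible and $A$ need not generate a Singer cycle, so this is the right reading of Waterhouse's setup rather than an unwarranted extra hypothesis. The two exclusion computations check out: $B=\diag(\alpha,1)\in\mathbb{F}_q[A]$ forces $\alpha=1$, and $BAB^{-1}\in\mathbb{F}_q[A]$ forces $\alpha^{-1}(\beta+\beta^q)=\beta+\beta^q$ with $\beta+\beta^q\ne 0$, again forcing $\alpha=1$; both are impossible for $q\ge 3$. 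Your handling of the overgroup classification is also correct, including the edge cases $q=3,4$ where $\mathrm{PGL}(2,q)$ itself is $S_4$ or $A_5$. The one structural difference from Waterhouse is the reliance on Dickson's classification of subgroups of $\mathrm{PGL}(2,q)$ to see that the only overgroups of a full non-split torus are the torus, its normaliser, and the whole group --- a heavy but standard hammer, which you flag honestly; Waterhouse's own route is word-based and avoids the classification. If this were to be written into the paper, citing \cite{Wat89} (as the authors do) or Dickson explicitly would be preferable to reproving the classification, but as a blind reconstruction your argument has no gaps.
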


As for $q=2$, the group $\mathrm{GL}(n) = \mathrm{GL}(2, \mathbb{F}_2)$ can be generated by two elements $\begin{pmatrix} 0 & 1 \\ 1 & 1 \end{pmatrix}$ and $\begin{pmatrix} 1 & 1 \\ 0 & 1 \end{pmatrix}$ \cite{Wat89}.

Recall that the affine transformation group $\mathrm{AGL}(n,\mathbb{F}_q)$ consists of all elements of the form
$
(A, b) \in \mathrm{GL}(n,\mathbb{F}_q) \times \mathbb{F}_q^n,
$
which corresponds to the affine transformation $x \mapsto Ax + b$. The group operation is defined by $(A_1, b_1) \circ (A_2, b_2) = (A_1 A_2, A_1 b_2 + b_1)$.

\begin{theorem}
For $n \ge 3$, the affine transformation group $\mathrm{AGL}(n) = \mathrm{AGL}(n, \mathbb{F}_q)$ for $q\geq 2$ can be generated by two elements $A=(I + E_{n1}+ (\alpha-1)E_{22},e_1 )$ and $B=( E_{12} + E_{23} + \ldots + E_{n1}, 0)$, where $e_i \in \mathbb{F}_q^n$ denotes the vector with entry 1 in row $i$, and $0$ elsewhere.
\end{theorem}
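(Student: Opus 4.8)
The plan is to leverage Theorem~\ref{thm:two_gen}. Write $H := \langle A, B \rangle$ and let $\pi \colon \mathrm{AGL}(n) \to \mathrm{GL}(n)$ be the homomorphism sending an affine map to its linear part. Since $\pi(A) = I + E_{n1} + (\alpha - 1)E_{22}$ and $\pi(B) = E_{12} + E_{23} + \cdots + E_{n1}$ are precisely Waterhouse's two generators of $\mathrm{GL}(n)$, we get $\pi(H) = \mathrm{GL}(n)$ for free. As $\mathrm{AGL}(n)$ is generated by $\mathrm{GL}(n)$ together with the translation subgroup $T = \{x \mapsto x + b : b \in \mathbb{F}_q^n\}$, it then suffices to prove $T \subseteq H$; and once $T \subseteq H$, writing an arbitrary $g \in \mathrm{AGL}(n)$ as a translation times an element of $H$ with the same linear part shows $g \in H$.

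The first real step is to reduce ``$T \subseteq H$'' to ``$H$ contains one nonzero translation''. The subgroup $N := H \cap T$ is normal in $H$; conjugating the translation by $b$ by an element of $H$ with linear part $M$ gives the translation by $Mb$, so, since $\pi(H) = \mathrm{GL}(n)$, the set $\{b : (x \mapsto x+b) \in H\}$ is an additive subgroup of $\mathbb{F}_q^n$ stable under all of $\mathrm{GL}(n)$. Because $\mathrm{GL}(n)$ contains the scalar matrices and is transitive on $\mathbb{F}_q^n \setminus \{0\}$, such a subgroup is either $\{0\}$ or $\mathbb{F}_q^n$; hence exhibiting a single nonzero translation in $H$ forces $N = T$, i.e.\ $T \subseteq H$, and we are done.

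To exhibit that translation I would compute with $A$. One checks $\pi(A)$ sends $e_1 \mapsto e_1 + e_n$, $e_2 \mapsto \alpha e_2$, and fixes $e_3, \dots, e_n$, and a short induction gives $A^{k} = \big(\pi(A)^{k},\, k\,e_1 + \tbinom{k}{2} e_n\big)$. If $q$ is even, $\pi(A)$ has order $2(q-1)$ while $\tbinom{2(q-1)}{2} = (q-1)(2q-3)$ is odd, so $A^{2(q-1)} = (I, e_n)$ is a nonzero translation and we are done. If $q$ is odd, those binomial coefficients vanish modulo $p$ exactly when $\pi(A)^k = I$, so powers of $A$ alone yield nothing; instead I would argue that $\pi|_H$ cannot be injective. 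If it were, it would be an isomorphism $H \xrightarrow{\,\sim\,} \mathrm{GL}(n)$, making $H$ a complement to $T$ in $\mathrm{AGL}(n)$ described by a $1$-cocycle $\psi \colon \mathrm{GL}(n) \to \mathbb{F}_q^n$ with $\psi(\pi(B)) = 0$ and $\psi(\pi(A)) = e_1$. But for $n \ge 3$ we have $H^1(\mathrm{GL}(n,q), \mathbb{F}_q^n) = 0$, so $\psi(M) = (I - M)v$ for a fixed $v$; then $\psi(\pi(B)) = 0$ forces $v$ to be a scalar multiple of $(1, \dots, 1)$, and computing $\psi(\pi(A)) = (I - \pi(A))v = -v_1\big(e_n + (\alpha - 1)e_2\big)$ shows its first coordinate is $0$, contradicting $\psi(\pi(A)) = e_1$ (here $n \ge 3$ is used so that the rows $1, 2, n$ are distinct). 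Hence $N \ne \{0\}$, and we conclude as above.

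The main obstacle is exactly this last point in odd characteristic: for $q$ even the translation part of $A$ survives after its linear part is killed, but for $q$ odd it does not, so one genuinely needs the structural input that a two-generated subgroup surjecting onto $\mathrm{GL}(n)$ and meeting $T$ trivially would be a ``twisted'' complement to $T$ — excluded by the vanishing of $H^1$ of the natural module for $n \ge 3$. All the remaining pieces are routine manipulations inside the semidirect product $\mathrm{AGL}(n) = T \cdot \mathrm{GL}(n)$. (One could instead avoid cohomology and verify directly that $A$ lies in no point stabilizer and in no non-standard complement, but the cohomological statement handles all odd $q$ uniformly.)
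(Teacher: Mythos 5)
Your proposal is correct in outline and takes a genuinely different route from the paper's proof. Both arguments lean on Waterhouse's theorem for the linear parts, but the paper is entirely constructive: for $q\ge 3$ it first extracts the diagonal element $D=A^q$ from the commuting factorization $A=C\circ D$, then manufactures an explicit nonzero translation as $F^{q-1}=(I,\tfrac{\gamma}{\gamma-1}e_n)$, and assembles every $(I,b)$ by conjugating with $B$ and rescaling, before finally applying Waterhouse to $A'=(I,-e_1)\circ A$ and $B$. You instead make the clean observation that $\{b:(I,b)\in H\}$ is a $\mathrm{GL}(n)$-invariant subgroup of $\mathbb{F}_q^n$ (because $\pi(H)=\mathrm{GL}(n)$ and $\mathrm{GL}(n)$ is transitive on nonzero vectors), hence is $\{0\}$ or everything, so a \emph{single} nonzero translation suffices. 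You then exhibit one for even $q$ via $A^{2(q-1)}=(I,e_n)$ (which at $q=2$ reproduces the paper's $C=A^2$), and for odd $q$ you rule out $H\cap T=\{0\}$ by showing that the cocycle of the would-be complement cannot simultaneously vanish on $\pi(B)$ and equal $e_1$ on $\pi(A)$; all of those computations check out. Your version is shorter and more conceptual; what it costs is the external input $H^1(\mathrm{GL}(n,q),\mathbb{F}_q^n)=0$, which the paper never needs.

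That input deserves care. As you actually use it (odd $q$ only) it is true, and you should include its one-line proof rather than cite it: the scalar matrix $\alpha I$ is central and acts on the module as multiplication by $\alpha\neq 1$, so $\alpha-1$ annihilates $H^{*}(\mathrm{GL}(n,q),\mathbb{F}_q^n)$, which is therefore zero for every $q\ge 3$ (no restriction on $n$ is needed). But the blanket statement ``for $n\ge 3$, $H^1(\mathrm{GL}(n,q),\mathbb{F}_q^n)=0$'' is not safe as written: at $(n,q)=(3,2)$ this cohomology group is known to be nonzero ($\mathrm{AGL}(3,2)$ has two conjugacy classes of complements to its translation subgroup), so the cohomological branch would genuinely fail there. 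Your argument survives only because every even $q$ is routed through the explicit computation $A^{2(q-1)}=(I,e_n)$; make that case division explicit and supply the central-element justification for odd $q$, and the proof is complete.
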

\begin{proof}
\textbf{Case 1. $q=2$}. Let $C=A^2=(I,e_n)$. Observe that $B^{-1} = (E_{1n} + E_{21} + E_{32} + \ldots + E_{n(n-1)}, 0)$, and $B^{-1} \circ (I, b) \circ B = (I, b')$, where $b' = (b_n, b_1, b_2, \ldots, b_{n-1})^T$. Thus
\[
(B^{-1})^j \circ (I, e_n) \circ B^j = (I, e_{j})
\]
for $j \in \{1, \ldots, n\}$. So we can generate $(I, e_j)$ for any $j$.
    
For any distinct $i_1, \ldots, i_k \in \{1, 2, \ldots, n\}$, we have
$
(I, e_{i_1}) \circ (I, e_{i_2}) \circ \ldots \circ (I,e_{i_k}) = (I,e_{i_1}+e_{i_2}+\ldots +e_{i_k}).
$
In this way, we can generate affine transformation $(I, b)$ for any $b \in \mathbb{F}_2^n$.

Finally, let us prove we can generate $(M, b)$ for any $(M, b) \in \agl(n, \mathbb{F}_2)$. Let $A' = A \circ (I, e_1) = (I + E_{n1},0)$. By Theorem \ref{thm:two_gen}, $A'$ and $B$ can generate $(M, 0)$ for any $M \in \mathrm{GL}(n,\mathbb{F}_2)$. In the last paragraph, we have shown that $(I, b)$ can be generated, so is $(I, b) \circ (M, 0) = (M, b)$.

\textbf{Case 2. $q\geq 3$.} We proceed in the following 5 steps. 

\textbf{Step 1:}
Let $C=(I+E_{n1},e_1)$ and $D=(I+(\alpha-1)E_{22},0)=(\diag(1,\alpha,1,...,1),0)$. Observe that 
    \[
    C \circ D = D \circ C = A,
    \]
    which implies that $C^q \circ D^q = A^q$. On the other hand, using induction we can prove 
    \[
    C^i = (I + iE_{n1}, ie_1 + (1 + 2 + \ldots + (i-1)) e_n)
    \]
    for any $i$. When $q=2^r$ for a positive integer $r\ge 2$, we know $C^4 = (I, 0)$, which implies $C^{q}={(C^4)}^{2^{r-2}}= C^4 = (I, 0)$; when the characteristic of $\mathbb{F}_q$, denoted by $p$, is at least $3$, we have $C^p = (I, 0)$, which implies that $C^q = (I, 0)$. So, $C^q = (I, 0)$ always holds for any $q \ge 3$. Combining with the fact that $C^q \circ D^q = A^q$, we have $D^q = A^q$. Note that $D$ is a diagonal matrix, so $D^q = D$, and thus $A^q = D$. So, we can generate $D$. Since $C = A \circ D^{-1}$,  we can generate $C$ as well.

    %Let $C=(I+E_{n1},e_1)$ and $D=(I+(\alpha-1)E_{22},0)=(\diag(1,\alpha,1,...,1),0)$. Then $C^p=(I,0)$, where $p$ is the characteristic of $\mathbb{F}_q$. Since $C\circ D=D\circ C=A$, then we have $A^q=C^q \circ D^q=D$. Hence, $C$ and $D$ can be generated by $A$. 

\textbf{Step 2:}
The powers of $D = (\diag(1,\alpha,1,...,1),0)$ generate all elements of the form
\[
(\diag(1,\lambda,1,...,1),0)
\]
for all $\lambda \in \mathbb{F}_q^\times$. Note that
\[
B^{-1} \circ (\diag(a_1, a_2, \ldots, a_n), 0) \circ B = (\diag(a_n, a_1, \ldots, a_{n-1}), 0).
\]
So we can use $B$ to conjugate $D^i$ to move $\lambda$ to \emph{any} diagonal position. As such, we can generate all elements of the form
\[
(I + (\lambda - 1) E_{jj}, 0)
\]
for any $j \in \{1, 2, \ldots, n\}$, and for any $\lambda \in \mathbb{F}_q^\times$.

\textbf{Step 3:} Now let us prove we can generate $(I, \lambda e_n)$ for any $\lambda \in \mathbb{F}_q^\times$. Let $1 \not= \gamma\in \mathbb{F}_q^\times$, let
\begin{eqnarray*}
F & = & (I + (\gamma-1)E_{11}, 0) \circ C \\
& = & (I + (\gamma-1)E_{11}+E_{n1}, \gamma e_1).
\end{eqnarray*}
Using induction on $i$, one can prove
\[
F^i = (I + (\gamma^i-1)E_{11} + (\sum_{j=0}^{i-1} \gamma^j) E_{n1}, (\sum_{j=1}^i \gamma^j) e_1 + (\sum_{j=1}^{i-1} (i-j)\gamma^j)e_n).
\]
Consider $F^{q-1}$, which is
\[
(I + (\gamma^{q-1}-1)E_{11} + (\sum_{j=0}^{q-2} \gamma^j) E_{n1}, (\sum_{j=1}^{q-1} \gamma^j) e_1 + (\sum_{j=1}^{q-2} (q-j-1)\gamma^j)e_n).
\]
Note that $\gamma^{q-1} = 1$, $\sum_{j=0}^{q-2} \gamma^j = \frac{\gamma^{q-1} - 1}{\gamma - 1} = 0$, as long as $1 \not= \gamma \in \mathbb{F}_q^\times$. Similarly, $\sum_{j=1}^{q-1} \gamma^j = \gamma \sum_{j=0}^{q-2} \gamma^j = 0$, and 
$
\sum_{j=1}^{q-2} (q-j-1) \gamma^j = \frac{\gamma}{\gamma-1}.
$
Thus,
\[
F^{q-1} = (I, \frac{\gamma}{\gamma-1} e_n).
\]
When $\gamma \not= 1$, $\frac{\gamma}{\gamma-1}$ can take any nonzero value in $\mathbb{F}_q^\times$ except 1.

In Step 2, we have shown that elements $(\diag(1, \ldots, 1, (\gamma-1)\gamma^{q-2}), 0)$ and $(\diag(1, \ldots, 1, \frac{\gamma}{\gamma-1}), 0)$ can be generated. So we can generate
\begin{eqnarray*}
& & (\diag(1, \ldots, 1, (\gamma-1)\gamma^{q-2}), 0) \circ F^{q-1} \circ (\diag(1, \ldots, 1, \frac{\gamma}{\gamma-1}), 0)\\
& = & (\diag(1, \ldots, 1, (\gamma-1)\gamma^{q-2}), e_n) \circ (\diag(1, \ldots, 1, \frac{\gamma}{\gamma-1}), 0)\\
& = & (I, e_n).
\end{eqnarray*}
Therefore, we have shown that all elements of the form $(I, \lambda e_n)$,
 where $\lambda \in \mathbb{F}_q^\times$, can be generated. Using $B$ to conjugate $(I, \lambda e_n)$, we can move $\lambda$ to any diagonal position.

\textbf{Step 4:} For any distinct $i_1, \ldots, i_k \in \{1, 2, \ldots, n\}$ and for any nonzero $\lambda_1, \ldots, \lambda_k \in \mathbb{F}_q^\times$, we have
\[
(I, \lambda_1 e_{i_1}) \circ \ldots \circ (I, \lambda_k e_{i_k}) = (I,\lambda_1 e_{i_1}+ \ldots + \lambda_k e_{i_k}).
\]
In this way, we can generate elements $(I, b)$ for any $b \in \mathbb{F}_q^n$.

\textbf{Step 5:} Finally, let us prove we can generate $(M, b)$ for any $(M, b) \in \agl(n)$. Let 
\[
A' = (I, -e_1) \circ A = (I + E_{n1} + (\alpha-1)E_{22}, 0).
\]
By Theorem \ref{thm:two_gen}, $A'$ and $B$ can be generate $(M, 0)$ for any $M \in \mathrm{GL}(n,\mathbb{F}_q)$. In the last step, we have shown that $(I, b)$ can be generated, so is $(I, b) \circ (M, 0) = (M, b)$.
\end{proof}

Next, we prove the affine transformation group $\agl(2, \mathbb{F}_q)$ can be generated by two elements. Let $\alpha$ denote a generator of the multiplicative group of $\mathbb{F}_q$, and let $\beta$ denote a generator of the multiplicative group over a quadratic extension of $\mathbb{F}_q$. 

%Let $A= \begin{pmatrix} 0 & -\alpha \\ 1 & \beta+\beta^{q} \end{pmatrix}$ and $B=  \begin{pmatrix} \alpha & 0 \\ 0 & 1 \end{pmatrix}$, where $\beta$ denotes a generator of the multiplicative group over a unique quadratic extension of $\mathbb{F}_q$. 
\begin{theorem}
For $q \ge 3$, the affine transformation group $\mathrm{AGL}(2, \mathbb{F}_q)$ can be generated by two elements $A=\left(\begin{pmatrix} 0 & -\alpha \\ 1 & \beta+\beta^{q} \end{pmatrix},0 \right)$ and $B=\left(\begin{pmatrix} \alpha & 0 \\ 0 & 1 \end{pmatrix},e_2 \right)$.
\end{theorem}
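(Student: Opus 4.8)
The plan is to deduce the result from Waterhouse's two-generator presentation of $\mathrm{GL}(2,\mathbb{F}_q)$ (Theorem~\ref{thm1:two_gen}). Write $A = (P,0)$ and $B = (R, e_2)$, where $P$ and $R$ are the linear parts; these are exactly the two matrices appearing in Theorem~\ref{thm1:two_gen}, so $\langle P, R\rangle = \mathrm{GL}(2,\mathbb{F}_q)$, and in particular $R = \diag(\alpha,1)$. Set $H = \langle A, B\rangle \le \agl(2,\mathbb{F}_q)$. First I would compute the powers of $B$: since $R\,e_2 = e_2$, the group law $(A_1,b_1)\circ(A_2,b_2) = (A_1A_2, A_1b_2 + b_1)$ gives, by induction, $B^k = (R^k, k e_2)$. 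Taking $k = q$ and using $\alpha^q = \alpha$ (hence $R^q = R$) together with $q e_2 = 0$ in $\mathbb{F}_q$, we obtain $B^q = (R, 0)$. Since $A = (P, 0)$ is also pure-linear, the subgroup $\langle A, B^q\rangle$ equals $\{(M, 0) : M \in \langle P, R\rangle\} = \{(M,0) : M \in \mathrm{GL}(2,\mathbb{F}_q)\}$; thus $(M, 0)\in H$ for every $M \in \mathrm{GL}(2,\mathbb{F}_q)$.

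Next I would produce the translations. From $B^q = (R, 0)$ and $B = (R, e_2)$ we get $(R, 0)^{-1}\circ B = (R^{-1}, 0)\circ(R, e_2) = (I, R^{-1}e_2) = (I, e_2)\in H$, using $R^{-1}e_2 = e_2$. Conjugating a translation by a pure-linear element satisfies $(M,0)\circ(I,v)\circ(M,0)^{-1} = (I, Mv)$, so $(I, Me_2)\in H$ for every $M \in \mathrm{GL}(2,\mathbb{F}_q)$. Since $\mathrm{GL}(2,\mathbb{F}_q)$ acts transitively on $\mathbb{F}_q^2\setminus\{0\}$, every translation $(I, v)$ with $v\neq 0$ lies in $H$, and $(I,v)\circ(I,w) = (I, v+w)$ then puts the entire translation subgroup $T$ in $H$. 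Finally, for an arbitrary $(M, b)\in \agl(2,\mathbb{F}_q)$ we have $(M, b) = (I, b)\circ(M, 0)\in H$, so $H = \agl(2,\mathbb{F}_q)$, as claimed.

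There is no real obstacle; the statement is close to a corollary of Theorem~\ref{thm1:two_gen}. The one load-bearing observation is $B^q = (R, 0)$ — i.e. $q e_2 = 0$ and $\alpha^q = \alpha$ — which is the $n=2$ analogue of the identity $C^q = (I,0)$ used in the $n\geq 3$ proof: stripping the translation off a suitable power of $B$ lets $A$ and $B^q$ regenerate the linear group, after which $B$ itself is precisely the extra ingredient needed to create one nontrivial translation, and transitivity of $\mathrm{GL}(2,\mathbb{F}_q)$ on nonzero vectors completes the argument. The remaining steps — the induction for $B^k$, the conjugation identity, and the semidirect-product bookkeeping — are routine.
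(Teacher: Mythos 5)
Your proof is correct and follows essentially the same route as the paper's: compute $B^q=(R,0)$ so that $A$ and $B^q$ generate all pure-linear elements via Theorem~\ref{thm1:two_gen}, extract the translation $(I,e_2)$, conjugate it to get all translations, and compose. Your use of the transitivity of $\mathrm{GL}(2,\mathbb{F}_q)$ on nonzero vectors is a clean shortcut that subsumes the paper's more explicit Steps 2 and 3 (which first produce only the vectors with both coordinates nonzero and then handle $\lambda e_1$, $\lambda e_2$ separately by shear conjugations), but the overall argument is the same.
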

\begin{proof}
\textbf{Step 1.} We shall prove the two elements can generate affine transformation $(M, 0)$ for any $M \in \mathrm{GL}(2,\mathbb{F}_q)$. Using induction on $i$, one can easily prove
\begin{equation}
\label{equ:Bi_induction}
B^i = (I + (\alpha^i - 1) E_{11}, ie_2).    
\end{equation}
Let $C = B^q = (I + (\alpha-1)E_{11}, 0)$. By Theorem \ref{thm1:two_gen}, we know element $(M, 0)$, for any $M \in \mathrm{GL}(2, \mathbb{F}_q)$, can be generated.

\textbf{Step 2.} We will prove $(I, b)$, for any $b \in \mathbb{F}_q^{\times 2}$, can be generated. Let $D = B^{(q-1)(q-1)} = (I, e_2)$ by \eqref{equ:Bi_induction}.

Since all elements of the form $(M, 0)$, $M \in \mathrm{GL}(2, \mathbb{F}_q)$, can be generated, $F$ and $F^{-1}$ can be generated, where 
$$
F = \left(\begin{pmatrix} 1 & \alpha^i \\ 0 & \alpha^j \end{pmatrix},0 \right).
$$
Note that
$$
F \circ D \circ F^{-1} = \left(I, \begin{pmatrix} \alpha^i \\ \alpha^j \end{pmatrix}\right).
$$
So all elements of the form $(I, b)$, for any $b \in \mathbb{F}_q^{\times 2}$, can be generated.

\textbf{Step 3.} We will prove elements $(I, \lambda e_1)$ and $(I, \lambda e_2)$, for any $\lambda \in \mathbb{F}^\times_q$, can be generated. Since $\alpha$ is a generator, we have $\lambda = \alpha^i$ for some integer $i$. One can easily verify that 
\begin{eqnarray*}
& & \left(\begin{pmatrix} 1 &0 \\ -1 & 1 \end{pmatrix},0\right) \circ \left(\begin{pmatrix} 1 & 0 \\ 0 & 1 \end{pmatrix},  \begin{pmatrix} \alpha^i \\ \alpha^i  \end{pmatrix}\right) \circ \left(\begin{pmatrix} 1 &0 \\ 1 & 1 \end{pmatrix},0\right) \\
& = & \left(\begin{pmatrix} 1 & 0 \\ 0 & 1 \end{pmatrix},  \begin{pmatrix} \alpha^i \\ 0 \end{pmatrix}\right)
\end{eqnarray*}
and
\begin{eqnarray*}
& & 
\left(\begin{pmatrix} 1 &-1 \\ 0 & 1 \end{pmatrix},0\right) \circ \left(\begin{pmatrix} 1 & 0 \\ 0 & 1 \end{pmatrix},  \begin{pmatrix} \alpha^i \\ \alpha^i  \end{pmatrix}\right) \circ \left(\begin{pmatrix} 1 &1 \\ 0 & 1 \end{pmatrix},0 \right) \\
& = & \left(\begin{pmatrix} 1 & 0 \\ 0 & 1 \end{pmatrix},  \begin{pmatrix} 0\\ \alpha^i \end{pmatrix} \right).
\end{eqnarray*}
\textbf{Step 4.} From Step 2 and Step 3, we know element $(I, b)$, for any $b \in \mathbb{F}_q^2$, can be generated. In Step 1, we have shown that all elements of the form $(M, 0)$ can be generated, so is $(I, b) \circ (M, 0) = (M, b)$.
\end{proof}

As for $q=2$, $\mathrm{AGL}(2, \mathbb{F}_2)$ can be generated, for instance, by two elements $ \left(\begin{pmatrix} 0 & 1 \\ 1 & 1 \end{pmatrix},\begin{pmatrix} 0  \\ 0  \end{pmatrix} \right) $ and $\left(\begin{pmatrix} 1 & 1 \\ 0 & 1 \end{pmatrix},\begin{pmatrix} 1  \\ 0  \end{pmatrix} \right)$.

\begin{theorem}
	The affine transformation group $\mathrm{AGL}(n) = \mathrm{AGL}(n, \mathbb{F}_q)$ for $n\ge 2$ cannot be generated by one element.
\end{theorem}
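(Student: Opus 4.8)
The plan is to observe that any group generated by a single element is cyclic, hence abelian, and then to exhibit two non-commuting elements of $\mathrm{AGL}(n,\mathbb{F}_q)$ for $n\ge 2$; this shows the group is non-abelian and therefore not cyclic.

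First I would record the elementary fact that if $G=\langle g\rangle$ for a single element $g$, then $G$ is cyclic, and every cyclic group is abelian. Consequently, to prove $\mathrm{AGL}(n)$ cannot be generated by one element, it suffices to prove that $\mathrm{AGL}(n)=\mathrm{AGL}(n,\mathbb{F}_q)$ is non-abelian whenever $n\ge 2$. (The hypothesis $n\ge 2$ is genuinely needed, since $\mathrm{AGL}(1,\mathbb{F}_2)$ has order $2$ and is cyclic.)

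Next I would produce the witnesses. The map $M\mapsto (M,0)$ embeds $\mathrm{GL}(n,\mathbb{F}_q)$ into $\mathrm{AGL}(n,\mathbb{F}_q)$ as a subgroup on which the group operation is just matrix multiplication, so it is enough to find two matrices in $\mathrm{GL}(n,\mathbb{F}_q)$ that do not commute. For $n\ge 2$ take the unipotent matrices $T_1=I+E_{12}$ and $T_2=I+E_{21}$, each of which is invertible over any field (being triangular with $1$'s on the diagonal). Using $E_{12}E_{21}=E_{11}$ and $E_{21}E_{12}=E_{22}$, one computes
\[
T_1T_2=I+E_{12}+E_{21}+E_{11},\qquad T_2T_1=I+E_{12}+E_{21}+E_{22},
\]
which are distinct since $E_{11}\ne E_{22}$ (this remains true over $\mathbb{F}_2$, as $1\ne 0$ there). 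Hence $(T_1,0)\circ(T_2,0)=(T_1T_2,0)\ne(T_2T_1,0)=(T_2,0)\circ(T_1,0)$, so $\mathrm{AGL}(n,\mathbb{F}_q)$ is non-abelian.

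Putting these together, $\mathrm{AGL}(n,\mathbb{F}_q)$ for $n\ge 2$ is non-abelian, hence not cyclic, hence cannot be generated by a single element. There is no real obstacle in this argument; the only point deserving a moment's care is verifying that the chosen transvections are still valid witnesses in the smallest case $n=2$, $q=2$, which they are.
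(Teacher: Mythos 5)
Your proof is correct, and it takes a genuinely different route from the paper's. The paper argues that a finite cyclic group has at most one subgroup of each order dividing the group order, and then exhibits two distinct subgroups of the same order inside $\mathrm{AGL}(n,\mathbb{F}_q)$ (two order-$2$ subgroups generated by $(I+E_{12},0)$ and $(I+E_{21},0)$ when $q=2$, and two order-$(q-1)$ subgroups generated by diagonal elements when $q\ge 3$), which contradicts cyclicity. You instead use the weaker and more elementary fact that cyclic groups are abelian, and exhibit the non-commuting pair $(I+E_{12},0)$ and $(I+E_{21},0)$; your computation $T_1T_2 = I+E_{12}+E_{21}+E_{11} \ne I+E_{12}+E_{21}+E_{22} = T_2T_1$ is valid over every field, including $\mathbb{F}_2$, so a single witness pair covers all $q$ uniformly, whereas the paper splits into cases on $q$. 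Your approach is shorter and avoids invoking the subgroup-counting characterization of cyclic groups; the paper's approach proves the marginally stronger structural statement that $\mathrm{AGL}(n,\mathbb{F}_q)$ fails even the unique-subgroup-per-divisor property. Both arguments are complete; your aside that $\mathrm{AGL}(1,\mathbb{F}_2)$ is cyclic correctly explains why the hypothesis $n\ge 2$ cannot be dropped entirely.
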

\begin{proof}
A group generated by one element is a cyclic group and every subgroup of a cyclic group is also cyclic. For every finite group $G$ of order $n$, if $G$ is cyclic then $G$ has at most one subgroup of order $d$ for every divisor $d$ of $n$. Since $\vert \mathrm{AGL}(n,\mathbb{F}_q)\vert = q^n  \cdot \prod_{i=0}^{n-1} (q^n-q^i)=q^{2n-1}\cdot (q-1) \prod_{i=0}^{n-2} (q^n-q^i)$, then $q-1 $ is a divisor of $\vert \mathrm{AGL}(n,\mathbb{F}_q)\vert$ and so is $q$. If $ \mathrm{AGL}(n,\mathbb{F}_q)$ is a cyclic group, then $\mathrm{AGL}(n,\mathbb{F}_q)$ has at most one subgroup of order $q$ or $q-1$. In the following, we will prove there is more than one subgroup of order $q$ or $q-1$ in $\mathrm{AGL}(n,\mathbb{F}_q)$. 

Let $\alpha$ denote a generator of the multiplicative group $\mathbb{F}_q$. For $q=2$, let $A=(I+E_{12},0)$ and $B=(I+E_{21},0)$. The group $\langle A \rangle=\{(I+E_{12},0),(I,0)\}$ is a cyclic subgroup of order 2; the group $\langle B \rangle=\{(I+E_{21},0),(I,0)\}$ is a cyclic subgroup of order 2. For $q\ge 3$, let $C=(I+ (\alpha-1)E_{11},0)$ and $D=(I+ (\alpha-1)E_{22},0)$. The group $\langle C \rangle=\{(I+(\alpha-1)E_{11},0),(I+(\alpha^2-1)E_{11},0),\ldots,(I+(\alpha^{q-1}-1)E_{11},0)\}$ is a cyclic subgroup of order $q-1$; the group $\langle D\rangle=\{(I+(\alpha-1)E_{22},0),(I+(\alpha^2-1)E_{22},0),\ldots,(I+(\alpha^{q-1}-1)E_{22},0)\}$ is also a cyclic subgroup of order $q-1$. Therefore, $\mathrm{AGL}(n,\mathbb{F}_q)$ for $n\ge 2$ cannot be a cyclic group, i.e., the affine transformation group $\mathrm{AGL}(n,\mathbb{F}_q)$ cannot be generated by one element.
\end{proof}

\appendix

\section{Properties of $fn_i$ in the classification of $\rmc(6,6)/\rmc(3,6)$ }
\label{appendix_properties_fni}

We list some relevant algebraic properties of $fn_i$ in the classification of $\rmc(6,6)/\rmc(3,6)$, which are used in our proof. We obtain these numbers with the assistance of a computer, and all the C++ codes are available online.
\begin{table}[h!]
	\begin{center}
\begin{comment}
		\begin{tabular}{ c|c|c|c|c|c|c } 
			\toprule 
			$f$ & $fn_0$ & $fn_1$ & $fn_2$ & $fn_3$ & $fn_4$ & $fn_5$ \\ 
			\hline
			$\deg(f)$ & 0 & 4 & 4 & 4 & 5 & 5\\
			\hline
			${\nl}_3(f)$ & 0 & 4 & 6 & 8 & 2 & 4  \\ 
			\hline
			$\max_{g : \deg(g) \le 3} {\nl}_2(f + g)$ & 0 & 16 & 16 & 14 & 16 & 14  \\ 
			\bottomrule 
			\toprule 
			$f$ & $fn_6$ & $fn_7$ & $fn_8$ & $fn_9$ & $fn_{10}$ \\ 
			\hline
			$\deg(f)$  & 5 & 6 & 6 & 6 & 6 \\ 
			\hline
			${\nl}_3(f)$ & 6 & 1 & 3 & 5 & 7 \\ 
			\hline
			$\max_{g\in \mathcal{H}_{6}^{(3)}} {\nl}_2(f + g)$  & 14 & 17 & 15 & 15 & 15 \\ 
			\bottomrule
		\end{tabular}
\end{comment}
		\begin{tabular}{ c|c|c|c|c|c|c } 
			\toprule 
			$f$ & $fn_0$ & $fn_1$ & $fn_2$ & $fn_3$ & $fn_4$ & $fn_5$ \\ 
			\hline
			$\deg(f)$ & 0 & 4 & 4 & 4 & 5 & 5\\
			\hline
			${\nl}_2(f)$ & 0 & 4 & 6 & 10 & 2 & 4  \\ 
			\hline
			${\nl}_3(f)$ & 0 & 4 & 6 & 8 & 2 & 4  \\ 
			\hline
			$\ml_2(f)$ & 0 & 16 & 16 & 14 & 16 & 14  \\ 
			\bottomrule 
		\end{tabular}
		\begin{tabular}{ c|c|c|c|c|c } 
			\toprule 
			$f$ & $fn_6$ & $fn_7$ & $fn_8$ & $fn_9$ & $fn_{10}$ \\ 
			\hline
			$\deg(f)$  & 5 & 6 & 6 & 6 & 6 \\ 
			\hline
			${\nl}_2(f)$ & 8 & 1 & 3 & 7 & 9 \\ 
			\hline
			${\nl}_3(f)$ & 6 & 1 & 3 & 5 & 7 \\ 
			\hline
			$\ml_2(f)$  & 14 & 17 & 15 & 15 & 15 \\ 
			\bottomrule
		\end{tabular}
		\caption{Degree and nonlinearities of $fn_i$.}
		\label{table:fn10_nl3}
	\end{center}
\end{table}

\begin{table}[h!]
\begin{center}
\begin{tabular}{ c|c|c|c|c|c|c } 
\hline
$k$ & $6$ & $8$ & $10$ & $12$ & $14$ & $16$ \\ 
\hline
$|\mathcal{F}_{fn_2}^{(3)}(k)|$ & 64 & 1920 & 64320 & 579072 & 397440 & 5760 \\ 
\hline
$|\mathcal{F}_{fn_3}^{(3)}(k)|$ & 0 & 2304 & 71680 & 628992 & 345600 & 0 \\
\hline
$|\mathcal{F}_{fn_6}^{(3)}(k)|$ & 32 & 2112 & 65312 & 638208 & 342912 & 0 \\ 
\hline
\end{tabular}
\caption{Distribution of $|\mathcal{F}_{fn_i}^{(3)}(r)|$, $i \in \{2, 3, 6\}$.}
\label{table:Ff2_distribution}
\end{center}
\end{table}

\begin{table}[h!]
\begin{center}
\begin{tabular}{ c|c|c|c|c|c|c } 
\hline
$k$ & $5$ & $7$ & $9$ & $11$ & $13$ & $15$ \\ 
\hline
$|\mathcal{F}_{fn_9}^{(3)}(k)|$ & 6 & 298 & 12540 & 245556 & 784416 & 5760 \\ 
\hline
$|\mathcal{F}_{fn_{10}}^{(3)}(k)|$ & 0 & 288 & 13216 & 254016 & 746496 & 34560 \\
\hline
\end{tabular}
\caption{Distribution of $|\mathcal{F}_{fn_i}^{(3)}(r)|$, $i \in \{9, 10\}$.}
\label{table:Ff9_distribution}
\end{center}
\end{table}

Table \ref{table:orbit_length} summarizes the $\mathrm{AGL}(6)$-orbit length $fn_i + \rmc(3,6)$ in the quotient space $\rmc(6,6)/\rmc(3, 6)$.
\begin{table}[h!]
\begin{center}
\begin{tabular}{ c|c|c|c|c|c|c } 
\hline
 $f$ & $fn_0$ & $fn_1$ & $fn_2$ & $fn_3$ & $fn_4$ & $fn_5$  \\
\hline
orbit length & 1 & 651 & 18228 & 13888 & 2016 & 312480  \\ 
\hline
\end{tabular}
\begin{tabular}{ c|c|c|c|c|c } 
\hline
 $f$ & $fn_6$ & $fn_7$ & $fn_8$ & $fn_9$ & $fn_{10}$  \\
\hline
orbit length & 1749888 & 64 & 41664 & 1166592 & 888832 \\ 
\hline
\end{tabular}
\caption{Orbit length of $fn_i + \rmc(3, 6)$.}
\label{table:orbit_length}
\end{center}
\end{table}

\section*{Acknowledgements}
	We thank the editor and anonymous reviewers for their valuable comments, for example, for their suggestions to present the results in more general forms.

%\bibliographystyle{plain}
%\bibliography{cr}

\printbibliography

\end{document}